\documentclass[a4paper,UKenglish,thm-restate]{lipics-v2021}
\pdfoutput=1 %
\hideLIPIcs  %

\bibliographystyle{plainurl}%

\title{Specification and Automatic Verification of Computational Reductions} %

\author{Julien Grange}{LACL, Université Paris-Est Créteil, France}{julien.grange@lacl.fr}{https://orcid.org/0009-0005-0470-1781}{}
\author{Fabian Vehlken}{Ruhr University Bochum, Germany}{fabian.vehlken@rub.de}{https://orcid.org/0009-0002-1434-3672}{Supported by the Deutsche Forschungsgemeinschaft (DFG, German Research Foundation), grant 448468041.}
\author{Nils Vortmeier}{Ruhr University Bochum, Germany}{nils.vortmeier@rub.de}{https://orcid.org/0009-0000-2821-7365}{}
\author{Thomas Zeume}{Ruhr University Bochum, Germany}{thomas.zeume@rub.de}{https://orcid.org/0000-0002-5186-7507}{Supported by the Deutsche Forschungsgemeinschaft (DFG, German Research Foundation), grant 448468041.}

\authorrunning{J. Grange,  F. Vehlken, N. Vortmeier, and T. Zeume} %

\Copyright{Julien Grange, Fabian Vehlken, Nils Vortmeier, and Thomas Zeume} %

\begin{CCSXML}
<ccs2012>
<concept>
<concept_id>10003752.10003790</concept_id>
<concept_desc>Theory of computation~Logic</concept_desc>
<concept_significance>500</concept_significance>
</concept>
<concept>
<concept_id>10003752.10003777.10003779</concept_id>
<concept_desc>Theory of computation~Problems, reductions and completeness</concept_desc>
<concept_significance>500</concept_significance>
</concept>
</ccs2012>
\end{CCSXML}

\ccsdesc[500]{Theory of computation~Logic}
\ccsdesc[500]{Theory of computation~Problems, reductions and completeness}

\keywords{Computational reductions, automatic verification, decidability} %

\nolinenumbers %

\EventEditors{Rastislav Kr\'{a}lovi\v{c} and Anton\'{i}n Ku\v{c}era}
\EventNoEds{2}
\EventLongTitle{49th International Symposium on Mathematical Foundations of Computer Science (MFCS 2024)}
\EventShortTitle{MFCS 2024}
\EventAcronym{MFCS}
\EventYear{2024}
\EventDate{August 26--30, 2024}
\EventLocation{Bratislava, Slovakia}
\EventLogo{}
\SeriesVolume{306}
\ArticleNo{48}

\usepackage{graphicx}
\usepackage{xspace}
\usepackage{array}
\usepackage{tikz}
\usetikzlibrary{arrows,decorations.pathmorphing,backgrounds,calc,positioning,fit,petri,matrix,backgrounds, decorations.pathreplacing, shapes.geometric}
\usepackage{tcolorbox}
\tcbuselibrary{skins,breakable,xparse,raster}
\usepackage{adjustbox}

\usepackage{macros_julien}
\usepackage{placeins}
\usepackage{float}

\colorlet{newNodeEdge}{blue!85!black}
\colorlet{oldNodeEdge}{gray}

\tikzstyle{originnode} = [draw, circle, fill=black, inner sep=1.5pt]
\tikzstyle{originedge} = [draw, thick, shorten <=1pt, shorten >=1pt]

\tikzstyle{targetnode-new} = [draw, circle, fill=newNodeEdge, inner sep=1.5pt]
\tikzstyle{targetnode-old} = [draw, circle, fill=oldNodeEdge, inner sep=1.5pt]
\tikzstyle{targetedge-new} = [draw = newNodeEdge, fill=newNodeEdge, thick, shorten <=1pt, shorten >=1pt]
\tikzstyle{targetedge-old} = [draw = oldNodeEdge, fill=oldNodeEdge, thick, shorten <=1pt, shorten >=1pt]

\tikzstyle{globalnode-new} = [draw, circle, fill=newNodeEdge, inner sep=1.5pt]
\tikzstyle{globalnode-old} = [draw, circle, fill=oldNodeEdge, inner sep=1.5pt]
\tikzstyle{globaledge-new} = [draw = newNodeEdge, fill=newNodeEdge, thick, shorten <=1pt, shorten >=1pt]
\tikzstyle{globaledge-old} = [draw = oldNodeEdge, fill=oldNodeEdge, thick, shorten <=1pt, shorten >=1pt]

\colorlet{cbglobalcol}{green!50!black!70}
\tikzstyle{cbglobalnode} = [draw, circle, fill=cbglobalcol, inner sep=1.5pt]
\tikzstyle{cbglobaledge} = [draw = cbglobalcol, fill=cbglobalcol, thick, shorten <=1pt, shorten >=1pt]

\colorlet{cbnodetypecol}{oldNodeEdge}
\tikzstyle{cbnodetypenode} = [draw, circle, fill=cbnodetypecol, inner sep=1.5pt]
\tikzstyle{cbnodetypeedge} = [draw = cbnodetypecol, fill=cbnodetypecol, thick, shorten <=1pt, shorten >=1pt]

\colorlet{cbedgetypecol}{newNodeEdge}
\tikzstyle{cbedgetypenode} = [draw, circle, fill=cbedgetypecol, inner sep=1.5pt]
\tikzstyle{cbedgetypeedge} = [draw = cbedgetypecol, fill=cbedgetypecol, thick, shorten <=1pt, shorten >=1pt]
\tikzstyle{freshnode} = [draw, circle, fill=cbedgetypecol, inner sep=1.5pt]
\tikzstyle{freshedge} = [draw = cbedgetypecol, fill=cbedgetypecol, thick, shorten <=1pt, shorten >=1pt]

\tikzstyle{originnode-small} = [draw, circle, fill=black, inner sep=1pt]
\tikzstyle{originedge-small} = [draw, thick, shorten <=0.5pt, shorten >=0.5pt]

\tikzstyle{targetnode-old-small} = [draw, circle, fill=oldNodeEdge, inner sep=1pt]
\tikzstyle{targetnode-new-small} = [draw, circle, fill=newNodeEdge, inner sep=1pt]
\tikzstyle{targetedge-old-small} = [draw = oldNodeEdge, fill=oldNodeEdge, thick, shorten <=0.5pt, shorten >=0.5pt]
\tikzstyle{targetedge-new-small} = [draw = newNodeEdge, fill=newNodeEdge, thick, shorten <=0.5pt, shorten >=0.5pt]
\tikzstyle{globalnode-new-small} = [draw, circle, fill=newNodeEdge, inner sep=1pt]
\tikzstyle{globalnode-old-small} = [draw, circle, fill=oldNodeEdge, inner sep=1pt]
\tikzstyle{globaledge-new-small} = [draw = newNodeEdge, fill=newNodeEdge, thick, shorten <=0.5pt, shorten >=0.5pt]
\tikzstyle{globaledge-old-small} = [draw = oldNodeEdge, fill=oldNodeEdge, thick, shorten <=0.5pt, shorten >=0.5pt]

\tikzstyle{targetnode-old-big} = [draw, circle, fill=oldNodeEdge, inner sep=3pt]
\tikzstyle{originnode-bigger} = [draw, circle, fill=black, inner sep=2.5pt]

\tikzstyle{vgnode} = [draw, circle, inner sep=1pt]
\tikzstyle{vgnode-big} = [draw, circle, inner sep=3pt]
\tikzstyle{vgedge} = [draw, shorten <=1pt, shorten >=1pt]

\tikzstyle{hcdnode} = [draw, circle, inner sep=1.5pt] %
\tikzstyle{hcdnode-big} = [draw, circle, inner sep=3pt]
\tikzstyle{hcdedge} = [draw, -stealth, shorten <=1pt, shorten >=1pt]
\tikzstyle{hcunode} = [draw, circle, fill=oldNodeEdge, inner sep=1.5pt] %
\tikzstyle{hcunode-big} = [draw, circle, fill=oldNodeEdge, inner sep=3pt] 
\tikzstyle{hcunode-narrow} = [draw, circle, fill=oldNodeEdge, inner sep=0.5pt] %
\tikzstyle{hcuedge} = [draw = oldNodeEdge, fill=oldNodeEdge, thick, shorten <=1pt, shorten >=1pt] %

\tikzstyle{crossedge} = [draw = newNodeEdge, fill=newNodeEdge, shorten <=0.5pt, shorten >=0.5pt] %
\tikzstyle{witnessedge} = [draw = orange!70, line width=3pt, shorten <=0.5pt, shorten >=0.5pt]

\newcommand{\postoneg}{\ensuremath{\oplus \mapsto \ominus}\xspace}
\newcommand{\negtopos}{\ensuremath{\ominus \mapsto \oplus}\xspace}

\newcommand{\qdFO}[1][k]{\ensuremath{\FO_{#1}}}
\newcommand{\qdMSO}[1][k]{\ensuremath{\MSO_{#1}}}
\newcommand{\univ}[1]{\ensuremath{\mathrm{dom}(#1)}}
\newcommand{\univcard}[1]{\ensuremath{|\univ{#1}|}}
\newcommand{\fresh}[1]{\ensuremath{\#_\mathrm{fresh}(#1)}}

\newcommand{\SAT}{\algorithmicProblem{SAT}}
\newcommand{\Clique}{\algorithmicProblem{Clique}}
\newcommand{\IS}{\algorithmicProblem{IndependentSet}}
\newcommand{\VC}{\algorithmicProblem{VertexCover}}
\newcommand{\FVS}{\algorithmicProblem{FeedbackVertexSet}}
\newcommand{\HCd}{\algorithmicProblem{HamCycle$_d$}}
\newcommand{\HCu}{\algorithmicProblem{HamCycle$_u$}}
\newcommand{\ThreeClique}{\algorithmicProblem{$3$-Clique}}
\newcommand{\FourClique}{\algorithmicProblem{$4$-Clique}}

\newcommand{\isotypeGlobal}{\isotype_{\emptyset}}
\newcommand{\isotypeVertex}{\isotype_{\subscriptVertexType}}
\newcommand{\isotypeEdge}{\isotype_{\subscriptEdgeType}}
\newcommand{\isotypeNonEdge}{\isotype_{\subscriptNonEdgeType}}
\newcommand{\isotypeDirEdge}{\isotype_{\subscriptDirEdgeType}}

\newcommand{\AllTypes}{\mathfrak{T}}
\newcommand{\isotype}{\ensuremath{\mathfrak{t}}\xspace}
\newcommand{\subtype}{\ensuremath{\mathfrak{tp}}}
\newcommand{\redS}{\ensuremath{\mathfrak{S}}}
\newcommand{\Aut}{\text{Aut}}

\newcommand{\tpl}{\bar}
\newcommand{\arity}{\ensuremath{\text{Ar}}}

\newcommand{\schema}{\ensuremath{\sigma}\xspace}

\newcommand{\restrict}[2]{#1[#2]}

\newcommand{\dom}{\ensuremath{U}}
\newcommand{\struc}{\ensuremath{\mathcal{S}}}

\newcommand{\N}{\ensuremath{\mathbb{N}}}

\newcommand{\df}{\ensuremath{\mathrel{\smash{\stackrel{\scriptscriptstyle{
    \text{def}}}{=}}}} \;}

\newcommand  {\myclass} [1]  {\ensuremath{\textsf{\upshape #1}}}

\newcommand{\StaClass}[1]{\myclass{#1}\xspace}

\newcommand  {\algorithmicProblem} [1] {\textnormal{\textsc{#1}}\xspace}

\newcommand{\algorithmicProblemDescription}[4][10cm]{
    \def\Name{#2}
    \def\Input{#3}
    \def\Question{#4}
      \setlength{\tabcolsep}{1mm}
      \begin{tabular}{rp{#1}r}%
      \textit{Problem:}&\algorithmicProblem{\Name} \\
     \textit{Input:}&\Input \\
     \textit{Question:}&\Question
     \end{tabular}%
    }

\newcommand     {\LOGSPACE}     {\StaClass{LogSpace}}
\newcommand     {\PTIME}    {\myclass{PTime}}
\newcommand     {\NP}   {\StaClass{NP}}

\newcommand     {\AC}   {\StaClass{AC}}
\newcommand     {\TC}   {\myclass{TC}}

\newcommand{\FO}{\StaClass{FO}}

\newcommand{\MSO}{\StaClass{MSO}}

\newcommand{\QFO}[1][\quant]{\StaClass{\ensuremath{#1}FO}}
\newcommand{\EFO}{\QFO[\exists^*]}

\newcommand{\CQ}[1][]{\StaClass{CQ}}
\newcommand{\UCQ}[1][]{\StaClass{UCQ}}
\newcommand{\CQneg}[1][]{\StaClass{CQ\ensuremath{^{\mneg}}}}
\newcommand{\UCQneg}[1][]{\StaClass{UCQ\ensuremath{^{\mneg}}}}

\theoremstyle{plain}

\theoremstyle{definition}
\newtheorem*{question*}{Question}
\newtheorem*{openquestion*}{Open question}

\newenvironment{proofsketch}{\begin{proof}[Proof sketch.]}{\end{proof}}
\newenvironment{proofidea}{\begin{proof}[Proof idea.]}{\end{proof}}
\newenvironment{proofof}[1]{\begin{proof}[Proof (of #1).]}{\end{proof}}
\newenvironment{proofsketchof}[1]{\begin{proof}[Proof sketch (of #1).]}{\end{proof}}

\providecommand {\calA}      {{\mathcal A}\xspace}

\providecommand {\calC}      {{\mathcal C}\xspace}
\providecommand {\calD}      {{\mathcal D}\xspace}

\providecommand {\calF}      {{\mathcal F}\xspace}

\providecommand {\calI}      {{\mathcal I}\xspace}
\providecommand {\calL}      {{\mathcal L}\xspace}

\providecommand {\calR}      {{\mathcal R}\xspace}
\providecommand {\calS}      {{\mathcal S}\xspace}

\usepackage{tcolorbox}
\usepackage{xifthen}
\definecolor{applegreen}{rgb}{0.55, 0.71, 0.0} %
\definecolor{problemtitle}{HTML}{DDDDDD} %
\definecolor{problemtitletext}{HTML}{000000} %
\definecolor{problembody}{HTML}{EEEEEE} %
\definecolor{problemcomplexity}{HTML}{444444} %

\newcommand{\prb}[2][]{%
    \ifthenelse{\isempty{#1}}%
        {\hyperref[prb:#2]{\textsc{#2}}%
        }
        {\hyperref[prb:#2]{\textsc{#1}}%
        }}

\newenvironment{notopspacecolorbox} {
   \setlength{\topsep}{0pt}
   \setlength{\partopsep}{0pt}
   \vspace*{-1.5em}
   \tcolorbox
}{\endtcolorbox}

\newcommand\auxilTit{}
\newcommand\auxilLab{}
\newcommand{\prob}[4][]{
    \ifthenelse{\isempty{#1}}%
    {%
        \renewcommand\auxilTit{\textsc{\large\textcolor{black}{#2}}}
        \renewcommand\auxilLab{\label{prb:#2}}
    }%
    {%
        \renewcommand\auxilTit{\textsc{\large\textcolor{black}{#2 \hfill #1}}}
        \renewcommand\auxilLab{\label{prb:#2}}
    }%
    \par\medskip
    \begin{tcolorbox}[colback=problembody, colframe=problemtitle, right=1mm, left=1mm,
        title=\auxilTit]\auxilLab
        \begin{tabularx}{\textwidth}{l X}
            \textbf{Input:} & #3\\
            \textbf{Question:} & #4
        \end{tabularx}
    \end{tcolorbox}
    \par\medskip
}

\newcommand{\rprob}[4][]{
    \ifthenelse{\isempty{#1}}%
    {%
        \renewcommand\auxilTit{\textsc{\large\textcolor{black}{#2}}}
    }%
    {%
        \renewcommand\auxilTit{\textsc{\large\textcolor{black}{#2}}}
    }%
    \pms
    \begin{tcolorbox}[colback=problembody, colframe=problemtitle, right=1mm, left=1mm,
        title=\auxilTit]
        \begin{tabularx}{\textwidth}{l X}
            \textbf{Given:} & #3\\
            \textbf{Question:} & #4
        \end{tabularx}
    \end{tcolorbox}
    \pms
}

\newcommand{\manuallabelprob}[5][]{
    \ifthenelse{\isempty{#1}}%
    {%
        \renewcommand\auxilTit{\textsc{\large\textcolor{black}{#2}}}
    }%
    {%
        \renewcommand\auxilTit{\textsc{\large\textcolor{black}{#2}}}
    }%
    \pms
    \begin{tcolorbox}[colback=problembody, colframe=problemtitle, right=1mm, left=1mm,
        title=\auxilTit]\label{prb:#5}
        \begin{tabularx}{\textwidth}{l X}
            \textbf{Given:} & #3\\
            \textbf{Question:} & #4
        \end{tabularx}
    \end{tcolorbox}
    \pms
}

\newcommand{\cprob}[5][]{
    \ifthenelse{\isempty{#1}}%
    {%
        \renewcommand\auxilTit{\textsc{\large\textcolor{black}{#2}}}
        \renewcommand\auxilLab{\label{prb:#2}}
    }%
    {%
        \renewcommand\auxilTit{\textsc{\large\textcolor{black}{#2 \hfill #1}}}
        \renewcommand\auxilLab{\label{prb:#2}}
    }%
    \pms
    \begin{tcolorbox}[colback=problembody, colframe=problemtitle, right=1mm, left=1mm, bottom=0mm,
        title=\auxilTit]\auxilLab
        \begin{tabularx}{\textwidth}{l X}
            \textbf{Given:} & #3\\
            \textbf{Question:} & #4
        \end{tabularx}~
        \raggedleft
        \begin{tcolorbox}[colback=problemcomplexity,colframe=problemcomplexity,height=18pt,
            width=0.4\textwidth, right=0mm, left=0mm, bottom=0mm, top=0mm, box align=center,
            halign=center, valign=center]
            \textcolor{white}{#5}
        \end{tcolorbox}
    \end{tcolorbox}
    \pms
}

\newcommand{\cmanuallabelprob}[6][]{
    \ifthenelse{\isempty{#1}}%
    {%
        \renewcommand\auxilTit{\textsc{\large\textcolor{black}{#2}}}
        \renewcommand\auxilLab{\label{prb:#6}}
    }%
    {%
        \renewcommand\auxilTit{\textsc{\large\textcolor{black}{#2 \hfill #1}}}
        \renewcommand\auxilLab{\label{prb:#6}}
    }%
    \pms
    \begin{tcolorbox}[colback=problembody, colframe=problemtitle, right=1mm, left=1mm, bottom=0mm,
        title=\auxilTit]\auxilLab
        \begin{tabularx}{\textwidth}{l X}
            \textbf{Given:} & #3\\
            \textbf{Question:} & #4
        \end{tabularx}~
        \raggedleft
        \begin{tcolorbox}[colback=problemcomplexity,colframe=problemcomplexity,height=18pt,
            width=0.4\textwidth, right=0mm, left=0mm, bottom=0mm, top=0mm, box align=center,
            halign=center, valign=center]
            \textcolor{white}{#5}
        \end{tcolorbox}
    \end{tcolorbox}
    \pms
}

\newcommand{\Prob}[4][]{
    \ifthenelse{\isempty{#1}}%
    {%
        \renewcommand\auxilTit{\textsc{\large\textcolor{black}{#2}}}
    }%
    {%
        \renewcommand\auxilTit{\textsc{\large\textcolor{black}{#2 (#1)}}}
    }%
    \begin{tcolorbox}[colback=problembody, colframe=problemtitle, right=1mm, left=1mm,
        bottom=-5mm, title=\auxilTit]
        \begin{tabularx}{\textwidth}{l X}
            \textbf{Given:} & #3\\
            \textbf{Find:} & #4
        \end{tabularx}
    \end{tcolorbox}
}

\newcommand{\subscriptVertexType}{%
    \begin{tikzpicture}[]%
        \node[originnode, inner sep=1pt] (v1) at (0,0) {};
    \end{tikzpicture}
}

\newcommand{\inlineEdgeType}{%
    \begin{tikzpicture}[scale=0.55, inner sep=0pt]%
        \node[originnode] (v1) at (0,0) {};
        \node[originnode] (v2) at (0.8,0) {};
        \draw[originedge] (v1) edge (v2);
    \end{tikzpicture}
}

\newcommand{\subscriptEdgeType}{%
    \begin{tikzpicture}[]%
        \node[originnode, inner sep=1pt] (v1) at (0,0) {};
        \node[originnode, inner sep=1pt] (v2) at (0.3,0) {};
        \draw[originedge] (v1) edge (v2);
    \end{tikzpicture}
}

\newcommand{\subscriptNonEdgeType}{%
    \begin{tikzpicture}[]%
        \node[originnode, inner sep=1pt] (v1) at (0,0) {};
        \node[originnode, inner sep=1pt] (v2) at (0.2,0) {};
    \end{tikzpicture}
}

\newcommand{\subscriptDirEdgeType}{%
    \begin{tikzpicture}[]%
        \node[originnode, inner sep=1pt] (v1) at (0,0) {};
        \node[originnode, inner sep=1pt] (v2) at (0.35,0) {};
        \draw[originedge, shorten <=0.5pt, shorten >=0pt, -{stealth}] (v1) edge (v2);
    \end{tikzpicture}
}

\begin{document}

\maketitle

\begin{abstract}

We are interested in the following validation problem for computational reductions: for algorithmic problems $P$ and $P^\star$, is a given candidate reduction indeed a reduction from $P$ to $P^\star$? Unsurprisingly, this problem is undecidable even for very restricted classes of reductions. This leads to the question: Is there a natural, expressive class of reductions for which the validation problem can be attacked algorithmically? We answer this question positively by introducing an easy-to-use graphical specification mechanism for computational reductions, called cookbook reductions. We show that cookbook reductions are sufficiently expressive to cover many classical graph reductions and expressive enough so that SAT remains NP-complete (in the presence of a linear order). Surprisingly, the validation problem is decidable for natural and expressive subclasses of cookbook reductions.

\end{abstract}

\section{Introduction}\label{section:introduction}
	Computational reductions are one of the most powerful concepts in theoretical computer science. They are used, among others, to  establish undecidability in computability theory and hardness of algorithmic problems in computational complexity theory. In practical applications, reductions help to harness the power of modern SAT solvers for other problems.
	
        Teaching reductions in introductory courses is usually a difficult task. To teach reductions in introductory courses, instructors often design learning tasks for (i) understanding the computational problems involved, (ii) exploring existing reductions via examples, and (iii) designing reductions between computational problems. In particular, tasks for (iii) are challenging for many students. Although learning reductions is perceived as difficult by students, technological teaching support has so far only been provided for (i) and (ii), likely because these tasks are typically easy to illustrate and checking student solutions is algorithmically straightforward. 
	
	Providing teaching support for (iii) requires to address the foundational question: Is there a suitable language for specifying reductions that can express a variety of reductions, but is also algorithmically accessible? In particular, it should be possible to test whether a candidate for a reduction provided  by a student is indeed a valid reduction, preferably also providing a counterexample in case a submitted answer is incorrect.

	In this paper, we propose such a specification language for reductions and study variants of the following algorithmic problem, parameterized by a class $\calR$ of reductions and complexity classes $\calC$ and $\calC^*$:
	
	\vspace{2mm}
\algorithmicProblemDescription{\redgen[\calR][\calC][\calC^\star]}{Algorithmic problems $P \in \calC$, $P^\star \in \calC^\star$, and a reduction $\rho \in \calR$.}{Is $\rho$ a reduction from $P$ to $P^\star$?}
\vspace{2mm}

	More precisely, our contributions are twofold:
\begin{itemize}
 \item We propose a graphical and modular specification language for reductions, which we call \emph{cookbook reductions} (Section~\ref{section:specification-language}). Its design is inspired by ``building blocks'' such as local replacement of nodes, edges, \dots\cite{GareyJ1979} that are used in the context of many standard reductions. Cookbook reductions allow these building blocks to be combined in a simple, stepwise fashion. 
 We compare the expressive power of cookbook reductions with standard methods of specifying reductions. Specifically, we relate cookbook reductions to quantifier-free first-order interpretations (Section~\ref{subsec:cookbook-qf}) and observe that \SAT remains \NP-hard under cookbook reductions, assuming the presence of a linear order (Corollary~\ref{cor:nphard_cookbook}).
 \item We study variants of the decision problem \textsc{Reduction?}, %
 obtained by choosing different classes of reduction candidates and by either fixing the algorithmic problems $P, P^\star$ or by fixing complexity classes $\calC, \calC^\star$ and letting $P \in \calC, P^\star \in \calC^\star$ be part of the input (Section~\ref{section:algorithms}). Not surprisingly, \textsc{Reduction?} is undecidable for many restricted variants (Theorem~\ref{theorem:algorithmic-hardness-of-reductions}). %
 
 To our surprise, several interesting variants remain decidable: for example, \textsc{Reduction?} is decidable for an arbitrary fixed problem $P$ and fixed $P^\star$ expressible in monadic second-order logic\footnote{This logic extends first-order logic with quantification over sets and can express for example the \NP-complete problem \textsc{3-Colorability}.}, if reduction candidates are from the subclass of cookbook reductions that allows local replacements of edges by a gadget graph (Theorem~\ref{th:fSO_fMSO}). 
 Also, for some concrete choices of problems $P, P^\star$, we characterize valid reductions; the characterizations can be used to generate counterexamples for invalid candidates, which is particularly relevant in teaching contexts. 
 
\end{itemize}

\subparagraph*{Related work}

        Restricted specification languages have also been used in \cite{CrouchIM10, JordanK13} in the context of learning reductions algorithmically. Reductions that are similar in spirit to cookbook reductions due to their stepwise fashion are pp-constructions and gadget reductions in the realm of (finite) constraint satisfaction problems \cite{BOP2018, DO23, Bodirsky2021}. %

        \subparagraph*{Outline} We introduce cookbook reductions as a specification language for reductions in Section \ref{section:specification-language}. In Section \ref{section:expressive-power}, we study how the expressive power of the language compares to reductions definable in quantifier-free first-order logic. We then study the algorithmic problem of deciding whether a given candidate reduction correctly reduces a source to a target problem in Section \ref{section:algorithms}. We conclude 
        by discussing a preliminary implementation of the presented framework in the teaching support system \emph{Iltis} in Section \ref{section:summary}.

\section{Preliminaries}\label{section:preliminaries}
We assume familiarity with basic notions from finite model theory \cite{Libkin2004}.

A \emph{(purely relational) schema} $\schema = \{R_1, \ldots, R_m\}$ is a set of relation symbols $R_i$ with associated \emph{arities} $\arity(R_i)$. A (finite) $\schema$-structure $\struc = (\dom, R_1^\struc, \ldots, R_m^\struc)$ consists of a finite set~$\dom$, called the \emph{universe} or the \emph{domain} of $\struc$, and relations $R_i^\struc \subseteq \dom^{\arity(R_i)}$. If clear from the context, we sometimes omit the superscript $\calS$. We also refer to the domain of $ \struc $ as $ \univ{\struc} $. We write $\qdFO[k]$ for the set of all first-order formulas with quantifier depth at most $k$. The \emph{$\qdFO[k]$-type} of a $\schema$-structure $\struc$ is the set of all $\qdFO[k]$ formulas over schema $\schema$ that $\struc$ satisfies. Two structures $\struc_1, \struc_2$ are \emph{\FO-similar} up to quantifier depth $k$, written $\struc_1 \foeq{k} \struc_2$, if they have the same $\qdFO[k]$-type.

An \emph{isomorphism type} of $\schema$-structures is an equivalence class of the equivalence relation ``is isomorphic to''. We represent an isomorphism type by an arbitrarily fixed $\schema$-structure $\isotype$ with universe $\{1, \ldots, k\}$, for the appropriate number $k$, from that equivalence class. The \emph{arity} of an isomorphism type is the universe size of its representative. Often, we identify an isomorphism type with its representative $\isotype$. 
Given a structure $\struc$ and a subset $A$ of its universe, we write $\subtype_\struc(A)$ for the isomorphism type of $\restrict{\struc}{A}$, so, the isomorphism type of the substructure of $\struc$ that is induced by $A$. We write $\subtype(A)$ if $\struc$ is clear from the context and call $\subtype(A)$ the isomorphism type of $A$.

An \emph{embedding} $\pi$ of a structure $\struc$ into a structure $\struc^\star$ is an injective mapping from the domain of $\struc$ into the domain of $\struc^\star$ that is an isomorphism between $\struc$ and the substructure of $\struc^\star$ that is induced by the image of $\pi$. So, an embedding $\pi$ witnesses that $\struc^\star$ contains an isomorphic copy of $\struc$ as an induced substructure.

An (\emph{algorithmic}) \emph{problem} $P$ is an isomorphism-closed set of $\schema$-structures, for some schema~$\schema$. A \emph{reduction} $\rho$ from a problem $P$ over schema $\schema$ to a problem $P^\star$ over schema $\schema^\star$ is a mapping from $\schema$-structures to $\schema^\star$-structures such that $\struc \in P \Leftrightarrow \rho(\struc) \in P^\star$, for every $\schema$-structure $\struc$. A \emph{$d$-dimensional first-order interpretation} from $\schema$-structures to $\schema^\star$-structures is a tuple $\Psi = (\varphi_\dom(\tpl x), \varphi_\sim(\tpl x_1, \tpl x_2), (\varphi_R(\tpl x_1, \ldots, \tpl x_{\arity(R)}))_{R \in \sigma^\star})$ of first-order formulas over schema $\schema$, where each tuple $\tpl x = (x_1, \ldots, x_d), \tpl x_i = (x_{i,1}, \ldots, x_{i,d})$ consists of $d$ variables. For a given $\sigma$-structure $\struc$ with universe $\dom$, let $\hat{\Psi}(\struc)$ be the $\schema^\star$-structure with universe $\hat{\dom} = \{\tpl a \in \dom^d \mid \struc \models \varphi_\dom(\tpl a)\}$ and relations $R^{\hat{\Psi}(\struc)} = \{(\tpl a_1, \ldots, \tpl a_{\arity(R)}) \in \hat{\dom}^{\arity(R)} \mid \struc \models \varphi_{R}(\tpl a_1, \ldots, \tpl a_{\arity(R)})\}$ for each $R \in \schema^\star$.
We demand that for every $\schema$-structure $\struc$, the binary relation $\sim^{\hat{\Psi}(\struc)} = \{(\tpl a_1, \tpl a_2) \in \hat{\dom}^2 \mid \struc \models \varphi_{\sim}(\tpl a_1, \tpl a_2)\}$ is a congruence relation on $\hat{\Psi}(\struc)$, that is, an equivalence relation on the universe that is compatible with the relations of the structure.
For a given $\sigma$-structure~$\struc$, the interpretation $\Psi$ defines the $\schema^\star$-structure $\Psi(\struc)$ that is the quotient structure of $\hat{\Psi}(\struc)$ with respect to $\sim^{\hat{\Psi}(\struc)}$, that is, the structure that results from $\hat{\Psi}(\struc)$ by restricting the universe to only one element for every equivalence class of $\sim^{\hat{\Psi}(\struc)}$.

Most of our examples will be drawn from the algorithmic problems from Figure \ref{figure:algorithmic-problems}. We also consider variants of some of these problems where $k$ is a fixed parameter, e.g. \algorithmicProblem{$k$-Clique} asks, given a graph $G$, whether there is a $k$-clique in $G$.

For a natural number $n$, we sometimes write $[n]$ for the set $\{1, \ldots, n\}$.

\begin{figure}[t]
\newtcolorbox{problembox}{left=0pt,right=0pt,top=0pt,bottom=0pt,colback=white}

\begin{tcbraster}[raster columns=2,raster rows=3, raster row skip=1mm,  raster column skip=1mm, size=small,left=0pt,right=0pt,top=0pt,bottom=0pt,raster equal height=rows]

	\begin{problembox}
		\small
		\algorithmicProblemDescription[4.9cm]
			{\footnotesize Clique}
			{\footnotesize Undirected graph $G = (V, E)$, $k \in \N$}
			{\footnotesize Is there a set $ U \subseteq V $ of size $k$ with $(u,v) \in E$ for all $ u,v \in U $?}%
	\end{problembox}
	\begin{problembox}
			\small
		\algorithmicProblemDescription[4.9cm]
			{\footnotesize IndependentSet}
			{\footnotesize Undirected graph $G = (V, E)$, $k \in \N$}
			{\footnotesize Is there a set $ U \subseteq V $ of size $k$ with $(u,v) \notin E$ for all $ u,v \in U $?}%
	\end{problembox}

	\begin{problembox}
			\small
		\algorithmicProblemDescription[4.9cm]
			{\footnotesize VertexCover}
			{\footnotesize Undirected graph $G = (V, E)$, $k \in \N$}
			{\footnotesize Is there a set $ U \subseteq V $ of size  at most $k$ such that $u \in U$ or $v \in U$ for all  $(u,v) \in E$?}%
	\end{problembox}
	\begin{problembox}
			\small
		\algorithmicProblemDescription[4.9cm]
			{\footnotesize FeedbackVertexSet}
			{\footnotesize Undirected graph $G = (V, E)$, $k \in \N$}
			{\footnotesize Is there a set $ U \subseteq V $ of size at most $k$ such that removing $U$ from $G$ yields a cycle-free graph?}%
	\end{problembox}

	\begin{problembox}
			\small
		\algorithmicProblemDescription[4.9cm]
			{\footnotesize HamCycle$_\text{u}$}
			{\footnotesize Undirected graph $G = (V, E)$}
			{\footnotesize Is there an undirected cycle in $G$ that passes each node exactly once?}
	\end{problembox}
	\begin{problembox}
			\small
		\algorithmicProblemDescription[4.9cm]
			{\footnotesize HamCycle$_\text{d}$}
			{\footnotesize Directed graph $G = (V, E)$}
			{\footnotesize Is there a directed cycle in $G$ that passes each node exactly once?}
	\end{problembox}
\end{tcbraster}
\caption{Collection of algorithmic problems considered in the paper.\label{figure:algorithmic-problems}}
\end{figure}

\section{Cookbook reductions: A specification language for reductions}\label{section:specification-language}
\newcommand{\targetlabel}[2][]{%
    \ifthenelse{\isempty{#1}}%
        {\scriptsize\ensuremath{#2^\star}}%
        {\scriptsize\ensuremath{(#2^\star\!\!,\!#1)}}%
}
\newcommand{\targetu}[1][]{\targetlabel[#1]{u}}
\newcommand{\targetv}[1][]{\targetlabel[#1]{v}}
\newcommand{\tupletargetlabel}[2][]{%
    \ifthenelse{\isempty{#1}}%
        {\scriptsize\ensuremath{(\!\{#2\},\!1)}}%
        {\scriptsize\ensuremath{(#2,\!#1)}}%
}

When looking for a reduction, one approach by typical experts is to subsequently try building blocks that they have encountered in the context of other reductions before. For example, Garey and Johnson \cite[Section~3.2]{GareyJ1979} discuss common proof techniques like local replacements that occur in many standard reductions. 
An example is the standard reduction from the problem of finding a directed Hamiltonian cycle to finding an undirected Hamiltonian cycle that transforms a directed graph into an undirected graph by mapping each node \scalebox{0.7}{\hspace{-2mm}\begin{tikzpicture}[baseline=(1b)]
             \node[] (1a) at (-0.8,0.2){};
             \node[] (1b) at (-0.8,-0.2){};
             \node[] (3a) at (.8,0.2){};
             \node[] (3b) at (.8,0){};
             \node[] (3c) at (.8,-0.2){};

             \node[originnode, minimum size=6pt, label=below:{\scriptsize $v$}] (2) at (0.0,0){};
             \draw[hcdedge, thick] (1a) -- (2);
             \draw[hcdedge, thick] (1b) -- (2);
             \draw[hcdedge, thick] (2) -- (3a);
             \draw[hcdedge, thick] (2) -- (3b);
             \draw[hcdedge, thick] (2) -- (3c);
             \end{tikzpicture}} to a small gadget \scalebox{0.8}{\hspace{-2mm}\begin{tikzpicture}[baseline=(tmp)]
             \node[] (tmp) at (-0.5,-.150){};
             \node[] (1a) at (-0.7,0.2){};
             \node[] (1b) at (-0.7,-0.2){};
             \node[] (3a) at (1.7,0.2){};
             \node[] (3b) at (1.7,0){};
             \node[] (3c) at (1.7,-0.2){};
             \node[originnode, minimum size=6pt, label=below:{\scriptsize $v_\text{in}$}] (1) at (0,0){};
             \node[originnode, minimum size=6pt, label=below:{\scriptsize $v$}] (2) at (0.5,0){};
             \node[originnode, minimum size=6pt, label=below:{\scriptsize $v_\text{out}$}] (3) at (1,0){};
             \draw[originedge] (1) -- (2);
             \draw[originedge] (2) -- (3);
             \draw[originedge] (1a) -- (1);
             \draw[originedge] (1b) -- (1);
             \draw[originedge] (3) -- (3a);
             \draw[originedge] (3) -- (3b);
             \draw[originedge] (3) -- (3c);
             \end{tikzpicture}}~\hspace{-2.5mm}~. Constructing such node gadgets is one of the typical building blocks when designing reductions.

Our approach towards constructing a specification language for reductions is to (1) identify common building blocks used in computational reductions between graph problems, and to (2) abstract these building blocks into a more general specification language. The resulting language is reasonably broad and, due to its modular and graphical nature, easy to use.

\subsection{Building blocks and recipes}\label{section:building-blocks}
Many computational reductions can be crafted from a small set of common building blocks. For reductions between graph problems, some such building blocks are the following:
\begin{itemize}
 \item \emph{Edge gadgets} replace each edge $(u, v)$ of the source instance uniformly by a graph. For example, in the standard reduction from \VC to \FVS, every edge \begin{tikzpicture}[inner sep=0pt, node distance=5mm, baseline=-1mm]%
        \node[originnode, label={[below=2mm]:{\scriptsize $u$}}] (u){};
        \node[originnode, label={[below=2mm]:{\scriptsize $v$}}, right of = u] (v){};
        \draw[originedge] (u) edge (v);
    \end{tikzpicture} in the source instance is replaced by a triangle \begin{tikzpicture}[inner sep=0pt, node distance=5mm, baseline=0mm]%
        \node[originnode, label={[below=2mm]:{\scriptsize $u$}}] (u){};
        \node[originnode, above right of= u] (uv){};
        \node[originnode, label={[below=2mm]:{\scriptsize $v$}}, below right of= uv] (v){};
        \draw[originedge] (u) edge (v);
        \draw[originedge] (u) edge (uv);
        \draw[originedge] (uv) edge (v);
    \end{tikzpicture}.
    
	 \item \emph{Node gadgets} replace each node of the source instance uniformly by a graph and specify how these graphs are connected. For example, in the standard reduction from  \HCd to \HCu, every node \begin{tikzpicture}[inner sep=0pt, node distance=5mm, baseline=-1mm]%
        \node[originnode, label={[below=2mm]:{\scriptsize $v$}}] (u){};
    \end{tikzpicture} in the source instance is replaced by a path \begin{tikzpicture}[inner sep=0pt, node distance=5mm, baseline=-1mm]%
        \node[originnode, label={[below=2mm]:{\scriptsize $v_\text{in}$}}] (vin){};
        \node[originnode, label={[below=2mm]:{\scriptsize $v$}}, right of= vin] (v){};
        \node[originnode, label={[below=2mm]:{\scriptsize $v_\text{out}$}}, right of= v] (vout){};
        \draw[originedge] (vin) edge (v);
        \draw[originedge] (v) edge (vout);
    \end{tikzpicture}  and if there is an edge $(u,v)$ in the source instance, then the paths for $u$ and $v$ are connected via
    \begin{tikzpicture}[inner sep=0pt, node distance=5mm, baseline=-2mm]%
        \node[originnode, label={[above=0.25mm]:{\scriptsize $u_\text{in}$}}] (uin){};
        \node[originnode, label={[above=0.5mm]:{\scriptsize $u$}}, right of= uin] (u){};
        \node[originnode, label={[above=0.25mm]:{\scriptsize $u_\text{out}$}}, right of= u] (uout){};
        \draw[originedge] (uin) edge (u);
        \draw[originedge] (u) edge (uout);
        \node[originnode, label={[below=2mm]:{\scriptsize $v_\text{in}$}}, below of=uin] (vin){};
        \node[originnode, label={[below=2mm]:{\scriptsize $v$}}, right of= vin] (v){};
        \node[originnode, label={[below=2mm]:{\scriptsize $v_\text{out}$}}, right of= v] (vout){};
        \draw[originedge] (vin) edge (v);
        \draw[originedge] (v) edge (vout);
        \draw[originedge] (uout) edge (vin);
    \end{tikzpicture}.
	 \item \emph{Global gadgets} introduce a (global) graph and specify how each node of this graph is connected to the nodes of the source instance. For example, in the simple reduction from  \ThreeClique to \FourClique, a single node \begin{tikzpicture}[inner sep=0pt, node distance=5mm, baseline=-1mm]%
        \node[originnode, label={[below=2mm]:{\scriptsize $g$}}] (g){};
    \end{tikzpicture} is introduced as global graph and each node~$v$ of the source instance is connected to $g$ via an edge  \begin{tikzpicture}[inner sep=0pt, node distance=5mm, baseline=-1mm]%
        \node[originnode, label={[below=2mm]:{\scriptsize $g$}}] (g){};
        \node[originnode, label={[below=2mm]:{\scriptsize $v$}}, right of= g] (v){};
        \draw[originedge] (g) edge (v);
    \end{tikzpicture}.

\end{itemize}

These building blocks have in common that target instances of reductions are obtained\footnote{Contrary to the formulation above, a reduction does not alter a source instance to form the target instance, but creates a new structure.} 
from source instances by following simple, recipe-like steps of the form ``for every occurrence of a substructure $\isotype$ in the source instance, create a copy of the substructure $\isotype^\star$ in the target structure''. For example, the recipes for the above reductions are as follows:
\begin{itemize}
 \item Reducing $k$-\VC to $k$-\FVS: For every node $v$ in the source instance, create a node $v^\star$ in the target instance. For every edge $(u,v)$ in the source instance, create a node $ w^\star_{uv} $ and edges $(u^\star,v^\star), (v^\star, w^\star_{uv}), (w^\star_{uv}, u^\star)$ in the target instance.
 \item Reducing \HCd to \HCu: For every node $v$ in the source instance, create nodes $v^\star_\text{in}, v^\star, v^\star_\text{out}$ in the target instance and connect them as a path. For every directed edge $(u,v)$ in the source instance, create the undirected edge $(u^\star_\text{out}, v^\star_\text{in})$ in the target instance. 
 \item Reducing \ThreeClique to \FourClique: Create a node $g^\star$ in the target instance. For every node $v$ of the source instance, create a node $v^\star$ in the target instance and add the edge $(v^\star,g^\star)$. Copy all  edges $(u,v)$ of the source instance as edges $(u^\star,v^\star)$ to the target instance.
\end{itemize}

Other reductions can also be phrased in this form, for instance:
\begin{itemize}
 \item Reducing $k$-\Clique to $k$-\IS: First, for every node $v$ of the source instance, create a node $v^\star$ in the target instance. Then, for every pair $u,v$ of nodes that are not connected by an edge in the source instance, create an edge $(u^\star,v^\star)$ in the target instance.
\end{itemize}

Reductions specified this way capture building blocks such as the ones from \cite{GareyJ1979} and are usually easy to understand, often much more than their presentation as algorithms or as logical interpretations. 
Such reductions can also easily be specified graphically, see Figure~\ref{figure:example:informal-cookbook}%
, facilitating the implementation in educational support systems (see Section \ref{section:summary}).  

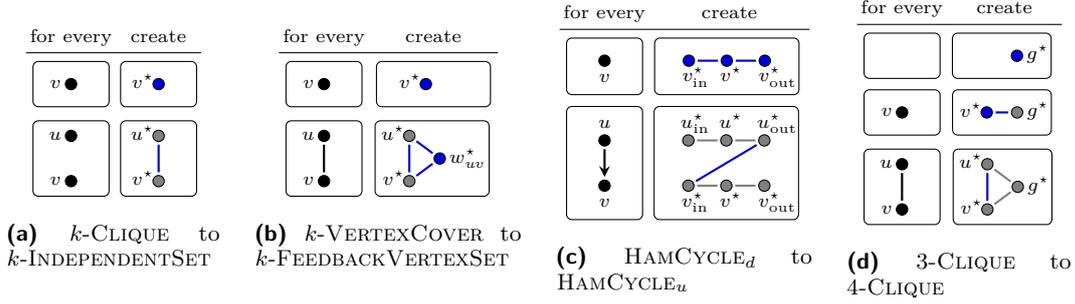
\begin{figure}[t]
\setlength{\tabcolsep}{0.15em}
\begin{subfigure}[t]{0.20\textwidth}
\centering
\begin{tabular}{c c}
\scriptsize{for every} &\scriptsize{create}  \\
\hline
\begin{tikzpicture}
\node at (0.5,.8) {}; %
 \draw[rounded corners = 2pt] (0,.2) rectangle (1,.8);
 \node[originnode, label={[shift={(0,0)}, label distance=-2pt]left:{\scriptsize $ v $}}] (v) at (0.5,0.5) {};
\end{tikzpicture} &
\begin{tikzpicture}
 \draw[rounded corners = 2pt] (0,.2) rectangle (1,.8);
 \node[targetnode-new, label={[shift={(0,0.04)}, label distance=-4pt]left:\targetv}] (v) at (0.5,0.5) {};
\end{tikzpicture}\\
\begin{tikzpicture}
 \draw[rounded corners = 2pt] (0,0) rectangle (1,1);
 \node[originnode, label={[shift={(0,0)}, label distance=-2pt]left:{\scriptsize $ u $}}] (v2) at (0.5,0.8) {};
 \node[originnode, label={[shift={(0,0)}, label distance=-2pt]left:{\scriptsize $ v $}}] (v1) at (0.5,0.2) {};
\end{tikzpicture} &
\begin{tikzpicture}
 \draw[rounded corners = 2pt] (0,0) rectangle (1,1);
 \node[targetnode-old, label={[shift={(0,0.04)}, label distance=-4pt]left:\targetu}] (v2) at (0.5,0.8) {};
 \node[targetnode-old, label={[shift={(0,0.04)}, label distance=-4pt]left:\targetv}] (v1) at (0.5,0.2) {};
 \draw[targetedge-new] (v1) edge (v2);
\end{tikzpicture}
\end{tabular}

 \caption{$k$-\Clique to \mbox{$k$-\IS}}
\end{subfigure} \quad
\begin{subfigure}[t]{0.25\textwidth}
\centering
\begin{tabular}{c c}
\scriptsize{for every} &\scriptsize{create}  \\
\hline
\begin{tikzpicture}
\node at (0.5,.8) {}; %
 \draw[rounded corners = 2pt] (0,.2) rectangle (1,.8);
 \node[originnode, label={[shift={(0,0)}, label distance=-2pt]left:{\scriptsize $ v $}}] (v) at (0.5,0.5) {};
\end{tikzpicture} &
\begin{tikzpicture}
 \draw[rounded corners = 2pt] (-0.15,.2) rectangle (1.35,.8);
 \node[targetnode-new, label={[shift={(0,0.04)}, label distance=-4pt]left:{\scriptsize $ v^\star $}}] (v) at (0.5,0.5) {};
\end{tikzpicture}\\
\begin{tikzpicture}
 \draw[rounded corners = 2pt] (0,0) rectangle (1,1);
 \node[originnode, label={[shift={(0,0)}, label distance=-2pt]left:{\scriptsize $ u $}}] (v2) at (0.5,0.8) {};
 \node[originnode, label={[shift={(0,0)}, label distance=-2pt]left:{\scriptsize $ v $}}] (v1) at (0.5,0.2) {};
 \draw[originedge] (v1) edge (v2);
\end{tikzpicture} &
\begin{tikzpicture}
 \draw[rounded corners = 2pt] (-0.15,0) rectangle (1.35,1);
 \node[targetnode-old, label={[shift={(0,0.04)}, label distance=-4pt]left:\targetu}] (v2) at (0.3,0.8) {};
 \node[targetnode-old, label={[shift={(0,0.04)}, label distance=-4pt]left:\targetv}] (v1) at (0.3,0.2) {};
 \node[targetnode-new, label={[shift={(0,0.0)}, label distance=-2pt]right:\targetlabel{w_{u\hspace{-0.7pt}v}}}] (v12) at (0.7,0.5) {};
 \draw[targetedge-new] (v1) edge (v2) edge (v12) 
 	(v2) edge (v12);
\end{tikzpicture}
\end{tabular}
 \caption{$k$-\VC to \mbox{$k$-\FVS}}
\end{subfigure} \quad
\begin{subfigure}[t]{0.24\textwidth}
\centering
\begin{tabular}{c c}
\scriptsize{for every} &\scriptsize{create}  \\
\hline
\begin{tikzpicture}
\node at (0.5,0.8) {}; %
 \draw[rounded corners = 2pt] (0,0.05) rectangle (1,.8);
 \node[originnode, label={[shift={(0,-0.045)}, label distance=-2pt]below:{\scriptsize $ v $}}] (v) at (0.5,0.5) {};
\end{tikzpicture} &
\begin{tikzpicture}
 \draw[rounded corners = 2pt] (-.45,0.05) rectangle (1.45,0.8);
 \node[targetnode-new, label={[shift={(0.09,0.04)}, label distance=-2pt]below:\targetlabel{v_{\mathrm{in}}}}] (v1) at (0,0.5) {};
 \node[targetnode-new, label={[shift={(0.06,0.04)}, label distance=-2pt]below:\targetv}] (v2) at (0.5,0.5) {};
 \node[targetnode-new, label={[shift={(0.17,0.04)}, label distance=-2pt]below:\targetlabel{v_{\mathrm{out}}}}] (v3) at (1,0.5) {};
 \draw[targetedge-new] (v1) edge (v2)
  (v2) edge (v3);
\end{tikzpicture}\\
\begin{tikzpicture}
 \draw[rounded corners = 2pt] (0,-0.25) rectangle (1,1.25);
 \node[originnode, label={[shift={(0,0.045)}, label distance=-2pt]above:{\scriptsize $ u $}}] (v2) at (0.5,0.8) {};
 \node[originnode, label={[shift={(0,-0.045)}, label distance=-2pt]below:{\scriptsize $ v $}}] (v1) at (0.5,0.2) {};
 \draw[originedge, -stealth] (v2) edge (v1);
\end{tikzpicture} &
\begin{tikzpicture}
 \draw[rounded corners = 2pt] (-.45,-0.25) rectangle (1.45,1.25);
 \node[targetnode-old, label={[shift={(0.09,0.04)}, label distance=-2pt]below:\targetlabel{v_\mathrm{in}}}] (v11) at (0,0.2) {};
 \node[targetnode-old, label={[shift={(0.06,0.04)}, label distance=-2pt]below:\targetv}] (v12) at (0.5,0.2) {};
 \node[targetnode-old, label={[shift={(0.17,0.04)}, label distance=-2pt]below:\targetlabel{v_\mathrm{out}}}] (v13) at (1,0.2) {};
 \node[targetnode-old, label={[shift={(0.09,0.04)}, label distance=-4pt]above:\targetlabel{u_\mathrm{in}}}] (v21) at (0,0.8) {};
 \node[targetnode-old, label={[shift={(0.06,0.04)}, label distance=-2pt]above:\targetu}] (v22) at (0.5,0.8) {};
 \node[targetnode-old, label={[shift={(0.17,0.04)}, label distance=-4pt]above:\targetlabel{u_\mathrm{out}}}] (v23) at (1,0.8) {};
 \draw[targetedge-old] (v11) edge (v12)
  (v12) edge (v13)
  (v21) edge (v22)
  (v22) edge (v23);
 \draw[targetedge-new] (v23) edge (v11);
\end{tikzpicture}
\end{tabular}
 \caption{\HCd to \HCu}
\end{subfigure} \quad
\begin{subfigure}[t]{0.21\textwidth}
\centering
\begin{tabular}{c c}
\scriptsize{for every} &\scriptsize{create}  \\
\hline
\begin{tikzpicture}
\node at (0.5,.8) {}; %
 \draw[rounded corners = 2pt] (0,.2) rectangle (1,.8);
\end{tikzpicture} &
\begin{tikzpicture}
 \draw[rounded corners = 2pt] (-0.15,.2) rectangle (1.15,.8);
 \node[globalnode-new, label={[shift={(0,0.02)}, label distance=-2pt]right:\targetlabel{g}}] (v) at (0.7,0.5) {};
\end{tikzpicture}\\
\begin{tikzpicture}
 \draw[rounded corners = 2pt] (0,.2) rectangle (1,.8);
 \node[originnode, label={[shift={(0,0)}, label distance=-2pt]left:{\scriptsize $ v $}}] (v) at (0.5,0.5) {};
\end{tikzpicture} &
\begin{tikzpicture}
 \draw[rounded corners = 2pt] (-0.15,.2) rectangle (1.15,.8);
 \node[targetnode-new, label={[shift={(0,0.04)}, label distance=-4pt]left:\targetv}] (v1) at (0.3,0.5) {};
 \node[globalnode-old, label={[shift={(0,0.02)}, label distance=-2pt]right:\targetlabel{g}}] (v2) at (0.7,0.5) {};
 \draw[targetedge-new] (v1) edge (v2);
\end{tikzpicture}\\
\begin{tikzpicture}
 \draw[rounded corners = 2pt] (0,0) rectangle (1,1);
 \node[originnode, label={[shift={(0,0)}, label distance=-2pt]left:{\scriptsize $ v $}}] (v1) at (0.5,0.2) {};
 \node[originnode, label={[shift={(0,0)}, label distance=-2pt]left:{\scriptsize $ u $}}] (v2) at (0.5,0.8) {};
 \draw[originedge] (v1) edge (v2);
\end{tikzpicture} &
\begin{tikzpicture}
 \draw[rounded corners = 2pt] (-0.15,0) rectangle (1.15,1);
 \node[targetnode-old, label={[shift={(0,0.04)}, label distance=-4pt]left:\targetv}] (v1) at (0.3,0.2) {};
 \node[targetnode-old, label={[shift={(0,0.04)}, label distance=-4pt]left:\targetu}] (v2) at (0.3,0.8) {};
 \node[globalnode-old, label={[shift={(0,0.02)}, label distance=-2pt]right:\targetlabel{g}}] (v12) at (0.7,0.5) {};
 \draw[targetedge-old] (v1) edge (v12)  
 	(v2) edge (v12);
 \draw[targetedge-new] (v1) edge (v2); 	
\end{tikzpicture}
\end{tabular}
 \caption{\ThreeClique to \mbox{\FourClique}}
\end{subfigure}
\caption{Graphical representations of four reductions. The reductions are applied stepwise, from the top-most step to the bottom-most step. Nodes and edges coloured blue are created in this step, grey nodes and edges were created in a previous step.}\label{figure:example:informal-cookbook} 
\end{figure}

\subsection{Cookbook reductions: Formalization}
We now formalize cookbook reductions as such recipe-style descriptions of computational reductions. In general, graphical representations as in Figure~\ref{figure:example:informal-cookbook} can be used to specify a cookbook reduction. In this section, we discuss the formal syntax and semantics.%

Intuitively, a reduction specified in our formalism builds, based on a source structure, the target structure in a sequence of stages, starting from an empty structure. At first, independent of the source structure, some global elements and tuples over these elements may be introduced to the target structure. Then, for every element of the source structure, a set of elements may be added, together with tuples that may also incorporate the elements that were introduced in the step before. The added elements and tuples depend on the (atomic) type of the respective element of the source structure.
In further stages, elements are analogously introduced for every set of two, three, \dots, elements of the source structure, depending on the type of these sets. %

Syntactically, a \emph{cookbook reduction} $\rho$ from $\schema$-structures to $\schema^\star$-structures is a finite set $\rho = \{(\isotype_1, \redS_1), \ldots, (\isotype_m, \redS_m)\}$ of pairs which we call \emph{instructions}. The structures $\isotype_i$ are $\schema$-structures with universe $\{1, \ldots, k_i\}$, for some natural number $k_i \geq 0$, that represent pairwise distinct isomorphism types of $\schema$-structures. The set $\{\isotype_1, \ldots, \isotype_m\}$ is the \emph{support} of $\rho$. The \emph{arity} of $\rho$ is the maximal arity of an isomorphism type in the support of $\rho$.
The structures $\redS_i$ are over the schema $\schema^\star$. 
For $(\isotype_i, \redS_i) \in \rho$, we also refer to $ \redS_i $ as $ \redS(\isotype_i) $. 
Each instruction $(\isotype, \redS)$, where $\isotype$ has the universe $[k] = \{1, \ldots, k\}$, satisfies the following properties:
\begin{enumerate}[(P1)]
    \item The universe $\univ{\redS}$ of $\redS$ consists of elements $(A, j)$, where $A \subseteq [k]$ and $j \geq 1$. If $(A, j) \in \univ{\redS}$ with $j > 1$, then also $(A, 1), \ldots, (A, j-1)$ are in $\univ{\redS}$.
    \item For any $(A, j) \in \univ{\redS} $ with $A \subsetneq [k]$, the isomorphism type $\isotype' = \subtype_\isotype(A)$ is in the support of $\rho$ and $(\{1, \ldots, |A|\}, j)$ is in $\univ{\redS(\isotype')}$.%
 \item For any tuple $((A_1, j_1), \ldots, (A_\ell, j_\ell))$ in any relation of $\redS$ with $\bigcup_{i \leq \ell} A_i  \subsetneq [k]$, the isomorphism type $\subtype_\isotype(\bigcup_{i \leq \ell} A_i)$ is in the support of $\rho$.
 \item For any $(\isotype', \redS') \in \rho$ and any $A \subsetneq [k]$ with $\subtype_\isotype(A) = \isotype'$, there is an isomorphism $\pi$ from $\isotype'$ to $\restrict{\isotype}{A}$ such that the injective mapping $\hat\pi$ with $\hat\pi((A',j')) = (\pi(A'),j')$, for all $(A',j')$ in $\univ{\redS'}$, is an embedding from $\redS'$ into $\redS$.
\end{enumerate}

A cookbook reduction has to satisfy a further, semantic property, which we state after defining the semantics.

\begin{figure}[t]
\begin{subfigure}[t]{0.3\textwidth}
\centering
\begin{tabular}{c c}
    $\isotype$ & $\redS$  \\
    \hline
\begin{tikzpicture}
\node at (0.5,.8) {}; %
 \draw[rounded corners = 2pt] (0,.2) rectangle (1,.8);
 \node[originnode, label={[shift={(0,0)}, label distance=-2pt]left:{\scriptsize $ 1 $}}] (v) at (0.5,0.5) {};
\end{tikzpicture} &
\begin{tikzpicture}
 \draw[rounded corners = 2pt] (-0.75,.2) rectangle (2,.8);
 \node[cbnodetypenode, label={[shift={(0,0.0)}, label distance=-2pt]left:{\scriptsize $ (\{1\}\text{,}1) $}}] (v) at (0.3,0.5) {};
\end{tikzpicture}\\
\begin{tikzpicture}
 \draw[rounded corners = 2pt] (0,0) rectangle (1,1);
 \node[originnode, label={[shift={(0,0)}, label distance=-2pt]left:{\scriptsize $ 1 $}}] (v2) at (0.5,0.8) {};
 \node[originnode, label={[shift={(0,0)}, label distance=-2pt]left:{\scriptsize $ 2 $}}] (v1) at (0.5,0.2) {};
 \draw[originedge] (v1) edge (v2);
\end{tikzpicture} &
\begin{tikzpicture}
 \draw[rounded corners = 2pt] (-0.75,0) rectangle (2.0,1);
 \node[cbnodetypenode, label={[shift={(0,0.0)}, label distance=-2pt]left:{\scriptsize $ (\{1\}\text{,}1) $}}] (v2) at (0.3,0.8) {};
 \node[cbnodetypenode, label={[shift={(0,0.0)}, label distance=-2pt]left:{\scriptsize $ (\{2\}\text{,}1) $}}] (v1) at (0.3,0.2) {};
 \node[cbedgetypenode, label={[shift={(0,0.0)}, label distance=-2pt]right:{\scriptsize $ (\{1\text{,}2\}\text{,}1) $}}] (v12) at (0.7,0.5) {};
 \draw[cbedgetypeedge] (v1) edge (v2);
 \draw[cbedgetypeedge] (v1) edge (v12)
     (v12) edge (v2);
\end{tikzpicture}
\end{tabular}
 \caption{$k$-\VC to \mbox{$k$-\FVS}}
\end{subfigure}
\hfill
\begin{subfigure}[t]{0.32\textwidth}
\centering
\begin{tabular}{c c}
    $\isotype$ & $\redS$  \\
\hline
\begin{tikzpicture}
\node at (0.5,0.8) {}; %
 \draw[rounded corners = 2pt] (0,0.0) rectangle (1,.7);
 \node[originnode, label={[shift={(0,-0.045)}, label distance=-2pt]below:{\scriptsize $ 1 $}}] (v) at (0.5,0.5) {};
\end{tikzpicture} &
\begin{tikzpicture}
 \draw[rounded corners = 2pt] (-1.05,0.0) rectangle (2.05,0.7);
 \node[cbnodetypenode, label={[shift={(0,-0.02)}, label distance=-2pt]below:{\scriptsize $ (\{1\}\text{,}1) $}}] (v1) at (-0.5,0.5) {};
 \node[cbnodetypenode, label={[shift={(0,-0.02)}, label distance=-2pt]below:{\scriptsize $ (\{1\}\text{,}2) $}}] (v2) at (0.5,0.5) {};
 \node[cbnodetypenode, label={[shift={(0,-0.02)}, label distance=-2pt]below:{\scriptsize $ (\{1\}\text{,}3) $}}] (v3) at (1.5,0.5) {};
 \draw[cbnodetypeedge] (v1) edge (v2)
  (v2) edge (v3);
\end{tikzpicture}\\
\begin{tikzpicture}
 \draw[rounded corners = 2pt] (0,-0.25) rectangle (1,1.25);
 \node[originnode, label={[shift={(0,0.045)}, label distance=-2pt]above:{\scriptsize $ 1 $}}] (v2) at (0.5,0.76) {};
 \node[originnode, label={[shift={(0,-0.045)}, label distance=-2pt]below:{\scriptsize $ 2 $}}] (v1) at (0.5,0.24) {};
 \draw[originedge, -stealth] (v2) edge (v1);
\end{tikzpicture} &
\begin{tikzpicture}
 \draw[rounded corners = 2pt] (-1.05,-0.25) rectangle (2.05,1.25);
 \node[cbnodetypenode, label={[shift={(0,-0.02)}, label distance=-2pt]below:{\scriptsize $ (\{2\}\text{,}1) $}}] (v11) at (-0.5,0.24) {};
 \node[cbnodetypenode, label={[shift={(0,-0.02)}, label distance=-2pt]below:{\scriptsize $ (\{2\}\text{,}2) $}}] (v12) at (0.5,0.24) {};
 \node[cbnodetypenode, label={[shift={(0,-0.02)}, label distance=-2pt]below:{\scriptsize $ (\{2\}\text{,}3) $}}] (v13) at (1.5,0.24) {};
 \node[cbnodetypenode, label={[shift={(0,0.02)}, label distance=-2pt]above:{\scriptsize $ (\{1\}\text{,}1) $}}] (v21) at (-0.5,0.76) {};
 \node[cbnodetypenode, label={[shift={(0,0.02)}, label distance=-2pt]above:{\scriptsize $ (\{1\}\text{,}2) $}}] (v22) at (0.5,0.76) {};
 \node[cbnodetypenode, label={[shift={(0,0.02)}, label distance=-2pt]above:{\scriptsize $ (\{1\}\text{,}3) $}}] (v23) at (1.5,0.76) {};
 \draw[cbnodetypeedge] (v11) edge (v12)
  (v12) edge (v13)
  (v21) edge (v22)
  (v22) edge (v23);
 \draw[cbedgetypeedge] (v23) edge (v11);
\end{tikzpicture}
\end{tabular}
 \caption{\HCd to \HCu}
\end{subfigure}
\hfill
\begin{subfigure}[t]{0.3\textwidth}
    \centering
    \begin{tabular}{c c}
    $\isotype$ & $\redS$  \\
    \hline
    \begin{tikzpicture}
    \node at (0.5,.8) {}; %
     \draw[rounded corners = 2pt] (0,.25) rectangle (1,.75);
    \end{tikzpicture} &
    \begin{tikzpicture}
     \draw[rounded corners = 2pt] (-0.8,.25) rectangle (1.5,.75);
     \node[cbglobalnode, label={[shift={(0,0.0)}, label distance=-2pt]right:{\scriptsize $ (\emptyset\text{,}1) $}}] (v) at (0.7,0.5) {};
    \end{tikzpicture} \\
    \begin{tikzpicture}
     \draw[rounded corners = 2pt] (0,.23) rectangle (1,.77);
     \node[originnode, label={[shift={(0,0)}, label distance=-2pt]left:{\scriptsize $ 1 $}}] (v) at (0.5,0.5) {};
    \end{tikzpicture} &
    \begin{tikzpicture}
     \draw[rounded corners = 2pt] (-0.8,.23) rectangle (1.5,.77);
     \node[cbnodetypenode, label={[shift={(0,0.0)}, label distance=-2pt]left:{\scriptsize $ (\{1\}\text{,}1) $}}] (v1) at (0.3,0.5) {};
     \node[cbglobalnode, label={[shift={(0,0.0)}, label distance=-2pt]right:{\scriptsize $ (\emptyset\text{,}1) $}}] (v2) at (0.7,0.5) {};
     \draw[cbnodetypeedge] (v1) edge (v2);
    \end{tikzpicture} \\
    \begin{tikzpicture}
     \draw[rounded corners = 2pt] (0,0) rectangle (1,1);
     \node[originnode, label={[shift={(0,0)}, label distance=-2pt]left:{\scriptsize $ 2 $}}] (v1) at (0.5,0.2) {};
     \node[originnode, label={[shift={(0,0)}, label distance=-2pt]left:{\scriptsize $ 1 $}}] (v2) at (0.5,0.8) {};
     \draw[originedge] (v1) edge (v2);
    \end{tikzpicture} &
    \begin{tikzpicture}
     \draw[rounded corners = 2pt] (-0.8,0) rectangle (1.5,1);
     \node[cbnodetypenode, label={[shift={(0,0.0)}, label distance=-2pt]left:{\scriptsize $ (\{2\}\text{,}1) $}}] (v1) at (0.3,0.2) {};
     \node[cbnodetypenode, label={[shift={(0,0.0)}, label distance=-2pt]left:{\scriptsize $ (\{1\}\text{,}1) $}}] (v2) at (0.3,0.8) {};
     \node[cbglobalnode, label={[shift={(0,0.0)}, label distance=-2pt]right:{\scriptsize $ (\emptyset\text{,}1) $}}] (v12) at (0.7,0.5) {};
     \draw[cbnodetypeedge] (v1) edge (v12);
     \draw[cbnodetypeedge] (v2) edge (v12);
     \draw[cbedgetypeedge] (v1) edge (v2);
    \end{tikzpicture} 
    \end{tabular}
 \caption{\ThreeClique to \FourClique}
\end{subfigure}
\caption{Three reductions formalized as cookbook reductions. Nodes introduced for type $ \isotypeGlobal $ are coloured green, nodes and edges introduced for type $ \isotypeVertex $ are coloured grey, and nodes and edges introduced for types $ \isotypeEdge $ and $ \isotypeDirEdge $ are coloured blue. %
Compare to Figure~\ref{figure:example:informal-cookbook}(b), (c), and (d).}\label{figure:example:cookbook-reduction} 
\end{figure} 

See Figure~\ref{figure:example:cookbook-reduction} for examples of cookbook reductions. %

We give some more explanations for the conditions (P1)--(P4).
Intuitively, an instruction $(\isotype, \redS) \in \rho$ means that for every occurrence of the type $\isotype$ in the source structure, a copy of the structure $\redS$ is included in the target structure.
The conditions (P1) and (P2) are concerned with the universe $\univ{\redS}$ of $\redS$. 
If $\isotype$ is an isomorphism type of $k$ elements, the universe of $\redS$ partly consists of elements $([k], 1), \ldots, ([k], m)$, for some number $m$. These elements are added to the target structure for every occurrence of the type $\isotype$. We also call these $m$ elements \emph{fresh} and write $\fresh{\isotype} = m$ (and $\fresh{\isotype} = 0$ if no such element exists). 
The universe of $\redS$ also contains further elements of the form $(A, j)$ with $A \subsetneq [k]$. These represent elements that are added  for sets of elements with size $k' < k$ (in the intuitive explanation: in previous stages). If such an element $(A, j)$ occurs in the universe of $\redS$, there has to be a corresponding instruction to add this element, that is, the type $\isotype'$ of the set $A$ in $\isotype$ has to be in the support of $\rho$ and the element $([k'],j)$ has to be a fresh element in $\redS(\isotype')$.

The conditions (P3) and (P4) concern the relations of $\redS$. A tuple $((A_1, j_1), \ldots, (A_\ell, j_\ell))$ with $\bigcup_{i \leq \ell} A_i = [k]$ in a relation of $\redS$ says that this tuple is to be added to the target structure for every set of elements of type $\isotype$. No further conditions on these tuples are imposed by (P3) and (P4). 
If $A' \df \bigcup_{i \leq \ell} A_i$ is a proper subset of $[k]$, this tuple is added for the subset $A'$ of elements (intuitively: in a previous stage). Again, there needs to be another instruction that adds this tuple, that is, the isomorphism type $\isotype'$ of $A'$ needs to be in the support of $\rho$. 

If a subtype $\isotype'$ of $\isotype$ is in the support of $\rho$ then the corresponding instruction $(\isotype', \redS')$ needs to be respected: for every occurrence of $\isotype'$ in $\isotype$, a copy of the structure $\redS'$ needs to be present in $\redS$. Formally, if a set $A \subsetneq [k]$ with $|A| = k'$ has type $\isotype'$ in $\isotype$, as witnessed by some isomorphism $\pi$ from $\isotype'$ to $\restrict{\isotype}{A}$, the substructure of $\redS$ that is induced by the set $\{(A_i,j_i) \mid A_i \subseteq \pi([k']) \}$ is isomorphic to $\redS'$.

We now define the semantics of cookbook reductions. 
A cookbook reduction $\rho = \{(\isotype_1, \redS_1), \ldots, (\isotype_m, \redS_m)\}$ maps a $\schema$-structure $\struc$ to a set $\rho(\struc)$ of $\schema^\star$-structures, %
where $\schema$ is the schema of the isomorphism types $\isotype_i$ and $\schema^\star$ is the schema of the structures $\redS_i$. 
For some $\schema$-structure $\struc$, the $\schema^\star$-structure $\struc^\star$ is in $\rho(\struc)$ if the following conditions hold: %
\begin{enumerate}[(S1)]
    \item The universe $\univ{\struc^\star}$ of $\struc^\star$ consists of exactly those elements $(A, j)$ with $A \subseteq \univ{\struc}$ such that 
\begin{itemize}
 \item the isomorphism type $\isotype = \subtype_\struc(A)$ is in the support of $\rho$, and
 \item the structure $\redS$ with $(\isotype, \redS) \in \rho$ has the element $(\{1, \ldots, |A|\}, j)$ in its universe.
\end{itemize}
\item If a tuple $((A_1, j_1), \ldots, (A_\ell, j_\ell))$ is in some relation $R^{\struc^\star}$ of $\struc^\star$, for any $R \in \schema^\star$, then the isomorphism type $\subtype_\struc(\bigcup_{i \leq \ell} A_i)$ is in the support of $\rho$.
\item For any $(\isotype, \redS) \in \rho$ and any $A \subseteq \univ{\struc}$ with $\subtype_\struc(A) = \isotype$, there is an isomorphism $\pi$ from $\isotype$ to $\restrict{\struc}{A}$ such that the injective mapping $\hat\pi$ with $\hat\pi((A',j')) = (\pi(A'),j')$, for all $(A',j')$ in the universe of $\redS$, is an embedding from $\redS$ into $\struc^\star$.
\end{enumerate}

Intuitively, these conditions state that the elements (S1) and tuples (S3) of $\struc^\star$ can be obtained by transforming occurrences of an isomorphism type $\isotype$ in $\struc$ into $\redS$, for any $(\isotype, \redS) \in \rho$, and that no other tuples are present (S2). 

A cookbook reduction $\rho$ needs to satisfy the following semantic property\footnote{For global and node gadget reductions as introduced in Section~\ref{section:building-blocks}, this property is trivially satisfied, for edge gadget reductions it is satisfied if the gadget graph is symmetric. In general, the following syntactic restriction is necessary: For every $(\isotype, \redS) \in \rho$ and any automorphism $\pi$ of $\isotype$ there is an automorphism $\hat\pi$ of $\redS$ with $\hat\pi((A,j)) = (\pi(A),j')$, for any $(A,j)$ in the universe of $\redS$.}.
\begin{enumerate}[(P1)]
\setcounter{enumi}{4}
\item For every $\schema$-structure $\struc$, the set $\rho(\struc)$ is a non-empty set of isomorphic structures.
\end{enumerate}
Abusing notation, we usually write $\rho(\struc)$ to denote some arbitrary structure $\struc^\star \in \rho(\struc)$.

\section{The expressive power of cookbook reductions}\label{section:expressive-power}
In this section we study the expressive power of cookbook reductions. First, we explain how the  building blocks from Section \ref{section:specification-language} are captured by restricted cookbook reductions. Afterwards, we discuss the expressive power of general cookbook reductions and relate them to quantifier-free first-order interpretations. 

\subsection{From building blocks to cookbook reductions}

Cookbook reductions are a versatile reduction concept and as we have seen in the examples depicted in Figure~\ref{figure:example:informal-cookbook} and Figure~\ref{figure:example:cookbook-reduction}, many reductions have a small and easily understandable representation as cookbook reductions that have only few isomorphism types in their support. 

In fact, the building blocks for graph problems that we discussed as motivation for cookbook reductions can be recovered as restricted variants of cookbook reductions. %
For undirected graphs with only the binary edge relation $E$ and no self-loops, only four  isomorphism types of arity at most $2$ are relevant: the type $\isotypeGlobal$ of the graph with $0$ nodes, the type $\isotypeVertex$ of a single node, the type $\isotypeEdge$ of an undirected edge, and the type $\isotypeNonEdge$ of non-edges. 

We obtain the following characterization:
\begin{itemize}
    \item For a \emph{global gadget reduction}, the inserted global graph $\redS(\isotypeGlobal)$ is arbitrary. Nodes of the source instance are copied, so we fix $\fresh{\isotypeVertex} = 1$, but allow $\redS(\isotypeVertex)$ to arbitrarily select nodes from the global graph that are connected to every source node. Edges of the source are copied, so $\fresh{\isotypeEdge} =0$ and $\redS(\isotypeEdge)$ just adds the edge.%
 \item A \emph{node gadget reduction} replaces every node by some gadget, so $\redS(\isotypeVertex)$ is arbitrary. The reduction can define how these gadgets are connected in case there is an edge between the corresponding nodes in the source instance, resulting in $\fresh{\isotypeEdge} =0$ and $\redS(\isotypeEdge)$ being arbitrary apart from that.
\item An \emph{edge gadget reduction} replaces edges by some gadget. As every node from the source is copied to the target, $\redS(\isotypeVertex)$ is a single node. We allow any symmetric $\redS(\isotypeEdge)$.
\end{itemize}
Only the mentioned isomorphism types are in the support of the cookbook reduction. %

A similar characterization holds if the source graph is directed. %

Global, node or edge gadget reductions constitute expressive subclasses of cookbook reductions that are relatively easy to comprehend. More fragments can be defined by, e.g., setting an upper bound for $\fresh{\isotypeVertex}$ in a node gadget reduction, or selecting a different set of isomorphism types $\isotype$ for which $\redS(\isotype)$ needs to be provided.
This modularity of cookbook reductions helps finding decidable cases of the \textsc{Reduction?} problem. In a teaching context, instructors can select the degree of freedom students have.

\subsection{Relating cookbook reductions to quantifier-free interpretations}\label{subsec:cookbook-qf}
Quantifier-free first-order (FO) interpretations constitute a widely-used class of reductions with very low complexity, see, e.g., \cite{Immerman87}. They are still expressive enough to show hardness of problems: \SAT, the satisfiability problem for propositional formulas, is \NP-hard even under quantifier-free FO interpretations \cite{Dahlhaus83}.

In this section, we show that cookbook reductions can be expressed as quantifier-free FO interpretations. If we assume a linear order on the input structures, mildly restricted quantifier-free FO interpretations can be expressed as cookbook reductions. It follows that if input structures are linearly ordered, \SAT is \NP-hard under cookbook reductions.

We say that two reductions $\rho_1$ and $\rho_2$ are \emph{equivalent} for a source structure $\struc$ over the appropriate schema, if the target structures $\rho_1(\struc)$ and $\rho_2(\struc)$ are isomorphic.%

\begin{restatable}{theorem}{theoremCookbookIsQf}\label{th:cookbook_is_qf}
For every cookbook reduction $\rho$ there is a $d$-dimensional quantifier-free first-order interpretation $\Psi$, for some number $d$, such that $\rho$ and $\Psi$ are equivalent for every structure with at least $2$ elements.
\end{restatable}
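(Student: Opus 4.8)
The plan is to encode each element of the target structure by a tuple of source elements and to translate the semantic conditions (S1)--(S3) into quantifier-free formulas. Recall that every element of $\rho(\struc)$ has the form $(A,j)$ with $A \subseteq \univ{\struc}$ and $j \geq 1$; since the support of $\rho$ contains only isomorphism types of arity at most $r$ (the arity of $\rho$), we have $|A| \leq r$, and the index $j$ ranges over $\{1, \dots, f\}$, where $f$ is the largest number $\fresh{\isotype}$ occurring in an instruction of $\rho$. I would set the dimension to $d \df r + c$, where $c$ is a constant large enough that a $c$-tuple can exhibit at least $f + \fresh{\isotypeGlobal}$ pairwise distinct equality patterns of a fixed shape among its coordinates --- for instance patterns partitioning the positions into at most two blocks, of which there are $2^{c-1}$ and each of which is realised by some tuple as soon as two distinct source elements are available. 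This last point is the only place where the hypothesis $\univcard{\struc} \geq 2$ is used, and it is genuinely needed: a one-element source admits a single $d$-tuple, whereas $\rho$ may turn it into a structure with two elements. A $d$-tuple $\bar a = (a_1, \dots, a_r, b_1, \dots, b_c)$ is then declared to represent the global element $(\emptyset, j)$ if $(b_1, \dots, b_c)$ realises one of the $\fresh{\isotypeGlobal}$ patterns reserved for this purpose, encoding the index $j$ (the coordinates $a_1, \dots, a_r$ being ignored); and it represents $(A, j)$ with $A \df \{a_1, \dots, a_r\}$ if $\subtype_\struc(A)$ is in the support of $\rho$ and $(b_1, \dots, b_c)$ realises the pattern coding some index $j \leq \fresh{\subtype_\struc(A)}$. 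Since $A$ may be listed with repetitions, every nonempty subset of size at most $r$ is named by some tuple, so together with the reserved patterns every element of $\rho(\struc)$ is represented; moreover ``$\bar a$ represents some element'' is a Boolean combination of atomic formulas over $\bar a$, which we take as the domain formula $\varphi_\dom$.

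The congruence formula $\varphi_\sim(\bar x, \bar x')$ expresses that $\bar x$ and $\bar x'$ represent the same element: either both realise reserved patterns coding the same index, or neither does, $\{x_1, \dots, x_r\} = \{x'_1, \dots, x'_r\}$ (readily expressible without quantifiers), and the pattern blocks code the same index. This is visibly an equivalence relation, and it will be a congruence because the relation formulas below depend only on the represented elements. For $R \in \schema^\star$ of arity $\ell$, the formula $\varphi_R(\bar x_1, \dots, \bar x_\ell)$ --- which may assume its arguments already lie in $\hat\dom$ --- branches on the atomic type of the ``set-coordinates'' of the non-global $\bar x_i$ (a global $\bar x_i$ contributing the empty set). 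This data determines the sets $A_1, \dots, A_\ell$ abstractly, their union $B \df \bigcup_i A_i$, and the inclusions $A_i \subseteq B$. If $|B| > r$ or $\subtype_\struc(B)$ is not in the support, then by (S2) the tuple cannot lie in $R^{\rho(\struc)}$ and $\varphi_R$ is false; otherwise, writing $\isotype \df \subtype_\struc(B)$ and $\redS \df \redS(\isotype)$, the formula enumerates the at most $|B|!$ isomorphisms $\pi$ from $\isotype$ onto $\restrict{\struc}{B}$, each recognisable from the atomic type of the set-coordinates, and returns true iff $\bigl((\pi^{-1}(A_1), j_1), \dots, (\pi^{-1}(A_\ell), j_\ell)\bigr) \in R^{\redS}$; by (P2)--(P4) each $(\pi^{-1}(A_i), j_i)$ is indeed an element of $\redS$, so this test is meaningful. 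As a finite Boolean combination of atomic formulas, $\varphi_R$ is quantifier-free.

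It remains to verify that the map $\bar a \mapsto (A, j)$ descends to an isomorphism between $\Psi(\struc)$ --- the quotient of $\hat\Psi(\struc)$ by $\sim^{\hat\Psi(\struc)}$ --- and $\rho(\struc)$. The universes agree by (S1) together with the observations in the first paragraph; the relations agree because, for each $B$ of a supported type $\isotype$, (S3) furnishes an isomorphism $\pi$ whose lift $\hat\pi$ embeds $\redS$ into $\rho(\struc)$, so the $R$-tuples over subsets of $B$ present in $\rho(\struc)$ are precisely those read off from $R^{\redS}$ through $\pi$, while (S2) rules out any further $R$-tuples. I expect the real obstacle to be the well-definedness of $\varphi_R$: a priori two witnessing isomorphisms $\pi, \pi'$ for the same set $B$ might disagree about membership of $\bigl((\pi^{-1}(A_i), j_i)\bigr)_i$ in $R^{\redS}$, which would make the result of $\varphi_R$ ambiguous and possibly not isomorphic to $\rho(\struc)$. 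This is exactly what the automorphism-compatibility property of cookbook reductions --- the syntactic condition in the footnote to (P5) --- prevents: every automorphism of $\isotype$ lifts to an automorphism of $\redS$ that respects copy indices, so membership in $R^{\redS}$ does not depend on the choice of $\pi$ and the ``some $\pi$'' and ``all $\pi$'' readings coincide. The remaining checks --- that $\sim^{\hat\Psi(\struc)}$ is a congruence, that the quotient carries exactly the claimed elements and tuples, and that no spurious elements arise from tuples $\bar a$ outside the intended coding --- are then routine bookkeeping.
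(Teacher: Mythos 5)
Your encoding scheme (first $r$ coordinates list the set $A$ with repetitions, an equality pattern on $c$ extra coordinates codes the copy index, two distinct elements are needed to realise the patterns) is essentially the paper's own strategy, which goes through an $\ell$-copying interpretation and then eliminates the copy constants using two distinct elements. The genuine gap is at the point you yourself flag and then dismiss: well-definedness of $\varphi_R$ under your canonical ``same set, same index'' congruence. You justify it by reading the footnote to (P5) as saying that every automorphism $\pi$ of $\isotype$ lifts to an automorphism of $\redS(\isotype)$ \emph{that respects copy indices}; the condition actually only requires $\hat\pi((A,j)) = (\pi(A),j')$, where $j'$ may differ from $j$. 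This is not a cosmetic difference. Take the cookbook reduction on undirected graphs with $\fresh{\isotypeVertex}=1$ and $\fresh{\isotypeEdge}=2$, where $\redS(\isotypeEdge)$ consists of the two inherited endpoints, two fresh nodes $(\{1,2\},1),(\{1,2\},2)$, and exactly the edges $\{(\{1\},1),(\{1,2\},1)\}$ and $\{(\{2\},1),(\{1,2\},2)\}$. The swap automorphism of $\isotypeEdge$ lifts only to the automorphism exchanging the two fresh nodes, all per-edge choices give isomorphic targets, so (P5) and the footnote condition hold and this is a legitimate cookbook reduction. But for a source edge $\{u,v\}$, whether the tuples representing $(\{u\},1)$ and $(\{u,v\},1)$ should be $E$-related depends on which isomorphism $\pi$ from $\isotypeEdge$ to $\restrict{\struc}{\{u,v\}}$ you use: your ``some $\pi$'' reading makes both endpoints adjacent to both fresh nodes, the ``all $\pi$'' reading makes neither adjacent, and neither quotient is isomorphic to $\rho(\struc)$. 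So the ``real obstacle'' you identified is real, and your resolution of it fails.

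Repairing this requires exactly the machinery your construction discards and the paper's proof keeps: treat each \emph{ordered} enumeration of $A$ (the order in which the first coordinates list $A$) as carrying its own copy of $\redS(\isotype)$, define relations only within a fixed enumeration by reading them off $\redS(\isotype)$, and let $\varphi_\sim$ identify elements of different copies via the lifted automorphisms $\hat\pi$, which are allowed to permute the fresh-element indices (this is the paper's $\varphi^1_\sim$, together with the analogous identification for inherited elements). With a quantifier-free interpretation there is no way to pick a canonical enumeration of an unordered set, so a congruence that forgets the enumeration and keeps only $(A,j)$ cannot orient the gadgets consistently; the final isomorphism with $\rho(\struc)$ is therefore not ``routine bookkeeping'' under your setup, but the step where the argument breaks.
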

\begin{proofidea}
Suppose that for a cookbook reduction $\rho = \{(\isotype_1, \redS_1), \ldots, (\isotype_m, \redS_m)\}$ the maximal arity of an isomorphism type $\isotype_i$ is $k$ %
and $\ell$ is the maximal size of the universe of a structure $\redS_i$. The interpretation $\Psi$ intuitively creates for each set of elements of type $\isotype_i$ a copy of the structure $\redS_i$, so, defines a universe of elements of the form $(A, i)$, where $|A| \leq k$ and $i \leq \ell$. Such elements can be encoded by tuples of length $d \df k+\ell+1$. Quantifier-free formulas can determine the isomorphism type of a set of elements and, by the properties of a cookbook reduction, whether a tuple $((A, i_1), \ldots, (A,i_r))$ exists in the interpreted structure only depends on the isomorphism type of $A$. Details can be found in the appendix.
\end{proofidea}

We call a first-order interpretation \emph{set-respecting} if, for the equivalence relation defined by the formula $\varphi_\sim(\tpl x_1, \tpl x_2)$, two tuples $\tpl a_1, \tpl a_2$ are only in the same equivalence class if $\tpl a_1$ and $\tpl a_2$ contain the same set of elements.

\begin{restatable}{theorem}{theoremQftoCookbook}\label{th:qf_to_cookbook}
For every set-respecting quantifier-free first-order interpretation $\Psi$ there is a cookbook reduction $\rho$ such that $\rho$ and $\Psi$ are equivalent for every structure with a linearly ordered universe.
\end{restatable}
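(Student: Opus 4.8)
The plan is to exploit the rigidity that a linear order imposes on small substructures. The key observation: over a linearly ordered universe, a $d$-tuple $\tpl a$ is determined, up to its quantifier-free type, by the set $A$ of elements occurring in $\tpl a$ (so $|A|\le d$), the $\schema$-isomorphism type $\subtype_\struc(A)$ of the induced structure — which, since $\leq$ already forbids non-trivial automorphisms, comes with a canonical increasing enumeration $\iota_A\colon[\,|A|\,]\to A$ — and the \emph{pattern} of $\tpl a$, i.e.\ the surjection $p\colon[d]\to[\,|A|\,]$ with $\tpl a=(\iota_A(p(1)),\dots,\iota_A(p(d)))$. Hence whether a quantifier-free formula holds of a tuple of $d$-tuples depends only on the isomorphism type of the union of their underlying sets together with the patterns; this is what allows the global behaviour of $\Psi$ to be packaged into finitely many gadgets indexed by isomorphism types.

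Concretely, assume $\leq\in\schema$ and that all considered structures interpret it as a linear order. Given a set-respecting $d$-dimensional quantifier-free $\Psi=(\varphi_\dom,\varphi_\sim,(\varphi_R)_R)$, let $r_{\max}$ be the maximal arity in $\schema^\star$ and $D\df\max(d,\,r_{\max}\cdot d)$. Take the support of $\rho$ to be \emph{all} linearly ordered $\schema$-structures of arity at most $D$ (finitely many up to isomorphism, all rigid, so every automorphism side-condition — in particular the footnote to (P5) — is vacuous). For a type $\isotype$ of arity $k$ call a surjection $p\colon[d]\to[k]$ \emph{$\dom$-admissible} if $\isotype\models\varphi_\dom(p(1),\dots,p(d))$; on the $\dom$-admissible patterns let $p_1\approx_\isotype p_2$ iff the corresponding realizations over $\isotype$ are $\sim$-related. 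This is a well-defined equivalence relation, since $\sim^{\hat\Psi(\struc)}$ is an equivalence relation, the realization map is injective on that block, and — crucially using that $\Psi$ is set-respecting, so $\sim$ never crosses an underlying-set block — the quantifier-free type of the pair of realizations depends only on $(\isotype,p_1,p_2)$. Set $\fresh{\isotype}$ to be the number of $\approx_\isotype$-classes, fix a representative $p^j_\isotype$ for each class $j$, declare
\[
\univ{\redS(\isotype)}=\{(A',j)\;:\;A'\subseteq[k],\ 1\le j\le\fresh{\subtype_\isotype(A')}\},
\]
and, for $R\in\schema^\star$ of arity $r$, put $((\beta_1,j_1),\dots,(\beta_r,j_r))$ (with $\beta_i\subseteq[k]$, $1\le j_i\le\fresh{\subtype_\isotype(\beta_i)}$) into $R^{\redS(\isotype)}$ exactly when $\isotype\models\varphi_R(\tpl q_1,\dots,\tpl q_r)$, where $\tpl q_i$ is the $d$-tuple over $\isotype$ obtained by composing the increasing enumeration $[\,|\beta_i|\,]\to\beta_i$ with $p^{j_i}_{\subtype_\isotype(\beta_i)}$.

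The verification is then unwinding of definitions. (P1)–(P3) are immediate from the shape of $\univ{\redS(\isotype)}$ and from $D\ge k$; (P2) uses that $\fresh{\subtype_\isotype([k])}=\fresh{\isotype}$. For (P4), the unique (order-preserving, hence $\schema$-) isomorphism $\pi_A$ from $\subtype_\isotype(A)$ onto $\restrict{\isotype}{A}$ carries the canonical enumerations and the chosen representative patterns along with it, and quantifier-free formulas transfer between $\restrict{\isotype}{A}$ and $\isotype$ for tuples with all entries in $A$; so $\hat\pi_A$ is the required embedding. (P5) holds because rigidity makes the structure described by (S1)–(S3) literally unique. Finally, for equivalence, send the $\sim$-class of $\tpl a\in\hat\Psi(\struc)$ with underlying set $A$ to $(A,j)$, where $j$ indexes the $\approx_{\subtype_\struc(A)}$-class of the pattern of $\tpl a$; set-respecting makes this well-defined, the universe displays above make it a bijection, and for relations one replaces each $\tpl a_i$ by the canonical realization of its class (legitimate because $\sim^{\hat\Psi(\struc)}$ is a \emph{congruence}) and matches it against the membership rule for $R^{\redS}$ via the isomorphism $\iota_C$ on $C=\bigcup_i(\text{underlying set of }\tpl a_i)$.

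I expect the only real work to be the bookkeeping around (P4) and the relation-preservation step: one must track how a relational tuple that is ``born'' at the union-set $C$ of type $\subtype_\struc(C)$ gets propagated, through the embeddings of (P4), into the gadget of every larger type, and then argue that the tuples thereby forced into the structure by (S1)–(S3) are exactly those dictated by $\varphi_R$ — with the congruence property of $\sim$ being precisely what makes ``$\varphi_R$ holds of this tuple of $\sim$-classes'' well-defined at the gadget level. The one easy-to-miss point is the arity bound: it must be $D=r_{\max}\cdot d$, not $d$, because a single relational tuple of $\Psi(\struc)$ can mention up to $r_{\max}$ many $d$-tuples whose underlying sets need not overlap.
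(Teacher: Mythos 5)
Your proof is correct and takes essentially the same route as the paper's (sketched) argument: fresh elements $(A,j)$ indexed by the $\sim$-equivalence classes of $\varphi_\dom$-satisfying $d$-tuples whose underlying set is exactly $A$, made canonical via the increasing enumeration that the linear order provides, with the relations of each $\redS(\isotype)$ defined as dictated by the formulas $\varphi_R$ and well-defined by the congruence property. Your explicit point that the support must also contain (fresh-element-free) types of arity up to $r_{\max}\cdot d$, so that relation tuples spanning several $d$-tuples with disjoint underlying sets can be created, is a genuine detail that the paper's proof idea glosses over, and your treatment of it is right.
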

\begin{proofidea}
Let $d$ be the dimension of $\Psi$. For every isomorphism type $\isotype$ of $k \leq d$ elements, the number $\ell$ of elements $([k],1), \ldots, ([k], \ell)$ in the universe of $\redS(\isotype)$, so, the number of elements added to the target structure because of a set of elements with isomorphism type $\isotype$, is equal to the number of equivalence classes of the congruence defined by $\varphi_\sim$ on the set of $d$-tuples that contain exactly the $k$ elements of $\isotype$ and satisfy the formula $\varphi_\dom$ of $\Psi$. We identify each of the $\ell$ elements with a particular $d$-tuple over the set $[k]$, which is possible as $[k]$ is linearly ordered. The structure $\redS(\isotype)$ is then defined as dictated by $\Psi$.
\end{proofidea}

As \SAT is \NP-hard under set-respecting quantifier-free FO interpretations \cite{Dahlhaus83}, we obtain:

\begin{corollary}\label{cor:nphard_cookbook}
Assuming that input structures are linearly ordered, \SAT is \NP-hard under cookbook reductions.
\end{corollary}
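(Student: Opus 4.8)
The plan is to obtain the corollary as an immediate combination of Theorem~\ref{th:qf_to_cookbook} with Dahlhaus's classical hardness result \cite{Dahlhaus83}. Recall that the latter provides, for every problem $P \in \NP$ over linearly ordered structures, a quantifier-free first-order interpretation $\Psi_P$ with $\struc \in P \iff \Psi_P(\struc) \in \SAT$ for every ordered input structure $\struc$. The first step is to observe that $\Psi_P$ can be taken to be \emph{set-respecting}: in the Cook--Levin style construction underlying the result, the domain elements of the interpreted structure (encoding the propositional variables and clauses of the formula produced on input $\struc$) correspond bijectively to $d$-tuples over $\univ{\struc}$, so the congruence formula $\varphi_\sim$ may be chosen to be equality of tuples, which is trivially set-respecting. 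If one instead starts from an off-the-shelf interpretation whose $\varphi_\sim$ is not literally equality, a routine inspection of the construction shows that any two identified tuples carry the same underlying set of elements, which is all that the definition demands.

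Given this, the second step applies Theorem~\ref{th:qf_to_cookbook} to each $\Psi_P$: since $\Psi_P$ is a set-respecting quantifier-free interpretation, there is a cookbook reduction $\rho_P$ such that $\rho_P(\struc)$ and $\Psi_P(\struc)$ are isomorphic for every structure $\struc$ with a linearly ordered universe. As \SAT is isomorphism-closed, this yields $\struc \in P \iff \Psi_P(\struc) \in \SAT \iff \rho_P(\struc) \in \SAT$ for every ordered $\struc$, so $\rho_P$ is a cookbook reduction from $P$ to \SAT. Since $P \in \NP$ was arbitrary, \SAT is \NP-hard under cookbook reductions on linearly ordered input structures, which is the claim.

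The only genuine — and mild — obstacle is the first step: verifying that a reduction witnessing \NP-hardness of \SAT under quantifier-free interpretations may be chosen set-respecting, so that Theorem~\ref{th:qf_to_cookbook} is applicable. This requires peeking inside the standard construction rather than using it as a black box; everything else is bookkeeping. I would deliberately invoke Dahlhaus's theorem in its ``for every $P \in \NP$'' form and establish the set-respecting property once and uniformly, rather than reducing a single \NP-complete problem and then composing, since closure of cookbook reductions under composition is not something we have available here.
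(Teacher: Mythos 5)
Your proposal is correct and follows essentially the same route as the paper: the corollary is obtained by combining Dahlhaus's result that \SAT is \NP-hard under set-respecting quantifier-free first-order interpretations with Theorem~\ref{th:qf_to_cookbook}, which converts each such interpretation into an equivalent cookbook reduction on linearly ordered structures. The only difference is presentational — you spell out why the Dahlhaus interpretations may be taken set-respecting and why one should use the ``for every $P \in \NP$'' form rather than composing reductions, details the paper subsumes into its citation.
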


Note that in descriptive complexity theory one often studies relational input structures that are not linearly ordered (although Immerman usually assumes a linear order to be present \cite[Proviso 1.14]{Immerman87}). However, when considering Turing machines as models of computation in complexity theory, inputs are binary string encodings and therefore linearly ordered. 

\section{Towards automated correctness tests and feedback}
\label{section:algorithms}

We now turn to the problem of checking whether a given reduction candidate is a valid reduction between two computational problems $P$ and $P^\star$. 
In a first variation of this problem, a corresponding algorithm gets as input the reduction candidate $\rho \in \calR$ as well as the two problems $P \in \calC$ and $P^\star \in \calC^\star$, for a fixed class $\calR$ of reductions and fixed complexity classes $\calC$ and $\calC^\star$. Formally, this corresponds to solving the following algorithmic problem \redgen[\calR][\calC][\calC^\star], parameterized by $\calC$, $\calC^\star$, and $\calR$. Also fixing the problems $P$ and $P^\star$ yields the special case \redgen[\calR][P][P^\star].

\begin{minipage}[t]{0.47\textwidth}
\vspace{1mm}
\algorithmicProblemDescription[4.7cm]{\redgen[\calR][\calC][\calC^\star]}{Algorithmic problems $P \in \calC$, $P^\star \in \calC^\star$, and a reduction $\rho \in \calR$.}{Is $\rho$ a reduction from $P$ to $P^\star$?}
\vspace{3mm}
\end{minipage}
\begin{minipage}[t]{0.47\textwidth}
\vspace{1mm}
\algorithmicProblemDescription[4.7cm]{\redgen[\calR][P][P^\star]}{A reduction $\rho \in \calR$.}{Is $\rho$ a reduction from $P$ to $P^\star$?}
\vspace{3mm}
\end{minipage}

We are slightly vague here, as for the moment we leave open how algorithmic problems and reductions are represented. It will be clear how these are represented for all classes $\calC$, $\calC^\star$ and $\calR$ we will consider.
For standard classes of reductions, 
-- including reductions computable in polynomial time or logarithmic space, as well as first-order definable reductions -- 
already the second, more restricted problem is clearly undecidable for all non-trivial $P$ and $P^\star$. Already testing whether a quantifier-free interpretation or even an edge gadget reduction reduces from some problem $P$ to another problem $P^\star$ is undecidable, for simple $P$ and $P^\star$. As soon as $ P $ or $ P^\star $ are part of the input, the \textsc{Reduction?} problem is undecidable in most cases in which one of the classes $ \calC $ or $ \calC^\star $ is defined by an undecidable fragment of second-order logic, even for very simple classes of reductions. The proof of the following theorem is in Appendix \ref{section:algorithms:undecidability}.

\begin{restatable}{theorem}{theoremAlgorithmicHardnessOfReductions}\label{theorem:algorithmic-hardness-of-reductions}
 	\begin{enumerate}
	 \item 	\redgen[\calR][P][P^\star] is undecidable for the following parameters:
	\begin{enumerate}
		\item The class $\calR$ of first-order interpretations, $P = \emptyset$ and arbitrary  $P^\star$ (or vice versa, i.e. arbitrary $P$ and $P^\star = \emptyset$).
		\item The class $\calR$ of edge gadget reductions, $P = \emptyset$ and some graph problem $P^\star$ definable in first-order logic with arithmetic. 
	 \item The class $\calR$ of quantifier-free interpretations, $P = \emptyset$ and the graph problem $P^\star$ defined by the first-order formula $\formbis \df  \forall x \exists y E(x,y)$.

	\end{enumerate}
	\item \redgen[\calR][\calC][\calC^\star] is undecidable for the following parameters:
	\begin{enumerate}
		\item A class $\calR$ containing the identity mapping, a class $\calC$ containing the empty problem, and a class $\calC^\star$ defined by a fragment of second-order logic with undecidable finite satisfiability problem.
		\item A class $\calR$ containing the identity mapping, a class $\calC$ defined by a fragment of second-order logic with undecidable finite satisfiability problem, and a class $\calC^\star$  containing the empty problem.
	\end{enumerate}
	\end{enumerate}
\end{restatable}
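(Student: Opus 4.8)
The plan is to prove every item by reducing an undecidable finite‑satisfiability or halting problem to \textsc{Reduction?}, exploiting that validating a reduction collapses to a (non‑)satisfiability question as soon as one of $P,P^\star$ is empty: a mapping $\rho$ is a reduction from $\emptyset$ to $P^\star$ iff no structure $\struc$ has $\rho(\struc)\in P^\star$, and it is a reduction from $P$ to $\emptyset$ iff $P=\emptyset$. For Part~2 this is immediate. For (a), given a sentence $\Phi$ from the second‑order fragment with undecidable finite satisfiability, I would output the instance $P:=\emptyset\in\calC$, $P^\star:=\{\struc\mid\struc\models\Phi\}\in\calC^\star$, and $\rho:=\mathrm{id}\in\calR$; then $\rho$ is a reduction iff $\Phi$ has no finite model, so finite (un)satisfiability of the fragment reduces to \redgen[\calR][\calC][\calC^\star]. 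Part~2(b) is the mirror image, taking $P:=\{\struc\mid\struc\models\Phi\}\in\calC$, $P^\star:=\emptyset\in\calC^\star$, and again $\rho:=\mathrm{id}$.

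For Part~1(c) I would reduce from finite satisfiability of first‑order logic (Trakhtenbrot) and keep $P^\star$ fixed to the graph problem defined by $\formbis$. Given a first‑order sentence $\varphi$, first pass to an equi‑finitely‑satisfiable $\Pi_2$‑sentence $\varphi'=\forall\bar x\,\exists\bar y\,\theta(\bar x,\bar y)$ over an expanded schema, obtained by introducing a relation symbol for each subformula of $\varphi$ together with its defining axioms (the Boolean and quantifier clauses, each $\Pi_1$ or $\Pi_2$, plus the unit clause for the top formula) and merging these into a single $\Pi_2$‑sentence; $\varphi'$ has a finite model iff $\varphi$ does. Then let $\rho_\varphi$ be the quantifier‑free interpretation of dimension $|\bar x|$ with trivial domain formula, $\varphi_\sim$ the identity on tuples, and $\varphi_E:=\theta$ (after padding $\bar y$). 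One checks $\rho_\varphi(\struc)\models\formbis$ iff $\struc\models\varphi'$, so $\rho_\varphi$ is a reduction from $\emptyset$ to $P^\star$ iff $\varphi$ is finitely unsatisfiable, and the claimed undecidability follows. (Reading $\formbis$ over undirected graphs only forces a cosmetic symmetrisation of $\varphi_E$ plus one Boolean coordinate so the pull‑back stays $\Pi_2$.)

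Part~1(a) runs along the same lines but with the more generous class of full first‑order interpretations. Fixing $\calA\in P^\star$ and $\calB\notin P^\star$ (possible since $P^\star$ is non‑trivial), and after conjoining a cardinality requirement to $\varphi$ and treating the finitely many structures too small to host $\calA$ or $\calB$ separately, I would let $\rho_\varphi$ be the first‑order interpretation that outputs a copy of $\calA$ when the input satisfies $\varphi$ and a copy of $\calB$ otherwise — first‑order definable because $\calA,\calB$ are fixed and ``$\struc\models\varphi$'' is first‑order. Then $\rho_\varphi$ meets $P^\star$ in its image iff $\varphi$ has a finite model; the ``vice versa'' variant with $P^\star=\emptyset$ is the situation already handled in Part~2(b). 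Part~1(b) is the most delicate: I would fix $P^\star$ to be an $\FOar$‑definable graph problem hard‑coding a universal machine, namely all graphs that encode (via the built‑in linear order and arithmetic, with suitably marked components) a machine program together with a halting run consistent with it — initial configuration, step‑by‑step consistency against the encoded transition table, and acceptance are all local conditions, hence $\FOar$‑expressible. For a machine $M$ one then builds the edge gadget reduction $\rho_M$ whose marked gadget encodes the program of $M$; applied to a source graph encoding a candidate run, $\rho_M$ yields a graph carrying both program and run, so its image meets $P^\star$ iff $M$ halts on the empty input, i.e.\ $\rho_M$ is a reduction from $\emptyset$ to $P^\star$ iff $M$ does not halt.

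The hard part will be Part~1(b): I must design one $\FOar$‑definable target problem together with a family of edge gadgets so that the extreme weakness of edge gadget reductions — a fixed gadget cloned onto every edge, with source vertices merely copied — still suffices to transport an arbitrary machine program into the target instance while the source instance carries the run, and so that the decoding and consistency checks remain inside first‑order logic with arithmetic. Part~1(c) needs only the routine $\Pi_2$‑normalisation, Part~1(a) is straightforward once the degenerate small instances are isolated, and Part~2 requires essentially no work.
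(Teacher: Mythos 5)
Your proposal follows the paper's strategy almost everywhere it is complete. Part 2 and the main case of Part 1(a) are handled exactly as in the paper (the identity mapping, plus the observation that a candidate is a reduction into or out of the empty problem iff a finite (un)satisfiability condition holds, with the small-structure degeneracies treated separately). For Part 1(c) you reach the same construction as the paper -- a quantifier-free interpretation whose edge formula is the quantifier-free matrix, so that $\formbis$ pulls back to the given $\forall^*\exists^*$ sentence -- differing only in how you obtain the $\Pi_2$ form: the paper cites undecidability of finite satisfiability for the $\forall^*\exists^*$ class already on graphs, while you normalise an arbitrary sentence via subformula predicates; that is a legitimate variant (it just moves the source vocabulary away from graphs, which is harmless since $P=\emptyset$). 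One small mismatch: you dispose of the ``vice versa'' clause of 1(a) by pointing to Part 2(b), but there the source problem is part of the input (a formula), whereas in 1(a) both problems are fixed and only $\rho$ is given -- for fixed $P$ and $P^\star=\emptyset$ the answer does not depend on $\rho$ at all -- so the delegation is not a proof (the paper's own one-line treatment is equally silent on this clause).

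The genuine gap is Part 1(b), which you yourself flag as unfinished. Your architecture is the paper's (there, a \PCP instance plays the role of your machine: the instance is carried into the target by the edge gadget, the candidate solution by the source graph, and a fixed target problem verifies consistency), but the step you assert -- that initial configuration, step-by-step consistency against the encoded program, and acceptance are ``local conditions, hence $\FOar$-expressible'' -- is exactly where the difficulty sits. The run has to be \emph{decoded from the graph structure} of $\rho_M(G)$, and the natural encodings (positions in unary as neighbour counts, or as distances along paths) require counting, which $\FOar$, i.e.\ uniform $\AC^0$, cannot do in general. The paper resolves this in two stages: it first carries out the verification in uniform $\TC^0$, and then lifts to $\FOar$ by a padding argument -- the encoding conventions force the solution-carrying part of the instance to be polylogarithmically small, so that only polylogarithmically many numbers of polylogarithmic size need to be added and compared, which is within $\FOar$. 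Without such a device (or an alternative encoding that avoids counting altogether), your 1(b) does not go through as stated, and since everything else in the theorem is routine, this is the part where the proof still has to be done.
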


In the rest of this section, we explore how to overcome the undecidability barriers. That is, we explore for which parameters one can obtain algorithms for solving \redgen[\calR][P][P^\star]  and \redgen[\calR][\calC][\calC^\star]. Our focus is on (restrictions of) cookbook reductions.

We start by exhibiting toy examples for algorithms for \algorithmicProblem{Reduction?$(P, P^\star, \calR)$}  for concrete algorithmic problems $P$ and $P^\star$ in Section \ref{section:explicit-problems}. 
For these examples, counterexamples can be provided if the input is not a correct reduction. A generalized view is taken in Section \ref{section:algorithm-templates}, where we exhibit algorithm templates for \algorithmicProblem{Reduction?$(P, P^\star, \calR)$} for algorithmic problems $P$ and $P^\star$ selected from classes of problems. Then, in Section \ref{section:formulas-as-inputs}, we consider algorithmic problems as part of the input by studying \algorithmicProblem{Reduction?$(\calC, \calC^\star, \calR)$}.

\subsection{Warm-up: Reductions between explicit algorithmic problems}\label{section:explicit-problems}

In this section we provide toy examples of how \algorithmicProblem{Reduction?$(P, P^\star, \calR)$} can be decided for very restricted classes $\calR$: (1) for reducing $k$-\Clique to $\ell$-\Clique via global gadgets, for $k < \ell$, (2) for reducing  $k$-\VC to $k$-\FVS via edge gadgets, and (3) for reducing \HCd to \HCu via restricted node gadgets. 
In all cases, the decision procedures are obtained by characterizing the class of correct gadgets.

While not deep, these characterizations and the algorithms resulting from them are a first step towards more general results. %

We start by characterizing those global gadgets that reduce $k$-\Clique to $\ell$-\Clique. For simplicity, we represent global gadget reductions $\rho$ by a global gadget $\gadget_\rho$ and a distinguished subset $A$ of its nodes. When applying $\rho$ to a  graph $G = (V, E)$, the gadget $\gadget_\rho$ is disjointly added to $G$ and edges $(u,v)$ are introduced for all $u \in A$ and all~$v \in V$.

\begin{restatable}{proposition}{theoremConcreteProblemskCliquetolClique}\label{theorem:kclique-to-lclique}
    Let $\rho$ be a global gadget reduction with global gadget $\gadget_\rho$ and a distinguished subset $A$ of its nodes. Let $k, \ell \in \N$ with $k < \ell$. Then the following are equivalent:
    \begin{enumerate}
     \item $ \rho $ is a reduction from $k$-{\Clique} to
    $\ell$-{\Clique}
    \item $\gadget_\rho$ and $A$ satisfy the following conditions:
    \begin{enumerate}
        \item $\gadget_\rho$ has no $\ell$-clique
        \item $\gadget_\rho$ has an $(\ell-k)$-clique contained in $A$
        \item $\gadget_\rho$ has no $(\ell-k+1)$-clique contained in $A$
    \end{enumerate}%
	\end{enumerate}
	Furthermore, if $\rho$ is not a reduction from $k$-{\Clique} to $\ell$-{\Clique}, then a counterexample can be computed efficiently.
\end{restatable}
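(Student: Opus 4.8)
The plan is to prove the equivalence by analyzing cliques in the target graph $\rho(G) = G \uplus \gadget_\rho$ (with extra edges from $A$ to all of $V$). The key structural observation is that any clique $K$ in $\rho(G)$ splits as $K = K_V \cup K_\gadget$, where $K_V \subseteq V$ and $K_\gadget \subseteq V(\gadget_\rho)$; since all edges between the two parts go through $A$, the set $K_\gadget$ must actually be a clique contained in $A$ whenever $K_V \neq \emptyset$, and $K_V$ must be a clique in $G$. Conversely, given a clique $K_V$ in $G$ and a clique $K_\gadget$ in $A$, their union is a clique in $\rho(G)$. So the maximum clique size in $\rho(G)$, restricted to cliques that meet $V$, is $\omega(G) + \alpha_A$, where $\alpha_A$ denotes the size of the largest clique of $\gadget_\rho$ contained in $A$; and there are also purely-gadget cliques of size up to $\omega(\gadget_\rho)$.

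For the direction (2) $\Rightarrow$ (1): assume (a),(b),(c). If $G$ has a $k$-clique, take it together with an $(\ell-k)$-clique inside $A$ (exists by (b)) to get an $\ell$-clique in $\rho(G)$. If $G$ has no $k$-clique, i.e. $\omega(G) \le k-1$, then any clique of $\rho(G)$ meeting $V$ has size at most $(k-1) + (\ell-k) = \ell - 1$ by (c) (which says the largest $A$-clique in $\gadget_\rho$ has size at most $\ell-k$), and any clique inside $\gadget_\rho$ alone has size at most $\ell-1$ by (a); hence $\rho(G)$ has no $\ell$-clique. This establishes $G \in k\text{-}\Clique \Leftrightarrow \rho(G) \in \ell\text{-}\Clique$.

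For the direction (1) $\Rightarrow$ (2), I would prove the contrapositive of each failing condition by exhibiting a concrete graph $G$ that breaks the reduction — these graphs double as the promised counterexamples. If (a) fails, $\gadget_\rho$ has an $\ell$-clique, so $\rho(\emptyset) = \gadget_\rho$ already contains an $\ell$-clique while $\emptyset$ has no $k$-clique (here $k \ge 1$; the case $k = 0$ is trivial/degenerate and can be dismissed separately). If (b) fails, take $G = K_k$, the complete graph on $k$ vertices: it has a $k$-clique, but in $\rho(K_k)$ a maximum clique meeting $V$ has size $k + \alpha_A$ with $\alpha_A \le \ell - k - 1$, giving size $\le \ell - 1$, and by (a)-type reasoning we may also use the failure of (b) together with whether (a) holds — if (a) also fails we've already won, so assume (a) holds, then gadget-only cliques are $\le \ell-1$ too, so $\rho(K_k)$ has no $\ell$-clique. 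If (c) fails, $\gadget_\rho$ has an $(\ell-k+1)$-clique inside $A$; then $\rho$ applied to the edgeless graph $G = \overline{K_{k-1}}$ on $k-1$ isolated vertices (which has no $k$-clique) yields a clique of size $(k-1) + (\ell-k+1) = \ell$ in $\rho(G)$, so $\rho(G) \in \ell\text{-}\Clique$ while $G \notin k\text{-}\Clique$.

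The main subtlety — not an obstacle so much as a bookkeeping point — is handling the interaction between conditions (a) and (b)/(c) when building counterexamples: a single violated condition among (b),(c) only forces a contradiction under the assumption that the others, particularly (a), hold, so the contrapositive argument should be organized as "if any of (a),(b),(c) fails, produce a bad $G$", treating (a) first and then (b),(c) under the standing assumption that (a) holds. One should also double-check the small/degenerate cases ($k=0$, or $A$ empty, or $\gadget_\rho$ empty) separately, and verify that the counterexample graphs ($\emptyset$, $K_k$, $\overline{K_{k-1}}$) are all computable in constant time from $\rho$, which gives the "computed efficiently" clause for free.
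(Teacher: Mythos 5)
Your argument follows the paper's own route: the same split of a clique of $\rho(G)$ into a clique of $G$ plus a clique of $\gadget_\rho$ contained in $A$, the same use of the empty graph when (a) fails, and the same use of the complete graph on $k$ vertices when (b) fails under the standing assumption that (a) holds. The forward direction (2)$\Rightarrow$(1) is correct and essentially identical to the paper's.

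There is, however, a genuine error in your counterexample for the case that (c) fails. You take $G$ to be the \emph{edgeless} graph on $k-1$ vertices and claim that these $k-1$ vertices together with an $(\ell-k+1)$-clique inside $A$ form an $\ell$-clique in $\rho(G)$. They do not: edges among source vertices in $\rho(G)$ are exactly the edges of $G$, so the $k-1$ isolated vertices are pairwise non-adjacent, and for $k\geq 3$ any clique of $\rho(G)$ meeting $V$ uses at most one source vertex. Nothing then forces $\rho(G)$ to contain an $\ell$-clique (assuming (a) holds, it cannot), so the necessity of (c) is not established and the "counterexample" is not one. The fix is exactly the paper's choice: let $G$ be the \emph{complete} graph on $k-1$ vertices, which is still a negative instance of $k$-\Clique; its vertex set together with the $(\ell-k+1)$-clique in $A$ is a genuine $\ell$-clique in $\rho(G)$, since every node of $A$ is joined to every source node, and this works without any assumption on (a) or (b). With that replacement (and keeping your remark that the graphs used are computable in constant time), your proof matches the paper's.
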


We next characterize those edge gadgets that constitute a reduction from $k$-\VC to $k$-\FVS. We represent edge gadget reductions $\rho$ by an edge gadget $\gadget_\rho$ with two distinguished nodes $c$ and $d$. When applying $\rho$ to a  graph $G = (V, E)$, all edges $(u,v) \in E$ are replaced by disjoint copies of $\gadget_\rho$, where $u, v$ are identified with $c, d$, respectively. %

\begin{restatable}{proposition}{theoremConcreteProblemsVCtoFVS}\label{theorem:vc-to-fvs}
    Let $ \rho $ be an edge gadget reduction based on the edge gadget $ \gadget_\rho $ with distinguished nodes $c$ and $d$. Then the following are equivalent:
    \begin{enumerate}
     \item $ \rho $ is a reduction from $k$-{\VC} to
    $k$-{\FVS}
    \item $\gadget_\rho$ satisfies the following conditions:
    \begin{enumerate}
        \item\label{char:vc-fvs-fvs} $ \{ c \} $  and $ \{ d \} $ are feedback vertex
            sets of $ \gadget_\rho $
        \item\label{char:vc-fvs-cycle} $ \emptyset $ is not a feedback vertex set of $ \gadget_\rho $.
    \end{enumerate}%
  \end{enumerate}
	Furthermore, if $\rho$ is not a reduction from $k$-{\VC} to $k$-{\FVS}, then a counterexample can be computed efficiently.
\end{restatable}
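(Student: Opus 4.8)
The plan is to prove the equivalence by analyzing how a feedback vertex set of the target graph $\rho(G)$ decomposes into choices inside the individual edge-gadget copies, and relating this to vertex covers of the source graph $G$. Recall that $\rho(G)$ consists of one disjoint copy of $\gadget_\rho$ for each edge $(u,v)\in E$, with the distinguished nodes $c,d$ identified with the endpoints $u,v$; the ``old'' vertices of $\rho(G)$ are exactly the copies of source vertices (shared between all gadgets on edges incident to them), and the ``fresh'' vertices are private to each gadget copy.

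\textbf{From (1) to (2).} I would argue contrapositively, exhibiting for each failing condition a concrete counterexample graph $G$ that is easy to describe. If condition (a) fails — say $\{c\}$ is not a feedback vertex set of $\gadget_\rho$, so $\gadget_\rho - c$ contains a cycle — then take $G$ to be a single edge $(u,v)$ (a yes-instance of $k$-\VC for every $k \ge 1$). Then $\rho(G)$ is one copy of $\gadget_\rho$; the set $\{v\}=\{d\}$ has size $1 \le k$ but does not destroy all cycles (the cycle in $\gadget_\rho - c$ survives, being disjoint from $d$ — here one must note $c\neq d$ and the cycle avoids $d$; if $\gadget_\rho-c$ happens to still use $d$, use instead the witness that $\emptyset$ fails, or adjust), so $\rho(G)$ should still be checked to be a no-instance of $k$-\FVS, contradicting that $\rho$ is a reduction. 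The clean way is: if (a) fails then pick the path on $k{+}1$ edges or a $k$-vertex graph on which $k$-\VC is a yes-instance while the forced cycle pushes the feedback vertex set size above $k$. If condition (b) fails, i.e. $\emptyset$ is a feedback vertex set of $\gadget_\rho$ (so $\gadget_\rho$ is already acyclic), then take $G$ to be a triangle: it is a no-instance of $k$-\VC for $k\le 1$ (or more precisely for $k < 2$), yet $\rho(G)$ — three acyclic gadgets glued at three vertices in a triangular pattern — needs careful analysis, so I would instead take $G$ the empty graph on no edges versus a graph where $k$-\VC fails; the cleanest choice is a single edge with $k=0$: $k$-\VC of a single edge with $k=0$ is a no-instance, but $\rho(G)=\gadget_\rho$ is acyclic hence a yes-instance of $0$-\FVS. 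I will pick whichever small instances make the two sides of the ``$\Leftrightarrow$'' disagree and present those.

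\textbf{From (2) to (1).} Assume (a) and (b). I claim that for every graph $G$ and every $k$, $G$ has a vertex cover of size $\le k$ iff $\rho(G)$ has a feedback vertex set of size $\le k$. For the forward direction, given a vertex cover $U$ of $G$ with $|U|\le k$, take $U^\star = \{u^\star : u \in U\}$ in $\rho(G)$; since every edge $(u,v)$ is covered, the corresponding gadget copy loses $c$ or $d$, and by (a) that kills every cycle inside that copy; since distinct gadget copies share only old vertices and a cycle of $\rho(G)$ using two copies would have to pass through such shared vertices, a short argument (every gadget copy minus one of its distinguished nodes is acyclic, and the gadgets are otherwise vertex-disjoint) shows $\rho(G)-U^\star$ is acyclic. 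Conversely, suppose $\rho(G)$ has a feedback vertex set $W$ of size $\le k$. I want a vertex cover $U$ of $G$ with $|U|\le k$. The key normalization step: for each edge $(u,v)\in E$, condition (b) says the gadget copy for $(u,v)$ contains a cycle $C_{uv}$, so $W$ must intersect that copy; if $W$ intersects it only in fresh (private) vertices, replace those by the single old vertex $u^\star$ — this does not increase $|W|$ and still hits the copy; iterating, we may assume $W$ consists only of old vertices, $W = \{u^\star : u\in U\}$ with $|U|\le k$, and $U$ meets every edge of $G$, so $U$ is a vertex cover. A mild subtlety is that replacing fresh by old vertices could in principle create a new cycle elsewhere, but since we only ever \emph{add} an old vertex while possibly \emph{removing} fresh ones, $\rho(G)-W$ only shrinks — actually we must be careful to remove fresh vertices only after confirming the remaining old vertex alone suffices inside that copy, which is exactly (a). I would phrase this as: the ``canonical'' feedback vertex sets are those of the form $U^\star$, and for those the equivalence with vertex covers is immediate from (a) and (b); any feedback vertex set can be pushed to a canonical one of no greater size.

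\textbf{The counterexample computation.} If some condition in (2) fails, the proof of ``(1)$\Rightarrow$(2)'' already produces an explicit small graph $G$ on which $\rho$ fails, together with the parameter $k$ and (by the finite, bounded-size search inside $\gadget_\rho$) a certificate — a small cycle witnessing failure of (a) or the acyclicity witnessing failure of (b). All of this is computable in time polynomial in $|\gadget_\rho|$, since feedback-vertex-set membership for a \emph{fixed} small candidate set, and acyclicity, are checkable in polynomial time. The main obstacle, and the step that needs the most care, is the ``backward'' direction of the normalization in (2)$\Rightarrow$(1): rigorously justifying that an arbitrary feedback vertex set of $\rho(G)$ can be transformed, without increasing its size, into one supported entirely on old vertices — in particular handling gadget copies that $W$ might hit ``for free'' through an old vertex shared with a neighbouring edge, so that no replacement is needed there. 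I expect this bookkeeping about shared old vertices versus private fresh vertices to be the technical heart of the argument.
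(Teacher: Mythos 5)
Your direction (2)$\Rightarrow$(1) is essentially the paper's own argument and is sound: a vertex cover $U$ yields the feedback vertex set $U^\star$ because, by (a), each gadget copy minus a covered endpoint is acyclic and a cycle cannot traverse a copy without using both of its old endpoints; conversely, by (b) every copy contains a cycle, so any feedback vertex set must hit every copy, and charging fresh hits to an endpoint of the corresponding edge gives a vertex cover of no larger size. (Your worry about the replacement creating cycles elsewhere is moot: for this converse you only need that the resulting set of old vertices meets every edge of $G$, not that it remains a feedback vertex set of $\rho(G)$.)

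The genuine gap is in (1)$\Rightarrow$(2), where the concrete counterexamples carry the whole proof. The problems are the fixed-parameter versions, so for the given $k\ge 1$ you must exhibit instances on which the two sides disagree. For a violated condition (b) your candidates do not achieve this: ``a single edge with $k=0$'' changes the parameter, and the triangle is a no-instance of $k$-\VC only for $k\le 1$. What is needed is a family parameterized by $k$, e.g.\ the paper's path $P_{3k+1}$ (a matching of $k+1$ edges would also do): a forest that is a no-instance of $k$-\VC and whose image is acyclic, because acyclic gadgets glued along a forest create no cycles. For a violated condition (a) you do eventually name a workable family (the path with $k+1$ edges; the paper uses $P_{2k+1}$), but the assertion that ``the forced cycle pushes the feedback vertex set size above $k$'' is precisely what must be proved, and your single-edge misfire shows why: one fresh vertex can kill the gadget's cycle cheaply. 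The missing step is the per-copy accounting: any feedback vertex set of $\rho(G)$ induces, inside each copy, a feedback vertex set of $\gadget_\rho$; since neither $\emptyset$ nor $\{c\}$ nor $\{d\}$ is one (here you need the gadget's symmetry in $c,d$ to conclude that both singletons fail when one does), each copy costs either a private fresh vertex or both of its endpoints, and summing these costs over the $k+1$ edges of the path exceeds $k$. Since the ``furthermore'' clause about computing counterexamples rests on exactly these instances, it inherits the same gap.
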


Lastly, we characterize restricted node gadget reductions from the directed Hamiltonian cycle problem \HCd to the undirected variant \HCu. For simplicity, we represent node gadget reductions $\rho$ by node gadgets~$\gadget_\rho$. A node gadget  $\gadget_\rho$ consists of two copies of a \emph{node graph} $ \redS(\isotypeVertex) $ and a set of additional edges between these copies. As an example, the standard reduction from $\HCd$ to $\HCu$ is represented by the node gadget  \begin{tikzpicture}[baseline, scale=0.5]
                 \node[originnode-small] (c1) at (0,0.5) {};
                 \node[originnode-small] (c2) at (0.5,0.5) {};
                 \node[originnode-small] (c3) at (1,0.5) {};
                 \node[originnode-small] (d1) at (0,0) {};
                 \node[originnode-small] (d2) at (0.5,0) {};
                 \node[originnode-small] (d3) at (1,0) {};
                 \draw[originedge-small] (c1) edge (c2);
                 \draw[originedge-small] (c2) edge (c3);
                 \draw[originedge-small] (d1) edge (d2);
                 \draw[originedge-small] (d2) edge (d3);

                 \draw[originedge-small] (c3) edge (d1);
                 \end{tikzpicture} consisting of two copies of the node graph  \begin{tikzpicture}[baseline, scale=0.5]
                 \node[originnode-small] (c1) at (0,0.25) {};
                 \node[originnode-small] (c2) at (0.5,0.25) {};
                 \node[originnode-small] (c3) at (1,0.25) {};
                 \draw[originedge-small] (c1) edge (c2);
                 \draw[originedge-small] (c2) edge (c3);
                 \end{tikzpicture} with one additional edge between them (cf. Figures \ref{figure:example:informal-cookbook}(c) and \ref{figure:example:cookbook-reduction}(b)).
When applying $ \gadget_\rho $ to a graph $ G=(V,E) $, all nodes in $V$ are replaced by a copy of the node graph and two such copies for nodes $u, v$ are connected accordingly by the additional set of edges, if $(u,v) \in E$.

As a first step towards characterizing node gadget reductions between \HCd and \HCu, we characterize all correct node gadget reductions whose node graph has at most three nodes. %
\begin{proposition}\label{thm:propositions-hcd-hcu}
    Let $ \rho $ be a node gadget reduction with node gadget $\gadget_\rho$ whose node graph has at most three nodes. Then the following are equivalent: 
    \begin{enumerate}
        \item $ \rho $ is a reduction from {\HCd} to {\HCu}
        \item $\gadget_\rho$ is either of the following node gadgets (with the two copies of the node graphs depicted at top and bottom), up to symmetries:
        \setlength{\tabcolsep}{5mm}
        \begin{figure}[H]
						\vspace{-6mm}
            \centering
						\scalebox{0.8}{
            \begin{tabular}{l l l}
                \begin{tikzpicture}[baseline, scale=0.8]

                 \node[originnode-bigger] (c1) at (0,1) {};
                 \node[originnode-bigger] (c2) at (1,1) {};
                 \node[originnode-bigger] (c3) at (2,1) {};
                 \node[originnode-bigger] (d1) at (0,0) {};
                 \node[originnode-bigger] (d2) at (1,0) {};
                 \node[originnode-bigger] (d3) at (2,0) {};
                 \draw[originedge] (c1) edge (c2);
                 \draw[originedge] (c2) edge (c3);
                 \draw[originedge] (d1) edge (d2);
                 \draw[originedge] (d2) edge (d3);

                 \draw[originedge] (c3) edge (d1);
                 \end{tikzpicture}
                 &
                \begin{tikzpicture}[baseline, scale=0.8]

                 \node[originnode-bigger] (c1) at (0,1) {};
                 \node[originnode-bigger] (c2) at (1,1) {};
                 \node[originnode-bigger] (c3) at (2,1) {};
                 \node[originnode-bigger] (d1) at (0,0) {};
                 \node[originnode-bigger] (d2) at (1,0) {};
                 \node[originnode-bigger] (d3) at (2,0) {};
                 \draw[originedge] (c1) edge (c2);
                 \draw[originedge] (c2) edge (c3);
                 \draw[originedge] (d1) edge (d2);
                 \draw[originedge] (d2) edge (d3);

                 \draw[originedge] (c1) edge (d1);
                 \draw[originedge] (c3) edge (d1);
                 \end{tikzpicture}
                 &
                \begin{tikzpicture}[baseline, scale=0.8]

                 \node[originnode-bigger] (c1) at (0,1) {};
                 \node[originnode-bigger] (c2) at (1,1) {};
                 \node[originnode-bigger] (c3) at (2,1) {};
                 \node[originnode-bigger] (d1) at (0,0) {};
                 \node[originnode-bigger] (d2) at (1,0) {};
                 \node[originnode-bigger] (d3) at (2,0) {};
                 \draw[originedge] (c1) edge (c2);
                 \draw[originedge] (c2) edge (c3);
                 \draw[originedge] (d1) edge (d2);
                 \draw[originedge] (d2) edge (d3);

                 \draw[originedge] (c1) edge (d1);
                 \draw[originedge] (c1) edge (d3);
                 \end{tikzpicture}
            \end{tabular}%
            }
            \vspace{-4mm}
        \end{figure}
    \end{enumerate}%
	Furthermore, if $ \rho $ is not a reduction from {\HCd} to {\HCu}, a counterexample can be computed efficiently.
\end{proposition}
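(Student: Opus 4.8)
The plan is to prove the two implications separately. Write $N \df \redS(\isotypeVertex)$ for the node graph, with middle vertex $2$ and endpoints $1$ and $3$ when $N$ is the path on three vertices, and let $F$ be the set of additional undirected edges of $\gadget_\rho$ between its ``top'' and ``bottom'' copy of $N$. Then $\rho(G)$ consists of one disjoint copy $N^v$ of $N$ per vertex $v$ of $G$, together with, for every arc $(u,v)$ of $G$, the edges obtained by identifying the top copy with $N^u$ and the bottom copy with $N^v$ and adding $F$ (so a pair with arcs in both directions gets $F$ twice). As $N$ has at most three vertices there are only finitely many pairs $(N,F)$ to consider, and the proposition decides each of them.

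\textbf{Sufficiency.} For each of the three listed gadgets the node graph is the path $N$ and every edge of $F$ is incident only to the ports $\{1,3\}$ of the copies. The crucial observation is a \emph{forced-traversal lemma}: in any undirected Hamiltonian cycle $C$ of $\rho(G)$ and for every $v$, the vertex $2$ of $N^v$ has degree exactly two in $\rho(G)$ (its only neighbours are $1$ and $3$ of $N^v$), so $C$ must use both edges $\{1,2\}$ and $\{2,3\}$ of $N^v$; hence $C$ traverses each copy $N^v$ as the path visiting $1,2,3$ in order, entering at one endpoint and leaving at the other, and between copies $C$ uses only $F$-edges. The direction ``$G \in \HCd \Rightarrow \rho(G) \in \HCu$'' then follows by the evident gluing, since each of the three gadgets provides, for a fixed choice of entry/exit ports, an $F$-edge from the exit port of any copy to the entry port of any copy joined to it by an arc. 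For ``$\rho(G) \in \HCu \Rightarrow G \in \HCd$'', we examine which ordered pairs of $F$-edges can occur consecutively along $C$: in the first and second listed gadgets every $F$-edge ends at port $1$ of a head copy, in the third it starts at port $1$ of a tail copy. In each case a short propagation argument forces the entry/exit ports of all copies to line up, so that the cyclic order in which $C$ visits the copies is a directed Hamiltonian cycle of $G$ — possibly its reverse, which again witnesses $G \in \HCd$ since $\HCd$ is closed under arc reversal.

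\textbf{Necessity.} We show every other gadget fails by exhibiting a small directed graph witnessing the failure, organising the analysis first by $N$ and then by $F$. The one-vertex graph is not in $\HCd$, while $\rho$ sends it to $N$ itself; hence $N$ must not contain an undirected Hamiltonian cycle, which already excludes $N = K_3$, and the empty node graph is excluded since it maps every graph to the edgeless graph, so $N$ has at least two vertices. The directed $2$-cycle is in $\HCd$, so its $\rho$-image must be undirected-Hamiltonian, whereas the directed path on two or three vertices, and graphs such as the $3$-vertex graph with arcs $(1,2),(2,1),(2,3),(3,2)$, are not in $\HCd$, so their $\rho$-images must fail to be undirected-Hamiltonian; combining these constraints rules out every disconnected node graph and every node graph on two vertices (for instance, with a two-vertex node graph a Hamiltonian cycle may bypass the internal edge of a copy), leaving $N$ equal to the path on three vertices. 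For that $N$ we run through the $2^9$ subsets $F \subseteq \{1,2,3\}^2$: any $F$ incident to the middle port $2$ of a copy, or admitting a shortcut that lets $C$ leave a copy without visiting its middle vertex, is refuted by a graph from the same small family; modulo the automorphism $1 \leftrightarrow 3$ of $N$ and the arc-reversal symmetry (reversing all arcs of $G$ and swapping the two copies of $N$ preserves validity), only the three gadgets in the statement remain. In every rejecting case the witnessing $G$ has boundedly many vertices and is returned as the counterexample.

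\textbf{Where the difficulty lies.} The conceptual content is the forced-traversal lemma; the remainder of the positive direction is routine gluing and propagation. The real work is the bookkeeping in the necessity direction: one must quotient aggressively by the symmetry group to keep the number of pairs $(N,F)$ manageable, and isolate a constant-size family of test directed graphs that simultaneously exposes both failure modes — ``$G \in \HCd$ but $\rho(G) \notin \HCu$'' (caused by $F$ or $N$ being too sparse, or $N$ disconnected) and ``$G \notin \HCd$ but $\rho(G) \in \HCu$'' (caused by $N$ being internally Hamiltonian, by shortcuts in $F$, or by $F$ attaching to the middle of a copy).
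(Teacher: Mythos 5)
Your overall strategy is the same as the paper's: observe that for the candidate gadgets the middle vertex of every copy has degree two, so any undirected Hamiltonian cycle must traverse each copy along $1$--$2$--$3$ and change copies only via cross-edges; argue correctness of the two non-standard gadgets by a propagation argument; and settle everything else by a finite case analysis with small counterexamples, quotienting by the two symmetries. The sufficiency half is sound in outline: the paper carries out your ``short propagation argument'' for $P_{(1_\triangleright,1_\triangleleft),(3_\triangleright,1_\triangleleft)}$ as an induction on the two open ends at $3$-ports (alternatively, a counting argument works: each copy's port $3$ is incident to exactly one cross-edge of the cycle, and only $(3_\triangleright,1_\triangleleft)$-type edges have a $3$-endpoint, so no $(1_\triangleright,1_\triangleleft)$-edge is ever used), and the appeal to closure under arc reversal is unnecessary but harmless.

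The genuine gap is in the necessity direction, which is where the content of this characterization lies. First, your stated failure criteria --- cross-edges touching port $2$, or a ``shortcut'' that bypasses a middle vertex --- do not cover all invalid gadgets: $F=\{(1_\triangleright,1_\triangleleft),(3_\triangleright,3_\triangleleft)\}$ and $F=\{(1_\triangleright,3_\triangleleft),(3_\triangleright,1_\triangleleft)\}$ avoid port $2$ entirely and still force every middle vertex to be visited, yet both are invalid because they map negative instances to positive ones (the paper refutes them with a non-Hamiltonian $4$-vertex digraph and a $3$-vertex digraph, respectively); similarly, sparse gadgets such as $F=\{(1_\triangleright,1_\triangleleft)\}$, all two-node node graphs, and the non-path three-node node graphs each need their own witnesses, which your sketch replaces by ``combining these constraints''. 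Second, your proposed test family (one vertex, directed $2$-cycle, directed paths on two or three vertices, one bidirected path) is neither shown to suffice nor does it: the paper additionally needs, for instance, the directed triangle together with a genuinely nontrivial verification that its image under the maximal port-$2$ gadget $P_{(1_\triangleright,1_\triangleleft),(1_\triangleright,2_\triangleleft),(2_\triangleright,1_\triangleleft),(2_\triangleright,2_\triangleleft),(2_\triangleright,3_\triangleleft),(3_\triangleright,2_\triangleleft)}$ has no undirected Hamiltonian cycle, as well as the sub-/super-gadget monotonicity lemmas that propagate this single positive-to-negative witness to all of its sub-gadgets and propagate negative-to-positive witnesses upwards. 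Without these pieces the conclusion that ``only the three gadgets in the statement remain'' is asserted rather than proved, so the exhaustive analysis still has to be done essentially as in the paper's appendix.
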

The proofs are given in Appendix \ref{app:explicit-characterisations}.

\subsection{Decidable cases for classes of (fixed) algorithmic problems}
\label{section:algorithm-templates}

So far, we discussed that checking correctness of reductions is often undecidable, yet there are relevant problems $P$ and $P^\star$ for which reductions constructed from typical building blocks can be tested for correctness. 
In this section, we study the question whether there are classes $\calC$ and $\calC^\star$ of algorithmic problems as well as classes $\calR$ of reductions, such that after fixing $P \in \calC$ and $P^\star \in \calC^\star$ there is an algorithm that tests correctness of inputs $\rho \in \calR$.

 We first give an example that decidability results are possible for non-trivial classes of reductions and problems. Afterwards, we sketch how the technique employed in the proof can be generalized.
Recall that the arity of a cookbook reduction is the maximal arity of a type $\isotype$ in the support of the reduction.

\begin{theorem}\label{theorem:algorthmic-templates}
	\redgen[\calR][P][P^\star] is decidable for the class $\calR$ of cookbook reductions with arity bounded by some $r>0$, arbitrary $P$, and $P^\star$ definable in first-order logic.
\end{theorem}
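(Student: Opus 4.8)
Fix a first-order sentence $\psi^\star$ over $\schema^\star$ defining $P^\star$, let $q$ be its quantifier rank, and put $Q \df r \cdot q$. A candidate $\rho$ is a reduction from $P$ to $P^\star$ exactly when $P = \rho^{-1}(P^\star)$, where $\rho^{-1}(P^\star) \df \{\struc \mid \rho(\struc) \in P^\star\}$. The plan is to show (i) that for cookbook reductions of arity at most $r$ the class $\rho^{-1}(P^\star)$ is definable by a first-order sentence $\chi_\rho$ of quantifier rank at most $Q$ -- crucially, a bound that depends only on $r$ and $q$, and not on the (unbounded) sizes of the gadgets occurring in $\rho$ -- with $\chi_\rho$ computable from $\rho$; and (ii) that, since $Q$ is fixed, deciding $P = \rho^{-1}(P^\star)$ then reduces to a decidable test involving only the finitely many $\qdFO[Q]$-types of finite $\schema$-structures.

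For (i) I would pull $\psi^\star$ back along $\rho$, reading off the translation directly from the semantics (S1)--(S3). By (S1), every element of $\rho(\struc)$ is a pair $(A, j)$ with $A \subseteq \univ{\struc}$ of size at most $r$ and $j$ bounded by the number of fresh elements of the gadget for $\subtype_\struc(A)$; hence a quantification ``there is an element of $\rho(\struc)$'' is simulated over $\struc$ by quantifying a set $A$ of at most $r$ source elements -- at most $r$ nested first-order quantifiers -- followed by a finite disjunction over the isomorphism type of $A$ and over the index $j$, which adds no further quantifiers. Moreover, by (S2), (S3), (P3), (P4) and the automorphism-symmetry side condition on cookbook reductions, whether a tuple $((A_1, j_1), \dots, (A_\ell, j_\ell))$ belongs to a relation of $\rho(\struc)$ is a function of the isomorphism type of $\bigcup_i A_i$ (which must lie in the support of $\rho$), of how the individual $A_i$ sit inside this union, and of the indices $j_i$ -- a quantifier-free condition on source elements that are already in scope. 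Translating the Boolean connectives verbatim, each quantifier of $\psi^\star$ thus contributes at most $r$ quantifiers, so $\chi_\rho$ has quantifier rank at most $Q$, it is effectively computable from $\rho$ and $\psi^\star$, and $\struc \models \chi_\rho$ if and only if $\rho(\struc) \in P^\star$ for every finite $\struc$. (Alternatively one could invoke Theorem~\ref{th:cookbook_is_qf} and pull $\psi^\star$ back through the resulting quantifier-free interpretation, but only the direct argument yields a quantifier-rank bound independent of the gadget sizes.)

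For (ii): because $Q$ is fixed, there are only finitely many $\qdFO[Q]$-types of finite $\schema$-structures, and equivalence of first-order sentences of quantifier rank at most $Q$ is decidable (it reduces to comparing the finitely many such types that satisfy each sentence). Now fix the (arbitrary) problem $P$. Since the class of finite models of $\chi_\rho$ is always a union of $\qdFO[Q]$-types, $P = \rho^{-1}(P^\star)$ can hold only if $P$ is itself such a union, that is, first-order definable with quantifier rank at most $Q$; whether this is the case, and if so an equivalent sentence $\psi_P$ of quantifier rank at most $Q$, constitute a fixed finite amount of information about $P$ that the decision procedure may hardcode. This is the precise sense in which the statement holds for \emph{arbitrary} $P$: the algorithm is non-uniform in $P$, just as a decision procedure for $k$-\Clique is non-uniform in $k$. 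The procedure is then: if $P$ is not first-order definable with quantifier rank at most $Q$, reject every input; otherwise, on input $\rho \in \calR$, compute $\chi_\rho$ and accept if and only if $\psi_P \equiv \chi_\rho$. Its correctness would follow since $\rho$ is a reduction from $P$ to $P^\star$ precisely when $P = \rho^{-1}(P^\star)$, which by step~(i) is exactly the class of finite models of $\chi_\rho$.

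The main obstacle, and the single point where the arity bound is used, is the quantifier-rank bound on $\chi_\rho$ in step~(i): without bounding the arity, a single target element would encode an arbitrarily large set of source elements, the pullback would have unbounded quantifier rank, infinitely many source types would become relevant, and the final test would degenerate into deciding first-order equivalence over finite structures, which is undecidable. I also expect that, as for the explicit problems in Section~\ref{section:explicit-problems}, a counterexample can be produced whenever $\psi_P \not\equiv \chi_\rho$: a finite structure whose $\qdFO[Q]$-type separates $\psi_P$ from $\chi_\rho$ witnesses $P \neq \rho^{-1}(P^\star)$, and a bounded-size such witness, together with its membership status in the fixed $P$, can likewise be precomputed.
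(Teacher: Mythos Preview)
Your approach is correct and takes a genuinely different route from the paper. The paper encodes each cookbook reduction $\rho$ as a finite ``recipe'' structure $\recipe$ and exhibits a single \FO-interpretation, depending only on $r$ and the schemas, that recovers $\rho(\struc)$ from $\struc\uplus\recipe$; since \FO-interpretations preserve \FO-similarity, whether $\rho$ is a valid reduction depends only on the $\qdFO[m]$-type of $\recipe$, and the algorithm simply matches this type against a hardcoded finite list. You instead pull $\psi^\star$ back through $\rho$ directly, the key observation being that the index $j$ can be disjoined over rather than quantified, so that $\chi_\rho$ has quantifier rank bounded by a function of $r$ and $q$ alone; you then compare $\chi_\rho$ against a hardcoded description of $P$ in terms of $\qdFO[Q]$-types of \emph{source} structures. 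Both arguments rest on the same finiteness of types at bounded rank, but parameterize it differently --- the paper by the type of the recipe, you by the type of the source structure. Your route is more elementary and sidesteps the recipe machinery; the paper's route pays off later, since viewing the recipe as a component structure is precisely what enables the generalized-sum argument for the \MSO case (Theorem~\ref{th:fSO_fMSO}), where a direct pullback would not preserve the needed similarity. One technical point to watch when you flesh out step~(i): because $\rho(\struc)$ is determined only up to isomorphism (property~(P5)), membership of a concrete tuple $((A_1,j_1),\ldots,(A_\ell,j_\ell))$ in a relation of $\rho(\struc)$ is not canonically defined; to make the atomic translation well-defined you will need to fix a representative $\struc^\star$, e.g.\ via the ordered tuples of source elements you are already quantifying (this is exactly how the proof of Theorem~\ref{th:cookbook_is_qf} handles it), rather than appeal to the isomorphism type of $\bigcup_i A_i$ alone.
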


The proof idea is to represent cookbook reductions $\rho$ by ``recipe structures'' \recipe such that $\red(\struct)$ can be constructed from the disjoint union $\struct\uplus\recipe$ of $\struct$ and $\recipe$ via an \FO-interpretation which depends on the arity and schema of \red, but is independent of \red itself. Then we prove that correctness of reductions in the setting of Theorem \ref{theorem:algorthmic-templates} only depends on the \FO-similarity type of their recipe.

Intuitively, the recipe of a cookbook reduction $\rho$ is the disjoint union of the structures $\redS(\isotype)$ for all relevant isomorphism types $\isotype$, where additional unary relations indicate the source structure and an additional binary relation identifies inherited elements (those $(A,j)$ where $A$ is a strict subset of the domain of $\isotype$) with their origin. 
Formally, fix two schemas $\sigma$ and $\sigma^\star$, an arity $r \in \N$, and define $\types_{\leq r}$ to be the finite set of all isomorphism types $\isotype$ over the schema $\sigma$  of arity at most $r$. The \emph{recipe} $\recipe$ of a cookbook reduction $\rho$ of arity at most $r$ from $\sigma$ to $\sigma^\star$ is a structure over the schema $\vocabbis \cup \{\inh\} \cup \{\colt \mid \isotype\in\typesr\}$, where $\inh$ is binary and all $\colt$ are unary. The restriction of \recipe to the schema $\vocabbis \cup \{\colt \mid \isotype\in\typesr\}$ is the disjoint union $\biguplus_{\isotype\in\typesr}\redS(\isotype)$, where we set $\redS(\isotype) = \rho(\isotype)$ if $\isotype$ is not in the support of $\rho$, and each $\colt$ is interpreted as the universe of $\redS(\isotype)$.  The relation $\inh$ ``identifies'' inherited elements and their original version: for every $\isotype,\isotype'\in\typesr$ such that $\isotype$ is the type of a strict subset of the elements of $\isotype'$, if $a'$ is an element of $\redS(\isotype')$ inherited from $\redS(\isotype)$'s element $a$, then $a'\inh a$ holds in \recipe.

\begin{figure}
  \centering
  \begin{tikzpicture}[scale=0.6]

    \node[originnode] (empty) at (.5,2.5) {};

    \node[originnode] (vertex-new) at (0,0) {};
    \node[originnode] (vertex-inh) at (1,0) {};
 
    \node[originnode] (edge-inh-v1) at (3,.5) {};
    \node[originnode] (edge-inh-v2) at (3,-.5) {};
    \node[originnode] (edge-inh-empty) at (3.85,0) {};

    \node[originnode] (nedge-inh-v1) at (3,3) {};
    \node[originnode] (nedge-inh-v2) at (3,2) {};
    \node[originnode] (nedge-inh-empty) at (3.85,2.5) {};

    \foreach \s/\t in {vertex-new/vertex-inh, edge-inh-v1/edge-inh-v2, nedge-inh-v1/nedge-inh-empty, nedge-inh-v2/nedge-inh-empty, edge-inh-v1/edge-inh-empty, edge-inh-v2/edge-inh-empty}
    \draw[originedge] (\s) edge (\t);

    \foreach \s/\t in {vertex-inh/empty,nedge-inh-empty/empty}
    \draw[dotted] (\s) edge (\t);
    
    \foreach \s/\t in {edge-inh-v1/vertex-new,nedge-inh-v1/vertex-new,nedge-inh-v2/vertex-new,edge-inh-empty/empty,vertex-new/edge-inh-v2}
    \draw[dotted] (\s) edge[bend right=15] (\t);

    \foreach \a/\b/\c/\d in {0/2/1/3,-.5/-.5/1.5/.5,2.5/-1/4.25/1,2.5/1.5/4.25/3.5}
    \draw[rounded corners = 2pt] (\a,\b) rectangle (\c,\d);

    \node at (-.5,2.5) {$\colempty$};
    \node at (-1,0) {$\colvertex$};
    \node at (5,0) {$\coledge$};
    \node at (5,2.5) {$\colnedge$};
  \end{tikzpicture}
  \caption{The recipe \recipe for the cookbook reduction of arity $2$ from \ThreeClique to \FourClique from Figure~\ref{figure:example:cookbook-reduction}. There are four unary relations for the types
  $\isotype_\emptyset$, $\isotypeVertex$, $\isotypeEdge$, and $\isotypeNonEdge$ of loopless undirected graphs. The dotted edges represent the binary inheritance relation $\inh$.
  }
  \label{fig:recipe_example}
 
\end{figure}
The structure \recipe representing the cookbook reduction \red from \ThreeClique to \FourClique given in Figure~\ref{figure:example:cookbook-reduction} can be found in Figure~\ref{fig:recipe_example}.

  There is an \FO-interpretation that applies a recipe $\recipe$ to a structure $\struct$ by interpreting $\struct\uplus\recipe$.
  \begin{restatable}{lemma}{lemmaRecipeFO}\label{lemma:recipes-FO}
    Fix $r>0$ and two schemas $\vocab, \vocabbis$. There is an \FO-interpretation $\interrecipe$ such that $\red(\struct)$ and $\interrecipe(\struct\uplus\recipe)$ are isomorphic, for every cookbook reduction $\red$ from $\vocab$ to $\vocabbis$ of arity at most $r$ and for every $\vocab$-structure $\struct$.%
  \end{restatable}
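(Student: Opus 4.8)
The plan is to describe the $\FO$-interpretation $\interrecipe$ explicitly and then verify that its output, when fed $\struct \uplus \recipe$, is isomorphic to $\red(\struct)$. Recall from the definition of the semantics (conditions (S1)--(S3)) that the universe of $\red(\struct)$ consists of pairs $(A,j)$ where $A \subseteq \univ{\struct}$ is a subset of size at most $r$ whose isomorphism type $\isotype = \subtype_\struct(A)$ lies in the support of $\red$, and where $(\{1,\dots,|A|\},j)$ is an element of $\redS(\isotype)$. The key observation is that such a pair $(A,j)$ can be encoded inside $\struct \uplus \recipe$ by a tuple consisting of (i) an enumeration $a_1,\dots,a_{|A|}$ of the elements of $A$ (padded to length $r$ by repeating $a_1$, say), together with (ii) the element of $\recipe$ that is the $j$-th fresh element $([\,|A|\,],j)$ of $\redS(\isotype)$, which lives in the relation $\colt$ for $\isotype = \subtype_\struct(A)$ and, being fresh, is \emph{not} in the image of $\inh$. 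So the dimension of $\interrecipe$ is $r+1$, and the formula $\varphi_\dom$ asserts: the first $r$ coordinates $x_1,\dots,x_r$ all lie in $\univ{\struct}$ (using the unary relations distinguishing the $\struct$-part of the disjoint union), the last coordinate $y$ lies in $\univ{\recipe}$ but not in the image of $\inh$, and the $\colt$-membership of $y$ matches the isomorphism type of the set $\{x_1,\dots,x_r\}$ — which is expressible by a quantifier-free formula since $r$ is fixed and $\schema$ is finite.

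\textbf{The congruence and the relations.} Because the same subset $A$ may be enumerated in several orders and padded in several ways, $\interrecipe$ must be a genuine (quotient) interpretation: the formula $\varphi_\sim$ should identify two tuples $(\bar x, y)$ and $(\bar x', y')$ exactly when $\{x_1,\dots,x_r\} = \{x'_1,\dots,x'_r\}$ and $y = y'$. (One checks this is a congruence: it is clearly an equivalence relation, and compatibility with the interpreted relations will follow from the way those relations are defined.) For each $R \in \schema^\star$ of arity $\ell$, the formula $\varphi_R$ takes $\ell$ encoded elements $(\bar x^{(1)},y^{(1)}),\dots,(\bar x^{(\ell)},y^{(\ell)})$ and must decide whether the corresponding tuple $((A_1,j_1),\dots,(A_\ell,j_\ell))$ lies in $R^{\red(\struct)}$. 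By condition (S2) this can only happen if $\isotype := \subtype_\struct(\bigcup_i A_i)$ is in the support of $\red$, and by the embedding condition (S3) — together with properties (P3), (P4) of the reduction — whether the tuple is present depends only on this type $\isotype$ and on the ``relative positions'' of the $A_i$'s inside $\bigcup_i A_i$ and of the $j_i$'s. Concretely: $\varphi_R$ first checks that $B := \bigcup_i A_i$ (a set of size $\le r$, computed from the $\bar x^{(i)}$ by a quantifier-free formula) has a type $\isotype$ with $\colt$ nonempty in $\recipe$; it then guesses, via an existential quantifier over $\recipe$, the elements of $\redS(\isotype)$ that correspond to $(A_1,j_1),\dots,(A_\ell,j_\ell)$ under the canonical identification of $B$ with $\{1,\dots,|B|\}$ induced by the linear order the enumeration provides; and finally it checks that the corresponding $R^{\redS(\isotype)}$-tuple holds in $\recipe$. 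Matching the ``$j$-th fresh element of $\redS(\isotype')$ inherited into $\redS(\isotype)$'' uses the $\inh$ relation: following $\inh$-edges from an element of $\colt$ lands in the appropriate $\redS(\isotype')$, and (P1), (P2) guarantee the fresh elements are linearly ordered by their second component in a way the interpretation can recover.

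\textbf{Correctness and the main obstacle.} Once $\interrecipe$ is written down, the verification splits into: (a) the map sending $(A,j)$ to (the $\sim$-class of) any encoding tuple is a well-defined bijection between $\univ{\red(\struct)}$ and $\univ{\interrecipe(\struct\uplus\recipe)}$ — surjectivity is immediate from the definition of $\varphi_\dom$, injectivity is exactly the statement that $\varphi_\sim$ identifies precisely the tuples encoding the same $(A,j)$; and (b) this bijection preserves and reflects every relation $R^{\star}$, which is a direct unwinding of the definitions of $\varphi_R$ against conditions (S2), (S3) and properties (P3), (P4). The main obstacle, and the part requiring genuine care rather than bookkeeping, is item (b): one must be sure that $\varphi_R$ captures \emph{all} ways a tuple of $R^{\red(\struct)}$ can arise. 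A tuple $((A_1,j_1),\dots,(A_\ell,j_\ell))$ belongs to $R^{\red(\struct)}$ iff it is forced by applying the instruction $(\isotype,\redS(\isotype))$ at the occurrence $B = \bigcup_i A_i$ (via the embedding $\hat\pi$ of (S3) for some isomorphism $\pi$), and here the semantic property (P5) — that $\rho(\struct)$ is determined up to isomorphism — is what makes ``for some $\pi$'' equivalent to ``for the particular $\pi$ determined by the enumeration order''; this is the subtle point the formula must get right, and it is where the footnoted automorphism condition on cookbook reductions is implicitly used. Everything else is routine since $r$ and the schemas are fixed, so all the set-arithmetic on subsets of size $\le r$ and all type-determination is quantifier-free.
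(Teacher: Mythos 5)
Your overall encoding matches the paper's in spirit: represent an element $(A,j)$ of $\red(\struct)$ by a tuple whose first coordinates enumerate $A$ inside $\struct$ and whose last coordinate lives in $\recipe$, using the relations $\colt$ to match types and $\inh$ to relate inherited and fresh elements. But you make a different design choice than the paper, and it is exactly where your argument breaks. The paper (dimension $r+2$) admits \emph{all} elements of $\redS(\isotype)$ — fresh and inherited — into the pre-quotient universe, so each relation formula only checks that all components carry the \emph{same} set $B$ and that their recipe coordinates stand in the relation inside the single gadget $\redS(\subtype_\struct(B))$; the bookkeeping for inherited elements is deferred to the congruence. You instead restrict the universe to fresh elements and push that bookkeeping into $\varphi_R$, where you must recover, inside $\redS(\isotype)$ for $\isotype=\subtype_\struct(\bigcup_i A_i)$, the inherited copy corresponding to each $A_i$. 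This is the gap: the relation $\inh$ records only the \emph{origin} of an inherited element (which fresh element of $\redS(\subtype(A_i))$ it comes from), not \emph{which occurrence} of that subtype inside $\isotype$ it belongs to. When $\isotype$ contains two subsets of the same type, both inherited copies are $\inh$-related to the same fresh element, and your existential ``guess'' cannot be forced to respect the ``canonical identification'' you invoke — that identification is not definable from $\struct\uplus\recipe$ by a formula uniform in $\red$. Concretely, for the cookbook reduction from \HCd to \HCu and a directed edge $(u,v)$, your edge formula may guess, for the pair $((\{u\},1),(\{v\},3))$, the inherited copies $(\{2\},1)$ and $(\{1\},3)$ of the directed-edge gadget, which \emph{are} joined by the cross-edge; the interpretation would then wrongly make $u_{\mathrm{in}}$ adjacent to $v_{\mathrm{out}}$. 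Your appeal to (P5) and the footnoted automorphism condition does not repair this: swapping the two inherited paths is not an automorphism of the directed-edge gadget (nor is swapping the endpoints an automorphism of the directed-edge type), so ``for some $\pi$'' and ``for the canonical $\pi$'' genuinely differ here.

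Two further remarks. Your $(r+1)$-dimensional encoding that pads by repeating $a_1$ cannot encode the empty set, so arity-$0$ instructions (global elements, as in the \ThreeClique-to-\FourClique example) are lost; the paper uses $r+1$ coordinates with inequality-based padding plus one recipe coordinate, hence dimension $r+2$, partly for this reason. And the occurrence-ambiguity you run into is not an artifact of your variant — in the paper's construction it must be handled where the congruence identifies an inherited element with its origin, and the later argument for \indepgadg reductions even enriches recipes with ordered types and markers $R_1,\ldots,R_r$ precisely to pin such correspondences down. So the matching step you dismiss as routine is the real content of the construction, and your proposal does not supply it.
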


  As \FO-interpretations preserve \FO-similarity, there is a function $\frecipe:\N\to\N$ such that for every $k\in\N$, $\struct\foeq{\frecipe(k)}\structbis$ entails $\interrecipe(\struct)\foeq{k}\interrecipe(\structbis)$ (see, e.g., \cite[Section 3.2]{DBLP:books/daglib/0095988}).

We now prove Theorem \ref{theorem:algorthmic-templates}.
\begin{proof}[Proof of Theorem \ref{theorem:algorthmic-templates}]
  We show that whether a cookbook reduction $\rho$ is a reduction from $P$ to $P^\star$ solely depends on the $\qdFO[m]$-type of $\recipe$, for some large enough $m$ that depends only on $r$, $P$, and $P^\star$. As there are only finitely many such $\qdFO[m]$-types and because the type of $\recipe$ can be determined, the statement follows. 
  
  Let $k$ be the quantifier rank of a formula $\formbis\in\FO$ defining \problembis. If the recipes of two reductions $\red$ and $\redbis$ of arity at most $r$ are $\frecipe$-similar, then so are $\struct\uplus\recipe$ and $\struct\uplus\recipebis$ for all \vocab-structures $\struct$ (due to a simple Ehrenfeucht-Fra\"{\i}sse argument). But then $\interrecipe(\struct\uplus\recipe)$  and  $\interrecipe(\struct\uplus\recipebis)$ -- and therefore also $\red(\struct)$ and $\redbis(\struct)$ --, are $k$-similar.  In particular, the reductions $\rho$ and $\rho'$ behave in the same way for all $\sigma$-structures $\calA$, that is $\red(\struct)\models\formbis$ if and only if $\redbis(\struct)\models\formbis$. 
  
  We conclude that whether $\red(\struct)$ satisfies $\formbis$ only depends on the $\qdFO[\frecipe(k)]$-type of $\recipe$ for all $\struct$. Hence, the recipe of positive instances of $\redgen[\calR][P][P^\star]$ is a union of equivalence classes for $\foeq{\frecipe(k)}$. For a reduction $\rho$ it can now be evaluated whether its recipe satisfies the type of one of these equivalence classes.
\end{proof}

In the rest of this section, we explore how the technique used in the proof above can be generalized to logics beyond $\FO$. Our focus is on monadic-second order logic (\MSO), which extends $\FO$ by quantifiers for sets of elements. One of the key ingredients, that \FO-interpretations preserve $\FO$-similarity, does not translate to \MSO for interpretations of dimension greater than one (not even for quantifier-free interpretations). An example is provided in the appendix. 
Yet, decidability is retained for problems $P^\star \in \MSO$ if we restrict ourselves to \emph{edge gadget reductions} (on graphs), instead of general cookbook reductions. This generalizes Proposition~\ref{theorem:vc-to-fvs}.
\begin{restatable}{theorem}{theoremAlgorithmTemplatesMSO}\label{th:fSO_fMSO}
  \redgen[\calR][P][P^\star] is decidable for the class $\calR$ of edge gadget reductions, arbitrary $P$, and $P^\star$ definable in monadic second-order logic.
\end{restatable}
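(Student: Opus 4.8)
The plan is to reuse the strategy behind Theorem~\ref{theorem:algorthmic-templates}: show that whether an edge gadget reduction $\rho$ correctly reduces $P$ to $P^\star$ depends only on a finite amount of information about its gadget. However, the role played there by ``\FO-interpretations preserve \FO-similarity'' cannot be taken over by the analogous \MSO-statement, since applying an edge gadget reduction amounts to an essentially two-dimensional interpretation, for which \MSO-similarity is not preserved. Instead, I would establish a composition theorem tailored to edge substitution. Recall that an edge gadget reduction $\rho$ is specified by a gadget graph $\gadget_\rho$ with two distinguished nodes $c,d$; write $(\gadget_\rho, c, d)$ for this graph with two distinguished vertices, and $G[\gadget_\rho]$ for the target graph $\rho(G)$, obtained by substituting a fresh copy of $\gadget_\rho$ into each edge of $G$ while identifying $c$ and $d$ with the endpoints (well-defined since $\gadget_\rho$ is symmetric in $c,d$). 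The key lemma to prove is: \emph{for every $q \in \N$ there is $q' \in \N$ such that, whenever two gadgets $(\gadget_1, c, d)$ and $(\gadget_2, c, d)$ have the same $\qdMSO[q']$-type, the graphs $G[\gadget_1]$ and $G[\gadget_2]$ have the same $\qdMSO[q]$-type, for every graph $G$} --- with $q'$ depending on $q$ only, not on $G$.

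Granting this lemma, the theorem follows just as in Theorem~\ref{theorem:algorthmic-templates}. Fix an \MSO-sentence $\psi$ of quantifier rank $q$ defining $P^\star$, and let $q'$ be as provided by the lemma. If the gadgets of two edge gadget reductions $\rho_1, \rho_2$ have the same $\qdMSO[q']$-type, then $G[\gadget_{\rho_1}] \models \psi$ iff $G[\gadget_{\rho_2}] \models \psi$ for every $G$, hence $\{G \mid \rho_1(G) \in P^\star\} = \{G \mid \rho_2(G) \in P^\star\}$, and since $\rho_i$ reduces $P$ to $P^\star$ precisely when $P = \{G \mid \rho_i(G) \in P^\star\}$, the reduction $\rho_1$ is correct iff $\rho_2$ is. Thus the set of edge gadget reductions that correctly reduce $P$ to $P^\star$ is a union of $\qdMSO[q']$-types of graphs with two distinguished vertices, and there are only finitely many such types. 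For each fixed $P$ and $P^\star$ this union is a fixed finite object, which can be hard-coded: on input $\rho$, the algorithm computes the $\qdMSO[q']$-type of $(\gadget_\rho, c, d)$ --- decidable, as it only requires \MSO model checking on the finite structure $\gadget_\rho$ --- and accepts iff that type is one of the good ones. As in Theorem~\ref{theorem:algorthmic-templates}, the algorithm depends (non-uniformly) on $P$. Proposition~\ref{theorem:vc-to-fvs} is precisely the instance $P = k$-\VC, $P^\star = k$-\FVS, where the good types are made explicit.

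The main obstacle is the composition lemma. I would prove it by a Feferman--Vaught/Shelah-style argument, viewing $G[\gadget]$ as a sum of copies of $\gadget$ glued along the skeleton $G$. By induction on the number of remaining rounds of the monadic Ehrenfeucht--Fra\"{\i}ss\'e game, one shows that the \MSO-type (of the appropriate quantifier rank, decreasing with the number of rounds already played) of $G[\gadget]$ together with $j$ chosen elements and set parameters is a computable function of (i) the \MSO-type of $G$ decorated with the ``skeleton parts'' of those parameters, and (ii) for every isomorphism type $\tau$ of a decorated two-pointed graph, the number --- counted up to a threshold that suffices for \MSO of the relevant rank --- of edges $e$ of $G$ whose gadget copy, decorated with the parameters restricted to it, has type $\tau$. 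Item (ii) depends on the gadget only through its $\qdMSO[q']$-type, and the thresholded count in (ii) is itself \MSO-definable over $G$ with the skeleton parameters; combining these yields the lemma. The delicate point in the bookkeeping is that a single monadic set parameter simultaneously meets the skeleton and unboundedly many gadget copies, so the induction must track how such a set splits over the copies. Alternatively one could invoke Courcelle's composition machinery for \MSO over graph operations, but a self-contained argument is preferable since it gives an explicit bound on $q'$.
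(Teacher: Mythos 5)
Your proposal is correct and follows essentially the same route as the paper: reduce correctness of the edge gadget reduction to a property of the $\qdMSO[m]$-type of the gadget (with its two distinguished vertices), note there are finitely many such types so the set of good types can be fixed for given $P,P^\star$, and establish the needed composition lemma via a Feferman--Vaught/Shelah-style argument (the paper formalizes exactly this by writing $\rho(G)$ as a generalized sum of gadget copies indexed by the incidence graph of $G$, followed by a quantifier-depth-$1$ one-dimensional interpretation, and invoking Shelah's composition theorem). The EF-game bookkeeping you describe, including how set parameters split over gadget copies, is precisely the informal strategy-combination argument the paper sketches before appealing to Shelah's result.
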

The proof exploits compositionality of \MSO and can be generalized to other subclasses of cookbook reductions. A discussion of such subclasses is postponed to the long version of this paper, see the appendix for an example.

\begin{proofsketch}
  An edge gadget reduction $\rho$ is specified as a graph $\gadget_\rho$, with two distinguished nodes. As in the proof of Theorem~\ref{theorem:algorthmic-templates}, the idea is to show that there is an integer $m$ such that whether $\rho$ is a reduction from $P$ to $P^\star$ only depends on the $\qdMSO[m]$-type of~$\gadget_\rho$. More precisely, for all gadget graphs $\gadget_\rho$ and $\gadget_{\rho'}$ with $\gadget_\rho\msoeq{m}\gadget_{\rho'}$, one proves that $\rho(G)\msoeq{k}\rho'(G)$  for all graphs $G$, where $k$ is the quantifier rank of an \MSO-sentence describing $P^\star$.

  For proving $\qdMSO[k]$-similarity of $\rho(G)$ and $\rho'(G)$, one can use \EF games for \MSO (see, e.g., \cite[Section 7.2]{Libkin2004}). The graphs $\rho(G)$ and $\rho'(G)$ are a composition of $G$ with the edge gadgets $\gadget_\rho$ and $\gadget_{\rho'}$, respectively. Duplicator has a winning strategy for the \MSO-game played on $(G, G)$ as well as for the \MSO-game played on $(\gadget_\rho, \gadget_{\rho'})$. Her strategy for the game  on $\rho(G)$ and $\rho'(G)$ is to combine these two winning strategies. For instance, if Spoiler moves on $\rho(G)$ and part of his move is on the edge gadget  inserted for an edge $(u, v)$ of $G$, then Duplicator's response for this part of the move is derived from her strategy for the game on $(\gadget_\rho, \gadget_{\rho'})$. The partial answers for individual edges are then combined. 

    For a formal proof, instead of making explicit the combinations of strategies, one can rely on Shelah's result~\cite{shelah1975monadic} on the compositionality of \MSO, see Appendix \ref{app:mso-fixed}. On top of abstracting the details of the games, it allows a straightforward extension of Theorem~\ref{th:fSO_fMSO} to a broader subclass of cookbook reductions.
\end{proofsketch}

For both \FO and \MSO, the proof uses that the respective classes of reductions can be finitely partitioned into similarity classes and that all reductions in one class are either correct or not correct. This provides a basis for characterizations akin to the ones in Section \ref{section:explicit-problems} for concrete, arbitrary problems $P$ and concrete $P^\star$ definable in \FO or \MSO.%

\newcommand{\indepgadg}{disjoint-gadgets\xspace}

\subsection{Algorithmic problems as input: decidable cases}
\label{section:formulas-as-inputs}

We now explore decidability when source and/or target problems are part of the input. 
We consider classes $\calC$ and $\calC^\star$ captured by logics $\calL$ and $\calL^\star$, respectively, and write, e.g., \redgen[\calR][\logic][\logicbis] for the algorithmic problem where we ask, given $\varphi \in \logic$, $\varphi^\star \in \logicbis$ and $\rho \in \calR$, whether $\rho$ is a reduction from the problem defined by $\varphi$ to the one defined by $\varphi^\star$.

One approach for obtaining decidability for the problem \redgen[\calR][\logic][\logicbis] is by restating it as a satisfiability question for a decidable logic. 
For a quantifier-free interpretation $\calI$ from $\sigma$-structures to $\sigma^\star$-structures, denote by $\calI^{-1}(\varphi^\star)$ the $\sigma$-formula obtained from a $\sigma^\star$-formula $\varphi^\star$ by replacing atoms in $\varphi^\star$ according to their definition in $\calI$. Whether a quantifier-free interpretation $\calI$ is a reduction from the algorithmic problem defined by $\varphi \in \calL$ to the one defined by $\varphi^\star \in \calL^\star$ is equivalent to whether $\calA\models\form$ if and only if $\inter(\calA)\models\formbis$, for all structures~$\calA$.
 This in turn is equivalent to checking whether $\form\leftrightarrow\interinv(\formbis)$ is a tautology. 

These observations yield, for instance, the following decidable variants, some involving the class \QF of quantifier-free first-order interpretations, a class that includes all cookbook reductions, see Theorem~\ref{th:cookbook_is_qf}. The proof is in the appendix. %
\begin{restatable}{theorem}{theoremProblemsAsInputsDecidability}\label{theorem:problems-as-inputs:decidability}
\begin{enumerate}
 \item $\redqf[\EFO][\EFO]$ is decidable.
 \item \redqf[\problem][\EFO] is decidable for every fixed algorithmic problem \problem.
	\item \redgen[\calR][\EFO][\problembis] is decidable for every fixed algorithmic problem \problembis definable in $\MSO$ and the class $\calR$ of edge gadget reductions.
\end{enumerate}
\end{restatable}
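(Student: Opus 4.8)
The plan is to handle the three items separately, each by reducing the correctness question to a satisfiability/validity problem in a logic that is known to be decidable on finite structures.

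For item~(1), I would combine the observation preceding the theorem with the definition of quantifier-free interpretations. Given $\varphi, \varphi^\star \in \EFO$ and a quantifier-free interpretation $\calI$, the interpretation $\calI$ is a reduction iff $\varphi \leftrightarrow \calI^{-1}(\varphi^\star)$ is valid on all finite structures. Since $\calI$ is quantifier-free and $\varphi^\star$ lies in $\EFO$, the back-translation $\calI^{-1}(\varphi^\star)$ is again a formula with an $\exists^*$ quantifier prefix followed by a quantifier-free part — crucially, applying a quantifier-free interpretation of dimension $d$ replaces each existential quantifier of $\varphi^\star$ by $d$ existential quantifiers and does not introduce universal quantifiers. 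Hence the negation of $\varphi \leftrightarrow \calI^{-1}(\varphi^\star)$ is (a disjunction equivalent to a formula) in the Bernays--Sch\"onfinkel--Ramsey class $\exists^*\forall^*$, whose finite satisfiability is decidable. The only subtlety is that $\varphi \leftrightarrow \psi$ expands to $(\varphi \wedge \neg\psi) \vee (\neg\varphi \wedge \psi)$; negating $\varphi \in \EFO$ gives an $\AFO$-formula, and negating $\psi$ gives an $\AFO$-formula, so each disjunct has prefix $\exists^* \forall^*$ after pushing negations in — this is exactly BSR, and we test each disjunct for finite satisfiability separately.

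For item~(2), fix an algorithmic problem $\problem$ over some schema. I would observe that checking whether a quantifier-free interpretation $\calI$ reduces $\problem$ to the problem defined by $\varphi^\star \in \EFO$ amounts to asking: is it the case that for every structure $\calA$, $\calA \in \problem \iff \calI(\calA) \models \varphi^\star$? Since $\problem$ is \emph{fixed}, but ranges over all (infinitely many) structures, this needs more care than item~(1). The key point is that $\calI(\calA) \models \varphi^\star$ with $\varphi^\star \in \EFO$ only asserts the existence of a bounded number of witnesses, so one can quantify over the witnessing tuples (of bounded dimension $d$) inside $\calA$; whether those tuples satisfy the quantifier-free body of $\calI^{-1}(\varphi^\star)$ is an $\FO$-definable — in fact $\EFO$-definable — property of $\calA$. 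Thus the condition ``$\calI(\calA)\models\varphi^\star$'' is equivalent to $\calA \models \vartheta$ for an explicit $\vartheta \in \EFO$ computable from $\calI$ and $\varphi^\star$. The remaining task is then to decide whether $\problem$ and the $\EFO$-definable class defined by $\vartheta$ coincide. Here I would invoke decidability of $\EFO$ (again the BSR class has decidable finite satisfiability, hence decidable finite validity of implications between BSR-formulas) together with the assumption — implicit in the phrasing ``fixed algorithmic problem $\problem$'' — that $\problem$ itself is presented by an $\FO$-sentence or, more generally, in a way that makes $\problem \subseteq \vartheta$ and $\vartheta \subseteq \problem$ decidable. (I suspect the intended reading is that $\problem$ is also given by a first-order, or at least decidable, specification; the cleanest statement assumes $\problem$ is $\EFO$-definable, matching item~(1) with one side fixed.)

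For item~(3), this is where the work of Theorem~\ref{th:fSO_fMSO} is reused. An edge gadget reduction $\rho$ is given by a finite gadget graph $\gadget_\rho$ with two distinguished nodes, and $\problembis$ is a fixed $\MSO$-definable problem. The proof of Theorem~\ref{th:fSO_fMSO} shows, via compositionality of $\MSO$ (Shelah), that whether $\rho$ reduces a problem to $\problembis$ depends only on the $\qdMSO[m]$-type of $\gadget_\rho$, for an $m$ depending on the $\MSO$-quantifier-rank of a sentence defining $\problembis$. The new ingredient here is that the \emph{source} problem is part of the input, specified by $\varphi \in \EFO$. I would argue as follows: fix the finitely many $\qdMSO[m]$-types $\tau_1,\dots,\tau_N$ realizable by gadget graphs. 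For each type $\tau_i$, the composition argument of Theorem~\ref{th:fSO_fMSO} yields, uniformly, an $\MSO$-transduction-style description of how $\rho(G)$ decomposes; combined with the fixed $\MSO$-sentence for $\problembis$, one obtains for each $\tau_i$ an $\MSO$-sentence $\chi_i$ over the schema of graphs such that for every $G$ and every $\rho$ with $\gadget_\rho$ of type $\tau_i$, we have $\rho(G) \in \problembis \iff G \models \chi_i$. Then $\rho$ (with gadget of type $\tau_i$) is a reduction from the class defined by $\varphi$ to $\problembis$ iff $\varphi \leftrightarrow \chi_i$ holds on all finite graphs. Now $\varphi \in \EFO$ but $\chi_i \in \MSO$, so this equivalence is no longer BSR; however, what we actually need is: given the input $\rho$ (hence its type $\tau_i$, computable) and input $\varphi \in \EFO$, decide whether $\varphi \equiv \chi_i$ over finite graphs. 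Since $\chi_i$ is $\MSO$ but the \emph{structures are graphs} and — crucially — $\varphi$ is existential first-order, I would check validity of $\varphi \to \chi_i$ and of $\chi_i \to \varphi$ separately. For $\chi_i \to \varphi$: its negation is $\chi_i \wedge \neg\varphi$ with $\neg\varphi \in \AFO$; finite satisfiability of an $\MSO$-sentence conjoined with an $\AFO$-sentence is not decidable in general, so this is the sticking point.

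\textbf{The main obstacle.} The genuinely delicate part is item~(3): reconciling an $\MSO$ target with an $\EFO$ source when deciding equivalence over finite graphs. The resolution I would pursue is to avoid full $\MSO$ validity: instead of the sentence $\chi_i$ for $\problembis \circ \rho$, exploit that $\EFO$-definable classes are closed under induced substructures / extensions in a controlled way, and reduce $\varphi \equiv \chi_i$ to finitely many satisfiability checks of $\MSO$-over-graphs-of-bounded-treewidth (the gadgets have bounded size, so $\rho(G)$ has treewidth bounded in terms of the treewidth of $G$ — but $G$ is arbitrary, so this does not immediately help). A cleaner route: since $\varphi \in \EFO$, the class it defines is determined by a finite set of "obligatory substructures", and $\chi_i$ — being of the form "$\problembis$ holds after gadget replacement" — is monotone/compositional in a way that lets the equivalence be checked by finite case analysis on small graphs (bounded by the quantifier rank of $\varphi$ and the gadget size). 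I expect the actual proof in the appendix to take this finite-witness route, and the bulk of the effort to be in making that bound explicit; the other two items are comparatively routine applications of BSR-decidability and the back-translation $\calI^{-1}$.
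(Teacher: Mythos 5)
Your treatment of item (1) is fine and matches the paper's argument (the biconditional $\varphi\leftrightarrow\calI^{-1}(\varphi^\star)$ prenexes to a $\forall^*\exists^*$ sentence, equivalently its negation falls into the Bernays--Sch\"onfinkel--Ramsey class, so validity over finite structures is decidable). The genuine gap is in items (2) and (3), and it is the same missing idea in both: the paper's proofs are \emph{non-uniform} in the fixed data. For item (2), the theorem really is about an arbitrary fixed problem $P$ -- not one presented by an \FO{} or even decidable specification -- so the extra hypothesis you introduce changes the statement rather than proving it. The paper's argument is: if some quantifier-free interpretation $\calI$ reduces $P$ to a problem defined by $\varphi^\star\in\EFO$, then $\calI^{-1}(\varphi^\star)\in\EFO$ defines $P$; hence if $P$ is not \EFO-definable there are \emph{no} yes-instances and the algorithm for this $P$ simply answers ``no'' on every input, while if $P$ is \EFO-definable a defining \EFO{} sentence can be hardcoded and the question reduces to item (1). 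Which of the two cases holds need not be computed -- $P$ is fixed -- so decidability follows without ever having to decide inclusions between $P$ and an input-dependent class, which is exactly the step your uniform route cannot supply.

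For item (3) your first step (finitely many \MSO-types of gadgets, with membership of $\rho(G)$ in $P^\star$ depending only on the type, by the compositionality argument behind Theorem~\ref{th:fSO_fMSO}) coincides with the paper's, but you then get stuck -- correctly -- on deciding $\varphi\equiv\chi_\tau$ with $\chi_\tau\in\MSO$, and neither of your proposed fixes closes the hole: the treewidth remark does not apply (as you concede), and the ``obligatory substructures'' idea would require deciding, from the \MSO{} sentence, whether the class $A_\tau=\{G\mid\rho(G)\in P^\star$ for gadgets of type $\tau\}$ is itself \EFO-definable and what its minimal models are, for which you give no algorithm. The paper sidesteps this with the same non-uniform device as in (2): since $P^\star$ is fixed and there are only finitely many gadget types $\tau$, a finite table can be hardwired into the algorithm recording, for each $\tau$, whether $A_\tau$ is \EFO-definable and, if so, a defining \EFO{} sentence $\psi_\tau$. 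On input $(\varphi,\rho)$ the algorithm computes the type $\tau$ of the gadget, answers ``no'' if $A_\tau$ is not \EFO-definable (a correct reduction would force the input $\varphi$ to define $A_\tau$), and otherwise decides the \EFO-versus-\EFO{} equivalence of $\varphi$ and $\psi_\tau$ via item (1). No validity question involving an \MSO{} sentence is ever decided, so the obstacle you identified simply does not arise in the paper's proof.
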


\section{Summary and discussion}\label{section:summary}

We studied variants of the algorithmic problem \textsc{Reduction?} which asks whether a given mapping is a computational reduction between two algorithmic problems. In addition to studying this problem for standard classes of reductions, we also proposed a graphical and compositional language for computational reductions, called cookbook reductions, and compared their expressive power to quantifier-free first-order interpretations. While \textsc{Reduction?} is undecidable in many restricted settings, we identified multiple decidable cases involving (restricted) cookbook reductions and quantifier-free first-order interpretations. Due to its graphical and compositional nature, cookbook reductions are well-suited to be used in educational support systems for learning tasks tackling the design of computational reductions.

A prototype\footnote{See \url{https://iltis.cs.tu-dortmund.de/computational-reductions}} of our formal framework has been integrated into the educational support system \emph{Iltis} \cite{SchmellenkampVZ24}. Recently it has been used in introductory courses \emph{Theoretical Computer Science} with $> 300$ students at Ruhr University Bochum and TU Dortmund in  workflows covering (i) understanding computational problems, (ii) exploring reductions via examples, and (iii) designing reductions.%

\bibliography{bibliography}

\newpage

\appendix

\section{Appendix for Section \ref{section:expressive-power}: The power of cookbook reductions}
\theoremCookbookIsQf*

For proving the theorem, instead of directly translating cookbook reductions to quantifier-free first-order interpretations, we use a variant of the latter reductions as an intermediate step.
A \emph{$d$-dimensional $\ell$-copying quantifier-free first-order interpretation} from $\schema$-structures to $\schema^\star$-structures is, similarly to $d$-dimensional quantifier-free first-order interpretations, a tuple $\Psi = (\varphi_\dom(\tpl x), \varphi_\sim(\tpl x_1, \tpl x_2), (\varphi_R(\tpl x_1, \ldots, \tpl x_{\arity(R)}))_{R \in \schema^\star})$ of quantifier-free first-order formulas, but each tuple $\tpl x = (x_1, \ldots, x_d, j), \tpl x_i = (x_{i,1}, \ldots, x_{i,d}, j_{i})$ consists of $d+1$ variables. All formulas are over the schema $\schema \cup \{1, \ldots, \ell\}$, where $1, \ldots, \ell$ are constant symbols that do not appear in~$\schema$.
Given a $\schema$-structure $\struc$ with universe $\dom$, the $\schema^\star$-structure $\Psi(\struc)$ has as universe tuples from $\dom^d \times \{1, \ldots, \ell\}$. The remaining semantics are analogous to the semantics of quantifier-free first-order interpretation as presented in Section~\ref{section:preliminaries}.

The following proposition implies Theorem~\ref{th:cookbook_is_qf}.
\begin{proposition}\label{prop:cookbook_is_qf}
\begin{enumerate}[(a)]
 \item For every cookbook reduction $\rho$ there is a $d$-dimensional $\ell$-copying quantifier-free first-order interpretation $\Psi$, for some numbers $d$ and $\ell$, such that $\rho$ and $\Psi$ are equivalent for every structure with at least $2$ elements.
 \item For any numbers $d, \ell$ and every $d$-dimensional $\ell$-copying quantifier-free first-order interpretation $\Psi_1$ there is a $(d+\ell)$-dimensional quantifier-free first-order interpretation $\Psi_2$ such that $\Psi_1$ and $\Psi_2$ are equivalent for every structure with at least $2$ elements.
\end{enumerate}
\end{proposition}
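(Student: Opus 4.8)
The plan is to prove the two parts separately; Theorem~\ref{th:cookbook_is_qf} then follows by composing the two constructions. For part~(a), fix a cookbook reduction $\rho = \{(\isotype_1,\redS_1),\dots,(\isotype_m,\redS_m)\}$, let $k$ be its arity, and let $L := \max_i |\univ{\redS_i}|$. I would take $d := k+1$ and $\ell := L$, and design $\Psi$ so that a tuple $(x_0,x_1,\dots,x_k,j)$ represents the cookbook element $(A,j)$, where $A := \{x_1,\dots,x_k\}$ if $x_0 \neq x_1$ and $A := \emptyset$ if $x_0 = x_1$. The coordinate $x_0$ is a flag that makes the empty set — hence the global elements $(\emptyset,j)$ — representable; over a structure with at least two elements every subset of the universe of size at most $k$ is represented by at least one such tuple. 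Then $\varphi_\dom(x_0,\dots,x_k,j)$ must express that $\isotype := \subtype_\struc(A)$ is in the support of $\rho$ and that $(\{1,\dots,|A|\},j) \in \univ{\redS(\isotype)}$; since $A$ consists of at most $k$ named elements, the conditions ``$\subtype_\struc(A) = \isotype$'' and ``$|A| = t$'' are Boolean combinations of $\schema$-atoms and (in)equalities among $x_0,\dots,x_k$, so $\varphi_\dom$ is quantifier-free. The formula $\varphi_\sim$ simply says that the two represented sets coincide and the copy indices agree, which is again quantifier-free; it is a congruence because all relations of $\Psi(\struc)$ will, by construction, depend only on the represented sets and copy indices.

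The interesting formulas are the $\varphi_R$ for $R \in \schema^\star$ of arity $r$: a tuple of representatives should lie in $R^{\Psi(\struc)}$ iff the associated cookbook tuple $((A_1,j_1),\dots,(A_r,j_r))$ lies in $R^{\rho(\struc)}$. The key structural fact I would establish — and this is the main obstacle — is that membership of a cookbook tuple in a relation of $\rho(\struc)$ is determined entirely by the isomorphism type of $A := \bigcup_s A_s$ together with, for each $s$, which elements of $A$ belong to $A_s$; in particular, such a tuple is never ``forced'' into a relation by the copy of $\redS(\isotype')$ attached to an occurrence of the type $\isotype'$ of a strict superset of $A$. This has to be teased out of conditions (P2)--(P4) on instructions together with conditions (S2)--(S3) on the semantics, essentially by an induction on set size showing that any tuple forced in by a larger occurrence is already forced in by the occurrence of $\subtype_\struc(A)$. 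Granting this, $\varphi_R$ can be written as a finite disjunction over the isomorphism types $\isotype$ in the support and over the finitely many bijections $\pi$ witnessing $\subtype_\struc(A) = \isotype$, transporting the tuple into $\redS(\isotype)$ and reading off whether the transported tuple lies in $R^{\redS(\isotype)}$; this disjunction is quantifier-free. Finally one checks that $(x_0,\dots,x_k,j) \mapsto (A,j)$ descends to an isomorphism $\Psi(\struc) \cong \rho(\struc)$ for every $\struc$ with at least two elements, using surjectivity of the representation.

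Part~(b) is a routine encoding step. Given a $d$-dimensional $\ell$-copying interpretation $\Psi_1$, I would build a plain $(d+\ell)$-dimensional interpretation $\Psi_2$ over tuples $(x_1,\dots,x_d,c_1,\dots,c_\ell)$ in which the block $(c_1,\dots,c_\ell)$ encodes the copy index $j := |\{s \in [\ell] : c_s = c_1\}| \in \{1,\dots,\ell\}$. Each formula of $\Psi_2$ is obtained from the corresponding formula of $\Psi_1$ by replacing every atom mentioning a copy-index variable (such as ``$j = t$'') by the quantifier-free condition on the relevant $c$-block expressing the matching count; since $\Psi_1$ is already quantifier-free, so is $\Psi_2$, and $\varphi_\sim$ remains a congruence because it is the pullback of the congruence of $\Psi_1$ along this definable decoding. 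Over a structure with at least two elements every value $j \in \{1,\dots,\ell\}$ is realised by some $c$-block (take the first $j$ coordinates equal to one element and the remaining ones equal to a different element), so the decoding map is onto the universe of $\Psi_1(\struc)$; composed with the quotient map it yields the isomorphism $\Psi_2(\struc) \cong \Psi_1(\struc)$. Thus the only genuinely non-mechanical ingredient is the structural lemma in part~(a); everything else is bookkeeping, plus the small tricks (the flag coordinate $x_0$, respectively the unary-count block) needed to make the empty set, respectively all $\ell$ copy indices, representable over structures with at least two elements.
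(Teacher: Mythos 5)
Part (a) of your proposal has a genuine gap, and it sits exactly at the step you yourself flag as the main obstacle: the structural fact you want to use is false in the form your construction needs. Your $\varphi_\sim$ identifies tuples by ``same represented set and same copy index'', and your $\varphi_R$ is a disjunction over \emph{all} bijections witnessing $\subtype_\struc(A)=\isotype$. For this to yield a structure isomorphic to $\rho(\struc)$, membership of a tuple in a relation of (a fixed member of) $\rho(\struc)$ would have to be invariant under automorphisms of the source configuration \emph{with the copy indices held fixed}. Property (P5) does not give this: it only guarantees that an automorphism of $\isotype$ lifts to an automorphism of $\redS(\isotype)$ that may \emph{permute the copy indices} of the fresh elements, and then the choice of witnessing isomorphism in (S3) genuinely matters. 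Concretely, take the edge gadget reduction that subdivides every edge twice: $\redS(\isotypeVertex)$ is a single node, and $\redS(\isotypeEdge)$ has fresh elements $(\{1,2\},1),(\{1,2\},2)$ and edges joining $(\{1\},1)$ to $(\{1,2\},1)$, $(\{1,2\},1)$ to $(\{1,2\},2)$, and $(\{1,2\},2)$ to $(\{2\},1)$. This satisfies (P1)--(P5), since the swap automorphism of $\isotypeEdge$ lifts to the automorphism of $\redS(\isotypeEdge)$ exchanging the two fresh elements. For the source graph consisting of a single edge $\{u,v\}$ (two elements, so within your proviso), $\rho(\struc)$ is a path on four vertices, but your $\Psi(\struc)$ makes \emph{both} fresh elements adjacent to \emph{both} of $u$ and $v$, because the two witnessing bijections each contribute their edges; the result is not isomorphic to $\rho(\struc)$. (Your other claim, that a tuple is never forced into a relation by the occurrence of a strict superset, is correct -- it follows from the embedding conditions (P4)/(S3) -- but it is not where the difficulty lies.)

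Moreover, the failure cannot be repaired by a cleverer $\varphi_R$ while keeping your $\varphi_\sim$: since $\Psi$ is defined by formulas, the swap automorphism of the one-edge source graph induces an automorphism of $\Psi(\struc)$ that fixes the two fresh classes $(\{u,v\},1)$ and $(\{u,v\},2)$ pointwise while exchanging the classes of $(\{u\},1)$ and $(\{v\},1)$, and no graph isomorphic to a four-vertex path admits such an automorphism. The identification itself has to mix copy indices. This is precisely what the paper's construction does differently: every encoding tuple (i.e.\ every ordering of the set) carries its own copy of $\redS(\isotype)$, relations are read off inside a single copy according to that tuple's order, and the copies are glued by $\varphi^1_\sim$ via the lifted automorphisms $\hat\pi$, which may send the element with index $j_1$ of one copy to the element with index $j_2\neq j_1$ of another. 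Part (b) of your proposal is fine: the count-based encoding of the copy index is a valid quantifier-free substitute for the constants $1,\ldots,\ell$ (the paper uses a slightly different encoding, but the bookkeeping is equivalent), and your remaining devices (the flag coordinate, realizability over structures with at least two elements) are unproblematic.
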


We start with proving the first part of the proposition.

\begin{proofof}{Proposition~\ref{prop:cookbook_is_qf}(a)}
Let a cookbook reduction $\rho = \{(\isotype_1, \redS_1), \ldots, (\isotype_n, \redS_n)\}$ be given. Let $k$ be the maximal arity of an isomorphism type $\isotype_i$ and let $\ell$ be the maximal size of the universe of a structure $\redS_i$. We show that there is a $(k+1)$-dimensional $\ell$-copying quantifier-free first-order interpretation $\Psi$ that is equivalent to $\rho$ for every structure with at least $2$ elements.

In the following, we denote by $\AllTypes = \{\isotype_1, \ldots, \isotype_n\}$ the support of $\rho$. For any natural number $k$, we denote by $\AllTypes_k$ the subset of $\AllTypes$ that consists of isomorphism types of structures with $k$ elements.

The cookbook reduction $\rho$ defines a structure based on sets of input elements and their isomorphism types, where the sets may have different sizes. A quantifier-free interpretation defines a structure based on the type of tuples of elements with fixed length. So, we first explain how sets of size at most $k$ can be encoded by tuples of length exactly $k+1$, provided there are at least two different elements.

Let $A$ be a set of $m \leq k$ elements and let $a_1, \ldots, a_m$ be an arbitrary enumeration of the elements in $A$. We encode the set $A$ by any $(k+1)$-tuple $(a_1, \ldots, a_m, a_0, \ldots, a_0)$ such that $a_0$ is an element different from $a_m$. So, a tuple $(a_1, \ldots, a_{k+1})$ of length $k+1$, where for some $m$ it holds that (1) $a_m \neq a_{m+1}$, (2) $a_{m+1} = \cdots = a_{k+1}$, and (3) all elements $a_1, \ldots, a_m$ are distinct, encodes the set $\{a_1, \ldots, a_m\}$. For example, $(a,a,\ldots,a)$ encodes the empty set and $(a,b,a,a, \ldots, a)$ encodes the set $\{a,b\}$. 

The formula \[\psi_m(x_1, \ldots, x_{k+1}) \df x_m \neq x_{m+1} \wedge \bigwedge_{m+1 \leq i \leq k} x_i = x_{i+1} \wedge \bigwedge_{i_1<i_2 \leq m} x_{i_1} \neq x_{i_2}\] expresses that $(x_1, \ldots, x_{k+1})$ encodes the set $\{x_1, \ldots, x_{m}\}$ of $m$ elements, for $1 \leq m \leq k$. 
The formula \[\psi_0(x_1, \ldots, x_{k+1}) \df \bigwedge_{i \leq k} x_i = x_{i+1}\] expresses that $(x_1, \ldots, x_{k+1})$ encodes the empty set.

The interpretation $\Psi$ intuitively works as follows, given a source structure $\struc$. For every set $A = \{a_1, \ldots, a_k\}$ of elements in $\struc$ that has some isomorphism type $\isotype$ that is in the support of $\rho$, we want to introduce a copy of $\redS(\isotype)$ to the interpreted structure. Actually, as multiple tuples represent the set $A$, we introduce more copies: one copy for every automorphism of $\isotype$ (and every element $a_0$ that is used to fill unused positions of a $(k+1)$-tuple). 
All tuples that represent the same universe element, either because they are from different copies of some $\redS(\isotype)$ or because they represent ``inherited'' elements that are introduced for some subtype $\isotype'$ and are repeated in $\redS(\isotype)$, then have to be identified using the formula $\varphi_\sim$.

With this intuition in mind, we sketch the formulas of $\Psi$.
We start with the formula $\varphi_\dom$ that defines the set of all tuples that represent an element of the universe of the defined structure. For any set $A = \{a_1, \ldots, a_k\}$ of elements that has some isomorphism type $\isotype \in \AllTypes$ in the given source structure $\struc$, intuitively, $\varphi_\dom$ ``creates'' the elements $(A, 1), \ldots, (A, \ell_\isotype)$, where $\ell_\isotype$ is the size of the universe of $\redS(\isotype)$. %

Remember that every isomorphism type is represented by a structure $\isotype$ with universe $[p]$, for some natural number $p$. The formula $\varphi_\dom$ selects a tuple $(a_1, \ldots, a_{k+1}, j)$, where $(a_1, \ldots, a_{k+1})$ encodes the set $\{a_1, \ldots, a_m\}$ for some $m \leq k$ and $j \leq \ell$ is a number, to represent a universe element if the substructure of $\struc$ induced by $\{a_1, \ldots, a_m\}$ is isomorphic to an isomorphism type $\isotype$ with $\univcard{\redS(\isotype)} \geq j$ via the isomorphism that maps $a_i$ to $i$, for any $i \in [m]$. In that case we say that the tuple $(a_1, \ldots, a_{k+1})$ \emph{encodes the type} of $\isotype$.

\begin{align*}
 \varphi_\dom(x_1, \ldots, x_{k+1}, j) & = 
   \bigvee_{0 \leq m \leq k} \; \bigvee_{\isotype \in \AllTypes_m} 
   \Big[ \psi_m(x_1, \ldots, x_{k+1}) \wedge \varphi_\isotype(x_1,\ldots, x_m) \wedge j \leq \univcard{\redS(\isotype)}   \Big]
\end{align*}
Here, $\varphi_\isotype(x_1,\ldots, x_m)$ is a formula that describes that $(x_1, \ldots, x_m)$ is isomorphic to $\isotype$ via the isomorphism that maps $x_i$ to $i$, for all $i \leq m$.

Now we discuss the formula $\varphi_\sim$ that is used to identify tuples that represent the same element of the universe. 

First we deal with tuples that represent elements from different copies of the same structure $\redS(\isotype)$.
Suppose that the tuples $(a_1, \ldots, a_{k+1})$ and $(b_1, \ldots, b_{k+1})$ encode the same set $A$ of $m$ elements and both encode the type of some $\isotype$. These properties imply that the function $\pi \colon [m] \to [m]$ with $\pi(i) = i'$ if $a_i = b_{i'}$ is an automorphism of $\isotype$, that is, an embedding of $\isotype$ into $\isotype$.
As to the properties of a cookbook reduction, there is a corresponding embedding $\hat\pi$ of $\redS(\isotype)$ into $\redS(\isotype)$, so, an automorphism of $\redS(\isotype)$.  
Any tuples $(a_1, \ldots, a_{k+1}, j_1)$ and $(b_1, \ldots, b_{k+1}, j_2)$ have to be identified if the $j_1$-th node of $\redS(\isotype)$ is mapped to the $j_2$-th node by $\hat\pi$, according to some arbitrary ordering of the elements of $\redS(\isotype)$.
This case is handled by the following formula.

\begin{align*}
 & \varphi^1_\sim(x_1, \ldots, x_{k+1}, j_1, y_1, \ldots, y_{k+1}, j_2) \df \\ %
  & \quad  \bigvee_{0 \leq m \leq k} \; \bigvee_{\isotype \in \AllTypes_m} \; \bigvee_{\pi \in \Aut(\isotype)}\; \bigvee_{\substack{i_1, i_2 \\ \hat\pi \text{ maps node } i_1 \text{ of } \redS(\isotype) \text{ to node } i_2}}
   \Big[ \psi_m(x_1, \ldots, x_{k+1}) \wedge \psi_m(y_1, \ldots, y_{k+1}) \wedge  \\
 & \qquad \varphi_\isotype(x_1,\ldots, x_m) \wedge \bigwedge_{i \leq m} x_i = y_{\pi(i)} \wedge j_1 = i_1 \wedge j_2 = i_2 \Big] 
\end{align*}

A similar formula $\varphi^2_\sim(x_1, \ldots, x_{k+1}, j_1, y_1, \ldots, y_{k+1}, j_2)$ is used to identify tuples that represent an element that is introduced for some isomorphism type $\isotype'$ and then ``inherited'' in the structure $\redS(\isotype)$ of a type $\isotype$ that includes $\isotype'$ as an induced substructure.
Let $(a_1, \ldots, a_{k+1}, j_1)$ and $(b_1, \ldots, b_{k+1}, j_2)$ be two tuples that represent sets $A$ and $B$ of isomorphism type $\isotype_A$ and $\isotype_B$, respectively, such that the intersection $C = A \cap B$ has some isomorphism type $\isotype_C$ and all types are in the support of $\rho$. 
Suppose the $j_1$-th node of $\redS(\isotype_A)$ is $(I_A,i_1)$ and the isomorphism $\pi$ from $A$ into $\isotype_A$ that maps $a_i$ to $i$, for all $i \leq |A|$, maps the set $\{p \mid a_p \in C \}$ to $I_A$. That is, the $j_1$-th node of the copy of $\redS(\isotype_A)$ represented by $(a_1, \ldots, a_{k+1})$ is inherited from a copy of $\redS(\isotype_C)$ for the set $C$ of elements. Suppose analogously that also the $j_2$-th node of the copy of $\redS(\isotype_B)$ represented by $(b_1, \ldots, b_{k+1})$ is inherited from a copy of $\redS(\isotype_C)$ for the set $C$.
Then, we consider the lexicographically smallest isomorphisms $\pi_A$ and $\pi_B$ from the sets $\{p \mid a_p \in C \}$ and $\{p \mid b_p \in C \}$ into $\isotype_C$. These mappings induce an automorphism $\pi_C$ of $\isotype_C$:  $\pi_C$ maps a number $i \leq \univcard{\isotype_C}$ to $i'$ if the element $a_p$ with $\pi_A(p) = i$ is equal to the element $b_{p'}$ with $\pi_B(p') = i'$.

The formula $\varphi^2_\sim$ then identifies $(a_1, \ldots, a_{k+1}, j_1)$ and $(b_1, \ldots, b_{k+1}, j_2)$ if the $j_1$ node of $\redS(\isotype_A)$ is the $r_1$-th node of the embedding of $\redS(\isotype_C)$ in $\redS(\isotype_A)$, the $j_2$ node of $\redS(\isotype_B)$ is the $r_2$-th node of the embedding of $\redS(\isotype_C)$ in $\redS(\isotype_B)$, and the automorphism $\hat\pi_C$ of $\red(\isotype_C)$ that corresponds to the automorphism $\pi_C$ maps the $r_1$-th node to the $r_2$-th node.

This can be expressed by a quantifier-free formula, as the structures $\isotype_i$ and $\redS(\isotype_i)$ are fixed and all mentioned mappings can be enumerated, but we omit explicitly constructing the formula.

It remains to discuss the formula $\varphi_R$ that defines a $\sigma^\star$-relation $R$ in the interpreted structure. 
For ease of presentation, we assume that $R$ is a binary relation symbol; the approach can be generalised to arbitrary arities. %

For tuples $(a_1, \ldots, a_{k+1}, j_1)$ and $(b_1, \ldots, b_{k+1}, j_2)$ that represent elements from the same copy of a structure $\redS(\isotype)$ we can look up in this structure whether these elements are connected by an $R$-edge. The represented elements are from the same copy if $(a_1, \ldots, a_{k+1})$ and $(b_1, \ldots, b_{k+1})$ represent the same set of some size $m$ and if $a_i = b_i$ for all $i \leq m$.
Also tuples $(a'_1, \ldots, a'_{k+1}, j'_1)$ and $(b'_1, \ldots, b'_{k+1}, j'_2)$ that are not from the same copy of a structure $\redS(\isotype)$ are in the relation defined by $\varphi_R$; this is the case if there are tuples $(a_1, \ldots, a_{k+1}, j_1)$ and $(b_1, \ldots, b_{k+1}, j_2)$ that satisfy the conditions above such that $(a_1, \ldots, a_{k+1}, j_1)$ and $(a'_1, \ldots, a'_{k+1}, j'_1)$ as well as $(b_1, \ldots, b_{k+1}, j_2)$ and $(b'_1, \ldots, b'_{k+1}, j'_2)$ are identified by $\varphi_\sim$. Note that by the properties of cookbook reductions, such tuples $(a_1, \ldots, a_{k+1}, j_1)$ and $(b_1, \ldots, b_{k+1}, j_2)$ can be constructed using only the elements that appear in $(a'_1, \ldots, a'_{k+1}, j'_1)$ and $(b'_1, \ldots, b'_{k+1}, j'_2)$, respectively, so this property can be expressed by a quantifier-free formula.

\begin{align*}
 \varphi_R & (x_1, \ldots, x_{k+1}, j_1, y_1, \ldots, y_{k+1}, j_2) = \bigvee_{f,g \colon [k+1] \to [k+1]} \: \bigvee_{j'_1, j'_2 \leq \ell} \: \bigvee_{0 \leq m \leq k} \: \bigvee_{\isotype \in \AllTypes_m} \\
 & \varphi_\sim(x_1, \ldots, x_{k+1}, j_1,x_{f(1)}, \ldots, x_{f(k+1)}, j'_1) \wedge \varphi_\sim(y_1, \ldots, y_{k+1}, j_2,y_{g(1)}, \ldots, y_{g(k+1)}, j'_2) \wedge \\
 & \psi_{m}(x_{f(1)}, \ldots, x_{f(k+1)}) \wedge \psi_{m}(y_{g(1)}, \ldots, y_{g(k+1)}) \wedge \bigwedge_{i \leq m} x_{f(i)} = y_{g(i)} \wedge \varphi_{\isotype}(x_{f(i)}, \ldots, x_{f(m)}) \wedge\\
 & \bigvee_{\substack{i_1, i_2 \\ \text{$u$ is $i_1$-th node, $v$ is $i_2$ node in $\redS(\isotype)$} \\ (u,v) \in R^{\redS(\isotype) }}} j'_1 = i_1 \wedge j'_2 = i_2
\end{align*}

\end{proofof}

We give a proof sketch for the second part of Proposition~\ref{prop:cookbook_is_qf}.

\begin{proofsketchof}{Proposition~\ref{prop:cookbook_is_qf}(b)}
We need to explain how one can avoid using the additional constants $\{1, \ldots, \ell\}$. Without loss of generality we assume that $\ell \geq 3$. Then a constant $i$ can be encoded by any $\ell$-tuple $(a_1, \ldots, a_\ell)$ with $a_i = b_1$ and $a_j = b_2$ for all $j \neq i$,  using two different elements $b_1, b_2$. A quantifier-free formula can check whether an $\ell$-tuple encodes some constant by checking that the tuple consists of exactly two elements and one element is used exactly once. The position of that element gives the encoded constant. 
The formula $\varphi_\sim$ is used to identify all different encodings of the same constant.  
\end{proofsketchof}
 
\section{Appendix for Section \ref{section:algorithms}: Towards automated correctness tests and feedback}

\subsection{Appendix for the introduction of Section \ref{section:algorithms}: Proofs of undecidability} \label{section:algorithms:undecidability}

\theoremAlgorithmicHardnessOfReductions*

Recall that the problems definable in first-order logic with arithmetic are exactly the problems computable by uniform $\AC^0$ circuits, and thus in particular contained in $\LOGSPACE$ and $\PTIME$. %

\begin{proof}[Proof sketch]
  Part (1a) follows immediately from the undecidability of the finite satisfiability problem of first-order logic.

  We now sketch the proof of part (1b). As a first step, we show this part for an algorithmic problem $\problembis$ in uniform $\TC^0$, afterwards we lift this to $\FO$ with arithmetic. Recall that uniform $\TC^0$ is the class of problems computable by uniform circuits of constant-depth and polynomial size  with unbounded fan-in $\wedge$-, $\vee$- and majority-gates (see, e.g., \cite{Vollmer1999}). 
  
  For the first step, we reduce the Post Correspondence Problem \PCP to $\redgen[\calR][\emptyset][\problembis]$ where $\calR$ is the class of (directed) edge gadget reductions and $\problembis \in \TC^0$ is specified below. As a reminder, an instance of \PCP is a sequence of pairs $(u_1,v_1),\cdots,(u_n,v_n)$ of non-empty words over the alphabet $\{0,1\}$. It is a yes-instance iff there exists a non-empty sequence of indexes $i_1,\cdots,i_k$ such that $u_{i_1}\cdots u_{i_k}=v_{i_1}\cdots v_{i_k}$.
  
  We reduce an instance $I\df(u_1,v_1),\cdots,(u_n,v_n)$ of \PCP to a gadget graph $\gadget_I$ (corresponding to an edge gadget reduction $\rho_I$) which encodes the $n$ pairs of words. The gadget graph $\gadget_I$ consists of an edge from $c^{\gadget_I}$ to $d^{\gadget_I}$ (so that $\rho_I(G)$ contains $G$ as an isolated subgraph, for every $G$), and of the disjoint union of the gadgets $\gadgetb_i$, illustrated in Figure~\ref{fig:gadget_PCP}, for each $1\leq i\leq n$. Each $\gadgetb_i$ consists of
  \begin{itemize}
  \item a unary encoding of the integer $i$ (as the number of in-neighbors of a central node),
  \item an encoding of $u_i$ as a path, where each node (i) is connected to the central node of $\gadgetb_i$, (ii) is marked with its position (encoded in unary as the number of out-neighbors), and (iii) is marked with some small gadget (not represented in the figure) describing whether the corresponding letter in $u_i$ is $0$ or $1$,
  \item a similar path for $v_i$, but with a double-sided edge from its node to the central node, in order to differentiate it from the path representing $u_i$.
  \end{itemize}

  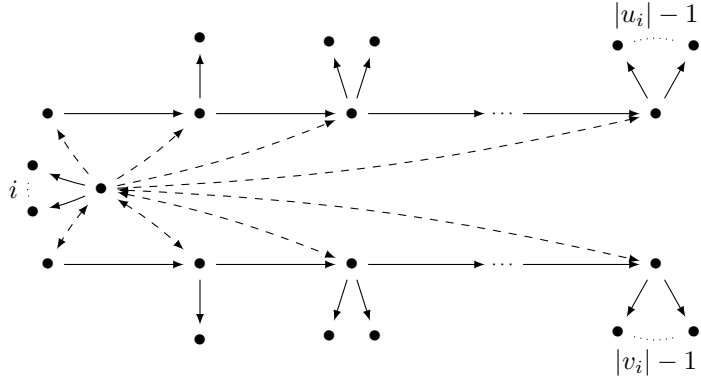
\begin{figure}
    \centering
    \begin{tikzpicture}[scale=1]

      \node at (.7,1) (o) {$\bullet$};

      \node at (-.2,.7) (o1) {$\bullet$};
      \node at (-.2,1.3) (ol) {$\bullet$};

      \draw[->,>=latex] (o) to[bend left=5] (o1);
      \draw[->,>=latex] (o) to[bend left=5] (ol);
      \draw[dotted] (o1) to[bend left=15] node[left] {$i$} (ol);

      \foreach \a in {0,1,2,4} { 
        \node at ({{2*\a}},2) (u\a) {$\bullet$};
        \node at ({{2*\a}},0) (v\a) {$\bullet$};
      }

      \draw[->,>=latex,dashed] (o) to[bend left=5] (u0);
      \draw[<->,>=latex,dashed] (o) to[bend right=5] (v0);
      
      \foreach \a in {1,2,4} { 
        \draw[->,>=latex,dashed] (o) to[bend right=5] (u\a);
        \draw[<->,>=latex,dashed] (o) to[bend left=5] (v\a);
      }
      
      \node[scale=.7] at ({{2*3}},2) (u3) {$\cdots$};
      \node[scale=.7] at ({{2*3}},0) (v3) {$\cdots$};

      \foreach \a/\b in {0/1,1/2,2/3,3/4} {
         \draw[->,>=latex] (u\a) -- (u\b);
         \draw[->,>=latex] (v\a) -- (v\b);
      }

      \node at (2,3) (u11) {$\bullet$};
      \node at (2,-1) (v11) {$\bullet$};

      \node at (3.7,2.95) (u21) {$\bullet$};
      \node at (4.3,2.95) (u22) {$\bullet$};
      \node at (3.7,-.95) (v21) {$\bullet$};
      \node at (4.3,-.95) (v22) {$\bullet$};

      \node at (7.5,2.9) (ul1) {$\bullet$};
      \node at (8.5,2.9) (ull) {$\bullet$};
      \node at (7.5,-.9) (vl1) {$\bullet$};
      \node at (8.5,-.9) (vll) {$\bullet$};

      \foreach \a/\b in {1/11,2/21,2/22,4/l1,4/ll} {
        \draw[->,>=latex] (u\a) -- (u\b);
        \draw[->,>=latex] (v\a) -- (v\b);
      }

      \draw[dotted] (ul1) to[bend left=15] node[above] {$|u_i|-1$} (ull);
      \draw[dotted] (vl1) to[bend right=15] node[below] {$|v_i|-1$} (vll);

    \end{tikzpicture}
        \caption{The gadget $\mathfrak h_i$, encoding the pair $(u_i,v_i)$. It consists of a central node with $i$ isolated neighbors, and of two paths representing $u_i$ and $v_i$. The node representing each letter of $u_i$ and $v_i$ has as many isolated neighbors as its position in the word, and is marked with a small gadget (not represented here) which encodes whether the letter is a $0$ or $1$. The final gadget $\mathfrak g_I$ is the disjoint union of all these $\mathfrak h_i$, for $1\leq i\leq n$, and of two nodes which correspond to constants $c$ and $d$, with an edge from $c$ to $d$.}
    \label{fig:gadget_PCP}                                                       
  \end{figure}

  Let us now describe the problem $P^\star\in\TC^0$ such that for every graph $G$:
  
  \begin{itemize}
   \item[] $\rho_I(G)\in P^\star$ iff $G$ encodes a sequence $i_1,\cdots,i_k$ such that $u \df u_{i_1}\cdots u_{i_k}=v_{i_1}\cdots v_{i_k} \df v$
  \end{itemize}

  Such an encoding, represented in Figure~\ref{fig:witness_PCP}, is composed of
  \begin{itemize}
  \item a path of length $k$, where node $l$ has $i_l$ isolated neighbors,
  \item a \emph{ladder} consisting of two paths (one which will correspond to $u$, and the other to $v$) with $|u|=|v|$ nodes, where each node is marked with its position in the path (encoded in unary by the number of neighbors),
  \item for each $1\leq l\leq k$, one edge from the node representing index $i_l$ to the node in the $u$-path at position $\sum_{j<l}|u_{i_j}|$, and another edge to the node in the $v$-path at position $\sum_{j<l}|v_{i_j}|$.
  \end{itemize}

  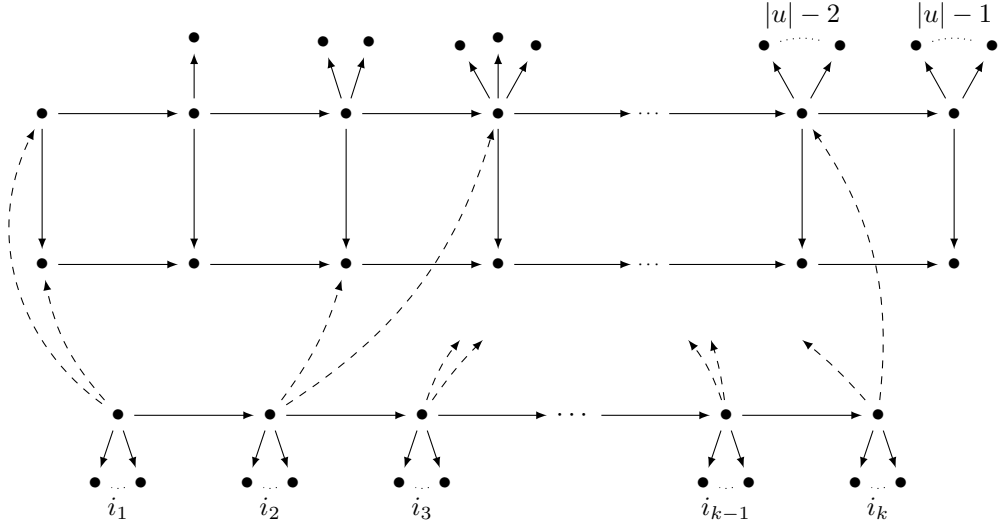
\begin{figure}
    \centering
    \begin{tikzpicture}[scale=1]

      \foreach \a in {0,2,4,6,10,12} { 
        \node at (\a,2) (t\a) {$\bullet$};
        \node at (\a,0) (b\a) {$\bullet$};
        \draw[->,>=latex] (t\a) -- (b\a);
      }

      \node[scale=.7] at (8,2) (t8) {$\cdots$};
      \node[scale=.7] at (8,0) (b8) {$\cdots$};
      
      \foreach \a/\b in {0/2,2/4,4/6,6/8,8/10,10/12} {
        \draw[->,>=latex] (t\a) -- (t\b);
        \draw[->,>=latex] (b\a) -- (b\b);
      }

      \node at (2,3) (t21) {$\bullet$};

      \node at (3.7,2.95) (t41) {$\bullet$};
      \node at (4.3,2.95) (t42) {$\bullet$};

      \node at (5.5,2.9) (t61) {$\bullet$};
      \node at (6,3) (t62) {$\bullet$};
      \node at (6.5,2.9) (t63) {$\bullet$};

      \node at (9.5,2.9) (tbl1) {$\bullet$};
      \node at (10.5,2.9) (tbll) {$\bullet$};

      \node at (11.5,2.9) (tl1) {$\bullet$};
      \node at (12.5,2.9) (tll) {$\bullet$};

      \foreach \a/\b in {2/21,4/41,4/42,6/61,6/62,6/63,10/bl1,10/bll,12/l1,12/ll} {
        \draw[->,>=latex] (t\a) -- (t\b);
      }

      \draw[dotted] (tbl1) to[bend left=15] node[above] {$|u|-2$} (tbll);
      \draw[dotted] (tl1) to[bend left=15] node[above] {$|u|-1$} (tll);
      
      \foreach \a in {1,2,3,5,6} { 
        \node at ({{2*\a-1}},-2) (i\a) {$\bullet$};
        \node at ({{2*\a-1.3}},-2.9) (il\a) {$\bullet$};
        \node at ({{2*\a-.7}},-2.9) (ir\a) {$\bullet$};
        \draw[->,>=latex] (i\a) -- (il\a);
        \draw[->,>=latex] (i\a) -- (ir\a);
      }

      \draw[dotted] (ir1) to[bend left=15] node[below]{$i_1$} (il1);
      \draw[dotted] (ir2) to[bend left=15] node[below]{$i_2$} (il2);
      \draw[dotted] (ir3) to[bend left=15] node[below]{$i_3$} (il3);
      \draw[dotted] (ir5) to[bend left=15] node[below]{$i_{k-1}$} (il5);
      \draw[dotted] (ir6) to[bend left=15] node[below]{$i_k$} (il6);
      
      \node at ({{3+4}},-2) (i4) {$\cdots$};
      
      \foreach \a/\b in {1/2,2/3,3/4,4/5,5/6} { 
        \draw[->,>=latex] (i\a) -- (i\b);
      }

      \draw[->,>=latex,dashed] (i1) to[bend left=40] (t0);
      \draw[->,>=latex,dashed] (i1) to[bend left=15] (b0);

      \draw[->,>=latex,dashed] (i2) to[bend right=20] (t6);
      \draw[->,>=latex,dashed] (i2) to[bend right=10] (b4);

      \draw[->,>=latex,dashed] (i3) to[bend left=10] (5.5,-1);
      \draw[->,>=latex,dashed] (i3) to[bend left=5] (5.8,-1);
      
      \draw[->,>=latex,dashed] (i5) to[bend right=10] (8.5,-1);
      \draw[->,>=latex,dashed] (i5) to[bend right=5] (8.8,-1);

      \draw[->,>=latex,dashed] (i6) to[bend right=20] (t10);
      \draw[->,>=latex,dashed] (i6) to[bend right=5] (10,-1);

    \end{tikzpicture}
    \caption{The encoding of a sequence of indexes $i_1,\cdots,i_k$ such that $u\df u_{i_1}\cdots u_{i_k}=v_{i_1}\cdots v_{i_k}\df v$. The bottom path encodes the sequence itself, while the top part (resp. bottom part) of the ladder indicates where each subword $u_{i_l}$ (resp. $v_{i_l}$) starts in $u$ (resp. $v$). In this example, we have $|u_{i_1}|=3$, $|v_{i_1}|=2$ and $|u_{i_k}|=2$.}
    \label{fig:witness_PCP}                                                       
  \end{figure}

  In $\TC^0$, one can check, given $\rho_I(G)$, that the sequence encoded by $G$ is indeed a witness to $I$ being a yes-instance for \PCP. Indeed, the arithmetic available in $\TC^0$ (in particular, additions and subtraction between numbers of neighbors, which encode positions) allows to ``fill'' the ladder with $0$'s and $1$'s, according to the sequence of indexes $i_1,\cdots,i_k$, and to check that both paths of the ladder coincide.

  By construction, the \PCP-instance $I$ is a yes-instance iff there exists some graph $G$ such that $\rho_I(G)\in\problembis$, i.e. iff $\rho_I\notin\redgen[\calR][\emptyset][\problembis]$.

  For lifting this proof sketch to first-order logic with arithmetic, we recall that in this logic one can add $\log^c n$ numbers, if the size of the domain is $n$ (see, e.g., \cite[Theorems 1.21 and 4.73]{Vollmer1999}). Thus, by ensuring that $\rho_I(G)$ has a large domain, but only a logarithmically small part encodes a solution for \PCP, all computations done in the sketch above can also be done in first-order logic with arithmetic. Implementing this idea is tedious, but not difficult.

	For proving part (1c), we reduce the finite satisfiability problem of the fragment $\forall^\star\exists^\star\FO$ of first-order logic to \redqf[\emptyset][\formbis]. The former problem is undecidable already on the vocabulary of graphs (see, for instance, \cite{BorgerGG2001}). 
  Given a formula \[\formule:=\forall x_1,\cdots\forall x_n,\ \exists y_1,\cdots,\exists y_m,\ \formulebis(x_1,\cdots,x_n,y_1,\cdots,y_m)\] belonging to this fragment, where $\formulebis$ is quantifier-free, we compute the following quantifier-free \FO-interpretation $\inter[\formule]$ from $\{\edgerel\}$ to $\{\edgerel\}$:
  \begin{itemize}
  \item its dimension is $k:=\max(n,m)$,
  \item it is $1$-copying,
  \item it interprets the relation $\edgerel$ as the set of $2k$-uples $(x_1,\cdots,x_k,y_1,\cdots,y_k)$ satisfying \[\formulebis(x_1,\cdots,x_n,y_1,\cdots,y_m)\,.\]
  \end{itemize}
  Now $\inter[\formule]$ is a yes-instance of \redqf[\emptyset][\formbis] iff no graph $\graphe$ is such that $\inter[\formule](\graphe)\models\formbis$, i.e. iff  $\inter[\formule][-1](\formbis)$ is not satisfiable. We conclude by noticing that $\inter[\formule][-1](\formbis)$ is equivalent to $\formule$. 
  
	For proving part (2), suppose that $\calL$ is a fragment of second-order logic with undecidable  finite satisfiability problem. We reduce finite satisfiability of $\calL$ to the complement of $\redgen[\calR][P][\logic]$ by mapping a formula $\varphi \in \calL$ to $(\varphi, \rho_\text{id})$, where $\rho_\text{id}$ is the identity mapping. Then $\varphi$ is satisfiable if and only if $\rho_\text{id}$ is not a reduction from $\emptyset$ to the problem defined by $\varphi$. Similarly, for $\redgen[\calR][\logic][P]$.

\end{proof}

\subsection{Proofs of Section \ref{section:explicit-problems}: Warm-up: Reductions between explicit algorithmic problems}\label{app:explicit-characterisations}

\subsubsection{\texorpdfstring{Characterization of global gadget reductions from {$k$-{\Clique}} to {$\ell$-{\Clique}}}{Characterization of global gadget reductions from k-Clique to l-Clique}}

\theoremConcreteProblemskCliquetolClique*
\begin{proofsketch}
  We first prove that (2) implies (1). So, suppose that $\gadget_\rho \df (V_\rho, E_\rho)$ satisfies conditions (a) -- (c). Let $G = (V, E)$ be an arbitrary graph and define $G^\star \df \rho(G)$. Let $V^\star$ and $E^\star$ be the nodes and edges of $G^\star$. Note that $V^\star = V \uplus V_\rho$. Denote by $A^\star$ the nodes introduced by the set $A$ of nodes of $\gadget_\rho$.

  Suppose that $G$ is a positive instance of $k$-{\Clique}, witnessed by some $k$-clique $U \subseteq V$. Let $B^\star$ be an $(\ell - k)$-clique in $A^\star$, which exists due to condition (b). Then $U \cup B^\star$ is an $\ell$-clique in $G^\star$ and thus $G^\star$ is a positive instance of $\ell$-\Clique.
		
    Now, suppose that $G^\star$ is a positive instance of $\ell$-\Clique and let $U^\star \subseteq V^\star$ be an $\ell$-clique in $G^\star$. By (a), $\gadget_\rho$ has no $\ell$-clique and thus, since only nodes in $A^\star$ have an edge to nodes from~$V$, the clique $U^\star$ consists of nodes from $V$ and nodes from $A^\star$ only. By (c), the largest clique in $A^\star$ has at most size $\ell - k$. Therefore $U^\star$ must have a subclique $U$ of size $k$ with $U \subseteq V$. Thus $G$ is a positive instance of $k$-\Clique.

    We now prove that (1) implies (2). Suppose that condition (b) is violated, but condition (a) holds. Then the $k$-clique $C_{k}$, a positive $k$-\Clique instance, is mapped to a negative $\ell$-\Clique instance (if it was positive, then condition (b) would be true); a contradiction. Suppose now that condition (a) is violated. Then the empty graph (a negative $k$-\Clique instance) is mapped to a positive $\ell$-\Clique instance (consisting of the gadget only); a contradiction.
    Finally suppose that condition (c) is violated. Then the $(k-1)$-clique $C_{k-1}$, a negative $k$-\Clique instance, is mapped to a positive $\ell$-\Clique instance (the clique consisting of $C_{k-1}$ and the $(\ell - k +1)$-clique contained in $A$); a contradiction.
    
    The counterexamples can be easily extracted from the above three cases. 
\end{proofsketch}

\subsubsection{Characterization of edge gadget reductions from {\VC} to {\FVS}}
\theoremConcreteProblemsVCtoFVS*

\begin{proofsketch}
		We first prove that (2) implies (1). Suppose that $\gadget_\rho$ satisfies the conditions (a) and (b). Let $G = (V, E)$ be an arbitrary graph and define $G^\star \df \rho(G)$. Let $V^\star$ and $E^\star$ be the nodes and edges of $G^\star$.

		Suppose that $G$ is a positive instance of $k$-{\VC} with  $k$-vertex cover $U \subseteq V$. Then $U$ is a $k$-feedback vertex set of $G^\star$, since each edge $(u, v)$ of $G$ intersects $U$ and both $\{u\}$ and $\{v\}$ are feedback vertex sets of the gadget introduced for $(u,v)$ in $G^\star$, due  to (a). Thus $G^\star$ is a positive instance of $k$-\FVS. 

    Now, suppose that $G^\star$ is a positive instance of $k$-\FVS with $k$-feedback vertex set $U^\star \subseteq V^\star$. We observe that all nodes of $U^\star$ introduced by one gadget $G_{(u,v)}$ for some edge $(u,v) \in E$ (i.e. nodes that where not present in $V$) can be replaced by $u$ or $v$ due to condition (a). As feedback vertex sets of $G_{(u,v)}$ have at least size 1 due to condition (b), there is a feedback vertex set $U$ of $G^\star$ of size $\leq k$ which only uses nodes from $V$. This $U$ is a $k$-vertex cover of $G$ and thus $G$ is a positive $k$-\VC instance.
    
    We now prove that (1) implies (2) via contraposition. More precisely, we show that if one of the conditions (a) or (b) is violated then $ \rho $ is not an edge gadget reduction from $k$-\VC to $k$-\FVS. 
    
    Suppose that condition (a) is violated but (b) holds. Since then  neither $ \{ c \} $ nor $ \{ d \} $ are feedback vertex sets of $ \gadget_\rho $, this implies that $ \gadget_\rho $ must contain a cycle and that any feedback vertex set of $ \gadget_\rho $ must use (i) at least one gadget node that is neither $ c $ nor $ d $ or (ii) both $ c $ and $ d $. Consider paths $P_\ell$ with $\ell$ nodes and define $P^\star_\ell \df \rho(P_\ell)$. Now, $P_{2k+1}$ is a positive $k$-\VC instance. In case (i), $P^\star_{2k+1}$ is a negative $k$-\FVS instance as it requires more than $ k$ nodes to cover all cycles, since at least one gadget node different from $ u $ or $ v $ must be picked for each edge of $ P_{2k+1} $. In case (ii), two nodes are required for each of the $ 2k $ edges of $ P_{2k+1} $ resulting in a total of at least $ 2k+1 > k $ nodes. Therefore, $P^\star_{2k+1}$ is also a negative $ k $-\FVS instance in case (ii). Thus $ \rho $ is not a reduction.
    
    Now, suppose that condition (b) is violated. The empty set can only be a feedback vertex set of
    $ \gadget_\rho $ if $ \gadget_\rho $ does not contain any cycles. Thus for each graph $G$, the graph $G^\star \df \rho(G)$  contains at most as many cycles as $ G $. Consider the negative $k$-\VC instance $P_{3k+1}$. Since $P_{3k+1}^\star$ contains no cycle, there is a feedback vertex set of size $0$ for $P_{3k+1}^\star$ and therefore it is a positive $k$-\FVS instance. Thus  $ \rho $ is not a reduction. 
    
    This concludes the proof of (1) implies (2).
    
    The graphs $P_\ell$, for suitable $\ell$, can be used as counterexamples.
\end{proofsketch}

\subsubsection{\texorpdfstring{Characterization of node gadget reductions from {\HCd} to {\HCu}}{Characterization of node gadget reductions from HamiltonCycle-d to HamiltonCycle-u}}\label{section:hcd-hcu}

In this section, we characterize restricted node gadget reductions which can be used for reducing \HCd to \HCu. 

For simplicity, we represent node gadget reductions $\rho$ by node gadgets~$\gadget_\rho$. The following definitions are equivalent to the ones provided in Section \ref{section:expressive-power}, yet they fix useful notation.

Informally, a node gadget  $\gadget_\rho$ consists of two copies of a node graph and a set of additional edges between these copies. As an example, the standard reduction from $\HCd$ to $\HCu$ is represented by the node gadget  \begin{tikzpicture}[baseline, scale=0.5]

                 \node[targetnode-old-small] (c1) at (0,0.5) {};
                 \node[targetnode-old-small] (c2) at (0.5,0.5) {};
                 \node[targetnode-old-small] (c3) at (1,0.5) {};
                 \node[targetnode-old-small] (d1) at (0,0) {};
                 \node[targetnode-old-small] (d2) at (0.5,0) {};
                 \node[targetnode-old-small] (d3) at (1,0) {};
                 \draw[targetedge-old-small] (c1) edge (c2);
                 \draw[targetedge-old-small] (c2) edge (c3);
                 \draw[targetedge-old-small] (d1) edge (d2);
                 \draw[targetedge-old-small] (d2) edge (d3);

                 \draw[targetedge-new-small] (c3) edge (d1);
                 \end{tikzpicture} consisting of two copies of the node graph  \begin{tikzpicture}[baseline, scale=0.5]
                 \node[targetnode-old-small] (c1) at (0,0.25) {};
                 \node[targetnode-old-small] (c2) at (0.5,0.25) {};
                 \node[targetnode-old-small] (c3) at (1,0.25) {};
                 \draw[targetedge-old-small] (c1) edge (c2);
                 \draw[targetedge-old-small] (c2) edge (c3);
                 \end{tikzpicture} with one additional edge between them (highlighted in blue).
Formally, a node gadget $\gadget_\rho$ is a graph $(V_\rho, E_\rho)$ such that $V_\rho = V_\triangleright \biguplus V_\triangleleft$ and $E_\rho = E_\triangleright \biguplus E_{\triangleright\triangleleft} \biguplus E_\triangleleft$ such that $V_\triangleright = \{1_\triangleright, \ldots, k_\triangleright\}$ for some $k \in \N$,  $V_\triangleleft = \{1_\triangleleft, \ldots, k_\triangleleft\}$, $E_{\triangleright\triangleleft} \subseteq V_\triangleright \times V_\triangleleft $, and the graphs $G_\triangleright \df (V_\triangleright, E_\triangleright)$ and $G_\triangleleft\df(V_\triangleleft, E_\triangleleft)$ are isomorphic via $\pi$ with $\pi(i_\triangleright) = i_\triangleleft$. The edges in $E_{\triangleright\triangleleft} $ are called \emph{cross-edges} (the edge highlighted in blue in the gadget above is a cross-edge). The isomorphic graphs $G_\triangleright$ and $G_\triangleleft$ are called \emph{node graphs}. 

Applying a node gadget reduction $\rho$ with node gadget $\gadget_\rho$ to a graph $ G=(V,E) $ yields a graph $G^\star \df (V^\star, E^\star)$ in which each node $v \in V$ is replaced by a copy of the node graph, and each edge $(u,v) \in E$ induces cross-edges between the introduced node gadgets. Formally, the node set $V^\star$ is the union of sets $V^\star_u \df \{1_u, \ldots, k_u\}$ of nodes for each node $u \in V$. The edge set $E^\star$ contains the following edges
\begin{itemize}
 \item for all $u \in V$: edges $(i_u, j_u)$, if $(i_\triangleright, j_\triangleright) \in E_\triangleright$ (the edges induced by the node graph for node $u$); and
 \item for all $(u,v) \in E$: edges $(i_u, j_v)$, if $(i_\triangleright, j_\triangleleft) \in E_{\triangleright\triangleleft}$ (the edges induced by cross-edges for the edge $(u,v)$).
 
\end{itemize}

We now prove the following characterization.

\begin{proposition}\label{thm:proposition-hcd-hcu-appendix}
    Let $ \rho $ be a node gadget reduction with node gadget $\gadget_\rho$ and node graphs $ G_\triangleright $ and $ G_\triangleleft $ such that $ G_\triangleright $ and $ G_\triangleleft $ have at most three nodes each. Then the following are equivalent: 
    \begin{enumerate}
        \item $ \rho $ is a reduction from {\HCd} to {\HCu}
        \item $\gadget_\rho$ is either of the following node gadgets:
        \setlength{\tabcolsep}{2mm}
        \begin{figure}[H]
            \centering
            \begin{tabular}{l l l}
                \begin{tikzpicture}[baseline, scale=0.8]
                 \draw[rounded corners = 2pt] (-0.5,-0.75) rectangle (2.5,1.75);
                 \node[targetnode-old-big, label={[above=-0.5mm]:{\scriptsize $1_\triangleright$}}] (c1) at (0,1) {};
                 \node[targetnode-old-big, label={[above=-0.5mm]:{\scriptsize $2_\triangleright$}}] (c2) at (1,1) {};
                 \node[targetnode-old-big, label={[above=-0.5mm]:{\scriptsize $3_\triangleright$}}] (c3) at (2,1) {};
                 \node[targetnode-old-big, label={[below=2.5mm]:{\scriptsize $1_\triangleleft$}}] (d1) at (0,0) {};
                 \node[targetnode-old-big, label={[below=2.5mm]:{\scriptsize $2_\triangleleft$}}] (d2) at (1,0) {};
                 \node[targetnode-old-big, label={[below=2.5mm]:{\scriptsize $3_\triangleleft$}}] (d3) at (2,0) {};
                 \draw[targetedge-old] (c1) edge (c2);
                 \draw[targetedge-old] (c2) edge (c3);
                 \draw[targetedge-old] (d1) edge (d2);
                 \draw[targetedge-old] (d2) edge (d3);

                 \draw[targetedge-new] (c3) edge (d1);
                 \end{tikzpicture}
                 &
                \begin{tikzpicture}[baseline, scale=0.8]
                 \draw[rounded corners = 2pt] (-0.5,-0.75) rectangle (2.5,1.75);
                 \node[targetnode-old-big, label={[above=-0.5mm]:{\scriptsize $1_\triangleright$}}] (c1) at (0,1) {};
                 \node[targetnode-old-big, label={[above=-0.5mm]:{\scriptsize $2_\triangleright$}}] (c2) at (1,1) {};
                 \node[targetnode-old-big, label={[above=-0.5mm]:{\scriptsize $3_\triangleright$}}] (c3) at (2,1) {};
                 \node[targetnode-old-big, label={[below=2.5mm]:{\scriptsize $1_\triangleleft$}}] (d1) at (0,0) {};
                 \node[targetnode-old-big, label={[below=2.5mm]:{\scriptsize $2_\triangleleft$}}] (d2) at (1,0) {};
                 \node[targetnode-old-big, label={[below=2.5mm]:{\scriptsize $3_\triangleleft$}}] (d3) at (2,0) {};
                 \draw[targetedge-old] (c1) edge (c2);
                 \draw[targetedge-old] (c2) edge (c3);
                 \draw[targetedge-old] (d1) edge (d2);
                 \draw[targetedge-old] (d2) edge (d3);

                 \draw[targetedge-new] (c1) edge (d3);
                 \end{tikzpicture}
                 &
                \begin{tikzpicture}[baseline, scale=0.8]
                 \draw[rounded corners = 2pt] (-0.5,-0.75) rectangle (2.5,1.75);
                 \node[targetnode-old-big, label={[above=-0.5mm]:{\scriptsize $1_\triangleright$}}] (c1) at (0,1) {};
                 \node[targetnode-old-big, label={[above=-0.5mm]:{\scriptsize $2_\triangleright$}}] (c2) at (1,1) {};
                 \node[targetnode-old-big, label={[above=-0.5mm]:{\scriptsize $3_\triangleright$}}] (c3) at (2,1) {};
                 \node[targetnode-old-big, label={[below=2.5mm]:{\scriptsize $1_\triangleleft$}}] (d1) at (0,0) {};
                 \node[targetnode-old-big, label={[below=2.5mm]:{\scriptsize $2_\triangleleft$}}] (d2) at (1,0) {};
                 \node[targetnode-old-big, label={[below=2.5mm]:{\scriptsize $3_\triangleleft$}}] (d3) at (2,0) {};
                 \draw[targetedge-old] (c1) edge (c2);
                 \draw[targetedge-old] (c2) edge (c3);
                 \draw[targetedge-old] (d1) edge (d2);
                 \draw[targetedge-old] (d2) edge (d3);

                 \draw[targetedge-new] (c1) edge (d1);
                 \draw[targetedge-new] (c3) edge (d1);
                 \end{tikzpicture}\\~\\
                \begin{tikzpicture}[baseline, scale=0.8]
                 \draw[rounded corners = 2pt] (-0.5,-0.75) rectangle (2.5,1.75);
                 \node[targetnode-old-big, label={[above=-0.5mm]:{\scriptsize $1_\triangleright$}}] (c1) at (0,1) {};
                 \node[targetnode-old-big, label={[above=-0.5mm]:{\scriptsize $2_\triangleright$}}] (c2) at (1,1) {};
                 \node[targetnode-old-big, label={[above=-0.5mm]:{\scriptsize $3_\triangleright$}}] (c3) at (2,1) {};
                 \node[targetnode-old-big, label={[below=2.5mm]:{\scriptsize $1_\triangleleft$}}] (d1) at (0,0) {};
                 \node[targetnode-old-big, label={[below=2.5mm]:{\scriptsize $2_\triangleleft$}}] (d2) at (1,0) {};
                 \node[targetnode-old-big, label={[below=2.5mm]:{\scriptsize $3_\triangleleft$}}] (d3) at (2,0) {};
                 \draw[targetedge-old] (c1) edge (c2);
                 \draw[targetedge-old] (c2) edge (c3);
                 \draw[targetedge-old] (d1) edge (d2);
                 \draw[targetedge-old] (d2) edge (d3);

                 \draw[targetedge-new] (c1) edge (d1);
                 \draw[targetedge-new] (c1) edge (d3);
                 \end{tikzpicture}
                &
                \begin{tikzpicture}[baseline, scale=0.8]
                 \draw[rounded corners = 2pt] (-0.5,-0.75) rectangle (2.5,1.75);
                 \node[targetnode-old-big, label={[above=-0.5mm]:{\scriptsize $1_\triangleright$}}] (c1) at (0,1) {};
                 \node[targetnode-old-big, label={[above=-0.5mm]:{\scriptsize $2_\triangleright$}}] (c2) at (1,1) {};
                 \node[targetnode-old-big, label={[above=-0.5mm]:{\scriptsize $3_\triangleright$}}] (c3) at (2,1) {};
                 \node[targetnode-old-big, label={[below=2.5mm]:{\scriptsize $1_\triangleleft$}}] (d1) at (0,0) {};
                 \node[targetnode-old-big, label={[below=2.5mm]:{\scriptsize $2_\triangleleft$}}] (d2) at (1,0) {};
                 \node[targetnode-old-big, label={[below=2.5mm]:{\scriptsize $3_\triangleleft$}}] (d3) at (2,0) {};
                 \draw[targetedge-old] (c1) edge (c2);
                 \draw[targetedge-old] (c2) edge (c3);
                 \draw[targetedge-old] (d1) edge (d2);
                 \draw[targetedge-old] (d2) edge (d3);

                 \draw[targetedge-new] (c1) edge (d3);
                 \draw[targetedge-new] (c3) edge (d3);
                 \end{tikzpicture}
                &
                \begin{tikzpicture}[baseline, scale=0.8]
                 \draw[rounded corners = 2pt] (-0.5,-0.75) rectangle (2.5,1.75);
                 \node[targetnode-old-big, label={[above=-0.5mm]:{\scriptsize $1_\triangleright$}}] (c1) at (0,1) {};
                 \node[targetnode-old-big, label={[above=-0.5mm]:{\scriptsize $2_\triangleright$}}] (c2) at (1,1) {};
                 \node[targetnode-old-big, label={[above=-0.5mm]:{\scriptsize $3_\triangleright$}}] (c3) at (2,1) {};
                 \node[targetnode-old-big, label={[below=2.5mm]:{\scriptsize $1_\triangleleft$}}] (d1) at (0,0) {};
                 \node[targetnode-old-big, label={[below=2.5mm]:{\scriptsize $2_\triangleleft$}}] (d2) at (1,0) {};
                 \node[targetnode-old-big, label={[below=2.5mm]:{\scriptsize $3_\triangleleft$}}] (d3) at (2,0) {};
                 \draw[targetedge-old] (c1) edge (c2);
                 \draw[targetedge-old] (c2) edge (c3);
                 \draw[targetedge-old] (d1) edge (d2);
                 \draw[targetedge-old] (d2) edge (d3);

                 \draw[targetedge-new] (c3) edge (d1);
                 \draw[targetedge-new] (c3) edge (d3);
                 \end{tikzpicture}
            \end{tabular}%
        \end{figure}
    \end{enumerate}%

    Furthermore, if $ \rho $ is not a reduction from {\HCd} to {\HCu}, a counterexample can be computed efficiently.
\end{proposition}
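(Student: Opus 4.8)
The plan is to prove the two implications of the equivalence separately, and then note that the witnesses produced in the ``only if'' part are small and can be output. Throughout I would represent a node gadget reduction $\rho$ by its node graph $H$ (on a port set of size $k\le 3$) together with the set $C$ of \emph{cross‑pairs}: $(i,j)\in C$ means that for every directed edge $(u,v)$ of the source graph $G$, the target $\rho(G)$ contains the edge $i_u-j_v$, on top of the edges $i_u-j_u$ dictated by $E(H)$ for each $u\in V(G)$. For the direction ``(2) $\Rightarrow$ (1)'', I would first record the property common to all six gadgets: $H$ is the path $1-2-3$ and the middle port $2$ occurs in no cross‑pair. Hence in $\rho(G)$ the vertex $2_u$ has exactly the neighbours $1_u$ and $3_u$, so any undirected Hamiltonian cycle of $\rho(G)$ uses both edges $1_u-2_u$ and $2_u-3_u$; that is, it runs through each copy as the \emph{segment} $1_u-2_u-3_u$. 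Thus an undirected Hamiltonian cycle of $\rho(G)$ is exactly a cyclic arrangement of these segments, joined by cross‑edges incident only to the segment ends $1_u,3_u$. For each of the six cross‑pair sets one then checks that one of the two end‑types — port $3$ for the first four gadgets, port $1$ for the last two — has in $\rho(G)$ only cross‑neighbours of the other end‑type, and that such a cross‑edge of $\rho(G)$ stems from a directed edge of $G$ with a fixed orientation; since in the Hamiltonian cycle each segment end carries exactly one cross‑edge, this determines a permutation $\sigma$ of $V(G)$ with $(v,\sigma(v))\in E(G)$ for all $v$ (or $(\sigma(v),v)\in E(G)$ for all $v$). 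Connectedness of the Hamiltonian cycle forces $\sigma$ to be a single cycle, i.e.\ a directed Hamiltonian cycle of $G$; conversely, chaining the segments along a directed Hamiltonian cycle of $G$ yields an undirected Hamiltonian cycle of $\rho(G)$. Since \HCd is invariant under reversing all edges, the possibly reversed correspondence is harmless, so each of the six gadgets is a reduction.

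For ``(1) $\Rightarrow$ (2)'' I would argue in three steps. Step (i): $H$ is the path $P_3$. Applying $\rho$ to the one‑vertex graph (which is not in \HCd, as there are no loops) gives a single copy of $H$ with no cross‑edges, which must not be Hamiltonian; this already rules out $H=K_3$. For every remaining option with $|V(H)|\le 3$ that is not $P_3$ (an edgeless graph on $0$, $1$ or $2$ vertices, a single edge, or a single edge plus an isolated vertex), and for every corresponding $C$, one exhibits a small directed graph — drawn from a fixed finite family such as a directed $2$‑cycle, a single directed edge, a directed triangle, a transitive triangle, or the graph with vertices $a,b,c$ and edges $\{a\to b,\ b\to a,\ b\to c,\ c\to b\}$ — on which $\rho$ maps a yes‑instance of \HCd to a non‑Hamiltonian graph, or a no‑instance to a Hamiltonian graph. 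Hence $H$ is connected, has exactly three vertices and is not a triangle, so $H=P_3$; after relabelling its ports we may assume $E(H)=\{\{1,2\},\{2,3\}\}$.

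Step (ii): $C\subseteq\{1,3\}^2$, i.e.\ no cross‑pair uses the middle port $2$. If some cross‑pair does, then on a suitable small no‑instance of \HCd the extra cross‑edge at a copy of port $2$ lets one build an undirected Hamiltonian cycle that bypasses a segment end; this is verified by a short case distinction on the offending pair (for example, if $C\ni(3,1),(2,1)$, then for the no‑instance with vertices $a,b,c$ and edges $\{a\to b,\ b\to a,\ b\to c,\ c\to b\}$ the cycle $1_a-2_a-3_a-1_b-3_c-2_c-1_c-3_b-2_b-1_a$ is Hamiltonian; the case $C=\{(2,1)\}$ alone is handled by the directed $n$‑cycle, whose image has a degree‑one vertex). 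Step (iii): determining $C$. With $H=P_3$ and $C\subseteq\{1,3\}^2$, copies are again forced to be traversed as segments, and two necessary conditions cut $C$ down. Applying $\rho$ to the directed $n$‑cycle forces a consistent labelling of the $n$ segments by enter‑end/exit‑end with, for consecutive copies, the exit‑end of one joined by a cross‑edge to the enter‑end of the next; such a labelling exists for all $n$ iff $(3,1)\in C$ or $(1,3)\in C$. Applying $\rho$ to a single directed edge $a\to b$ (a no‑instance) forces $\rho(a\to b)$ to be non‑Hamiltonian; since a Hamiltonian path of $P_3$ joins only ports $1$ and $3$, one computes that $\rho(a\to b)$ is Hamiltonian exactly when $\{(1,3),(3,1)\}\subseteq C$ or $\{(1,1),(3,3)\}\subseteq C$. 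The subsets of $\{1,3\}^2$ containing exactly one of $(1,3),(3,1)$ and not both of $(1,1),(3,3)$ are precisely the six sets realised by the gadgets in the statement; by ``(2) $\Rightarrow$ (1)'' each is a reduction, so the two conditions are also sufficient. Every failing case above comes with an explicit counterexample graph from a fixed finite family (with at most $3|V(H)|\le 9$ vertices in the relevant instances), which gives the ``furthermore'' claim.

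I expect the main obstacle to be Step (ii), and to a lesser extent the case analysis over non‑path node graphs in Step (i): one must rule out \emph{every} cross‑pair touching the middle port, for \emph{every} admissible $C$, by producing a spurious Hamiltonian cycle on a small no‑instance, rather than analysing one fixed reduction — so the bypass argument has to be made robust across all shapes of $C$. Once Steps (i) and (ii) are in place, Step (iii) and the direction ``(2) $\Rightarrow$ (1)'' are finite checks.
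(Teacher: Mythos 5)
Your skeleton is sound, and two pieces of it are genuinely different from (and cleaner than) the paper's proof. For sufficiency, the paper reduces to the single gadget $P_{(1_\triangleright,1_\triangleleft),(3_\triangleright,1_\triangleleft)}$ via symmetry lemmas and then runs an ad-hoc ``open ends'' induction; your uniform argument -- the untouched middle port forces every copy to be traversed as a segment, the cross-edges at the distinguished end-type define an injective, hence bijective, map $\sigma$ with $(v,\sigma(v))\in E$ (or the reverse orientation), and the image is Hamiltonian iff $\sigma$ is a single cycle -- works for all six gadgets at once, and I checked it goes through. Likewise your Step~(iii) is correct and elegant: once $H=P_3$ and $C\subseteq\{1,3\}^2$, the two tests ``$\rho$ of the directed triangle/$n$-cycle must be Hamiltonian'' and ``$\rho$ of a single directed edge must not be'' force exactly one of $(1,3),(3,1)$ and at most one of $(1,1),(3,3)$, and these six sets are precisely the gadgets in the statement.

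The gap is in Steps~(i) and~(ii), which you flag yourself: these are exactly the exhaustive eliminations that make up the bulk of the paper's proof, and in your write-up they are asserted (``one exhibits a small directed graph from a fixed finite family'') rather than proved. Moreover, the mechanism you announce for Step~(ii) -- a cross-pair touching port $2$ always yields a spurious Hamiltonian cycle on a small no-instance, ``verified by a short case distinction on the offending pair'' -- is wrong as a blanket claim: whether the failure is \negtopos or \postoneg depends on the whole set $C$, not on the offending pair. For example $C=\{(2,2)\}$, and more generally any $C$ in which port $1$ or port $3$ occurs in no cross-pair, never produces a Hamiltonian image at all, so no no-instance can witness failure and one must argue \postoneg (as you do only for $C=\{(2,1)\}$); similarly $C=\{(2,1),(1,1)\}$ is only \postoneg, while $C=\{(2,1),(3,1)\}$ is \negtopos. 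So the case distinction must range over all $2^9$ cross-pair sets (and Step~(i) over all $C$ for each non-path node graph; your enumeration also omits the edgeless three-vertex node graph). To make this finite check tractable and verifiable you need what the paper supplies and your proposal lacks: the monotonicity lemma (sub-gadgets of \postoneg gadgets stay \postoneg, super-gadgets of \negtopos gadgets stay \negtopos), the two symmetry lemmas, one large \postoneg gadget covering in a single stroke every port-$2$ set avoiding $(1,3),(3,1),(3,3)$, and explicit \negtopos witnesses for the rest -- or else a proof that some fixed small test family of source graphs is complete for all $C$. Until Steps~(i) and~(ii) are carried out in such a form, the direction (1)$\Rightarrow$(2), and with it the ``furthermore'' claim about counterexamples, remains unproven.
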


The proof approach is to brute force through all node gadgets whose node gadget graphs have at most three nodes. To reduce the number of gadgets to be considered, we exploit symmetries. The following two lemmas state useful symmetries and properties for reductions concerning Hamiltonian cycles.

The first lemma holds for directed and undirected Hamiltonian cycles. We say that a node gadget $\gadget_{\rho'} \df (V_{\rho'}, E_{\rho'})$ is a sub-gadget of a node gadget $\gadget_\rho \df (V_\rho, E_\rho)$, if $V_{\rho'} = V_\rho$ and $E_{\rho'} \subseteq E_{\rho}$. Conversely, $ \gadget_\rho $ is called a super-gadget of $ \gadget_{\rho'} $.

\begin{lemma}\label{thm:hc_removal-and-addition-of-edges}
	\begin{enumerate}
	 \item If $\rho, \rho'$ are node gadget reductions such that there is an isomorphism between their node gadgets $\gadget_\rho$ and $\gadget_{\rho'}$ which maps $G_\triangleright$ to $G'_\triangleright$ and $G_\triangleleft$ to $G'_\triangleleft$, then for all graphs $G$: $\rho(G) $ has a Hamiltonian cycle if and only if $ \rho'(G) $ has a Hamiltonian cycle.

    \item Let $\rho, \rho'$ be node gadget reductions with node gadgets $\gadget_\rho \df (V_\rho, E_\rho)$,$\gadget_{\rho'} \df (V_{\rho'}, E_{\rho'})$, respectively, and suppose that $\gadget_{\rho'}$ is a sub-gadget of $\gadget_\rho$.  
    Then for all graphs $ G $:
    \begin{enumerate}
        \item\label{thm:hc_removal-and-addition-of-edges_stay-negative} If $ \rho(G) $ does not have a Hamiltonian cycle, then $\rho'(G)$ does not have a Hamiltonian cycle either.
        \item\label{thm:hc_removal-and-addition-of-edges_stay-positive} If $ \rho'(G) $ has a Hamiltonian cycle, then $\rho(G)$ also has a Hamiltonian cycle.
    \end{enumerate}%
	\end{enumerate}
  
\end{lemma}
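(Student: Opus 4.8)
The plan is to reduce both items to the observation that the target graph $\rho(G)$ is, up to relabelling, a disjoint union of copies of the node graph that are glued together along cross-edges dictated by the edges of $G$; hence structural relations between node gadgets transfer directly to the corresponding target graphs, and Hamiltonicity (being preserved under graph isomorphism and under taking spanning supergraphs) transfers with them.

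For item~1, I would fix an arbitrary graph $G=(V,E)$ and build a graph isomorphism $\Phi\colon\rho(G)\to\rho'(G)$; the equivalence then follows since having a Hamiltonian cycle is invariant under graph isomorphism. Writing the assumed isomorphism $\phi\colon\gadget_\rho\to\gadget_{\rho'}$ in index form, the fact that $\phi$ maps $G_\triangleright$ to $G'_\triangleright$ and $G_\triangleleft$ to $G'_\triangleleft$ (respecting the labelled-copy structure fixed by $\pi$ and $\pi'$) yields a single bijection $\sigma$ of $[k]$ — where $k$ is the common size of the two node graphs, equal since $\phi$ maps $V_\triangleright$ onto $V'_\triangleright$ — with $\phi(i_\triangleright)=(\sigma(i))_\triangleright$ and $\phi(i_\triangleleft)=(\sigma(i))_\triangleleft$; moreover $\sigma$ is an isomorphism from the node graph of $\rho$ to that of $\rho'$, and $(i,j)\in E_{\triangleright\triangleleft}\Leftrightarrow(\sigma(i),\sigma(j))\in E'_{\triangleright\triangleleft}$. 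Define $\Phi(i_u)\df(\sigma(i))_u$ for all $u\in V$, $i\in[k]$. This is a bijection between the vertex sets of $\rho(G)$ and $\rho'(G)$; the within-copy edges of a node $u$ are preserved in both directions because $\sigma$ is an isomorphism of node graphs; and for every edge $(u,v)$ of $G$ the cross-edges between the copies of $u$ and $v$ are preserved in both directions because $(i,j)\in E_{\triangleright\triangleleft}\Leftrightarrow(\sigma(i),\sigma(j))\in E'_{\triangleright\triangleleft}$. Since neither target graph has further edges, $\Phi$ is an isomorphism, proving item~1.

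For item~2, I would note that when $\gadget_{\rho'}$ is a sub-gadget of $\gadget_\rho$ — so $V_{\rho'}=V_\rho$, whence the two node graphs share the same vertices, and $E_{\rho'}\subseteq E_\rho$, whence $E'_\triangleright\subseteq E_\triangleright$ and $E'_{\triangleright\triangleleft}\subseteq E_{\triangleright\triangleleft}$ — the graph $\rho'(G)$ is a \emph{spanning} subgraph of $\rho(G)$ for every $G$: both have vertex set $\{i_u\mid u\in V,\ i\in[k]\}$, and every within-copy edge (from $E'_\triangleright$) and every cross-edge (from $E'_{\triangleright\triangleleft}$) of $\rho'(G)$ is also present in $\rho(G)$. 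Hence any undirected Hamiltonian cycle of $\rho'(G)$ is a cycle of $\rho(G)$ visiting all of its vertices, i.e.\ a Hamiltonian cycle of $\rho(G)$; this is item~2(b), and item~2(a) is its contrapositive.

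The routine parts are the book-keeping in item~2 and the edge-preservation checks for $\Phi$. The one point that needs care is in item~1: one must ensure that $\phi$ acts on the two copies $V_\triangleright$ and $V_\triangleleft$ through the \emph{same} index bijection $\sigma$, so that a single vertex map $\Phi$ can simultaneously respect within-copy edges and cross-edges. This is exactly what is meant by ``$\phi$ maps $G_\triangleright$ to $G'_\triangleright$ and $G_\triangleleft$ to $G'_\triangleleft$'' once read as respecting the copy indexing; without that reading, $\phi|_{V_\triangleright}$ and $\phi|_{V_\triangleleft}$ could differ by a nontrivial automorphism of the node graph and the conclusion would need an extra argument. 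In the applications the relevant isomorphisms (e.g.\ the reflection $i\mapsto k+1-i$ of a path applied simultaneously to both copies) satisfy this condition by construction, so the subtlety does not obstruct the use of the lemma.
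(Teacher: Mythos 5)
Your proposal is correct and follows essentially the same route as the paper, which proves item~1 by observing that $\rho(G)$ and $\rho'(G)$ are isomorphic and item~2 by observing that $\rho'(G)$ is a spanning subgraph of $\rho(G)$; you merely spell out the explicit vertex map and edge checks that the paper's two-line sketch leaves implicit. Your remark that the gadget isomorphism should act through a single index bijection on both copies is a sensible clarification of how the hypothesis is meant (and satisfied in all uses of the lemma), not a departure from the paper's argument.
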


\begin{proofsketch}
    \begin{enumerate}
        \item If the condition is satisfied, then $ \rho(G) $ and $ \rho'(G) $ are isomorphic.
        \item If $\gadget_{\rho'}$ is a sub-gadget of $\gadget_{\rho}$ then $\rho(G)$ and  $\rho'(G)$ have the same set of nodes, and edges of $\rho'(G)$ are also edges in  $\rho(G)$.
    \end{enumerate}%
\end{proofsketch}

The next lemma holds for undirected Hamiltonian cycles.

\begin{lemma}\label{lemma:undirectedHam}
	Suppose $\rho, \rho'$ are node gadget reductions with node gadgets $\gadget_\rho = (V_\rho, E_\rho)$,  $\gadget_{\rho'} = (V_{\rho'}, E_{\rho'})$ such that $V_\rho = V_{\rho'} = \{1_\triangleright, \dots, k_\triangleright\} \cup \{1_\triangleleft, \dots, k_\triangleleft\}$
	
	If $\pi: V_\rho \rightarrow V_{\rho'}$ defined as $\pi(i_\triangleright) = i_\triangleleft$ and $\pi(i_\triangleleft) = i_\triangleright$ is an isomorphism, then for every graph $G$: $\rho(G)$ has an undirected Hamiltonian cycle if and only if $\rho'(G)$ has an undirected Hamiltonian cycle. 
\end{lemma}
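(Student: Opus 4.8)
The plan is to split the argument into three parts: (i) read off from the hypothesis what $\rho'$ looks like relative to $\rho$; (ii) reduce the claim to a statement about one reduction applied to $G$ versus to its edge-reversal; and (iii) prove that last statement by transforming Hamiltonian cycles.

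For (i): since $\pi(i_\triangleright)=i_\triangleleft$, the map $\pi$ sends the $\triangleright$-copy of the node graph inside $\gadget_\rho$ isomorphically onto the $\triangleleft$-copy of the node graph inside $\gadget_{\rho'}$, and the $\triangleleft$-copy onto the $\triangleright$-copy. Composing with the built-in label-identity isomorphism $i_\triangleright\mapsto i_\triangleleft$ between the two copies inside a node gadget, this forces $\rho$ and $\rho'$ to use one and the same node graph $H$. Moreover $\pi$ sends a cross-edge $\{i_\triangleright,j_\triangleleft\}$ of $\gadget_\rho$ to $\{i_\triangleleft,j_\triangleright\}$, i.e.\ to the cross-edge $\{j_\triangleright,i_\triangleleft\}$ of $\gadget_{\rho'}$; hence the cross-edge relation of $\rho'$ is the transpose of that of $\rho$, that is, $(i,j)$ is a cross-edge of $\rho$ iff $(j,i)$ is a cross-edge of $\rho'$.

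For (ii): write $\overleftarrow G$ for $G$ with every edge reversed. I would then check that $\rho'(G)$ and $\rho(\overleftarrow G)$ coincide (indeed are literally equal on the natural vertex set $\{i_u\}$). Both have the within-copy edges prescribed by $H$, so only the cross-edges need comparing, and using (i) one sees that in \emph{both} graphs a cross-edge $\{x_u,y_v\}$ is present exactly when $(u,v)$ is an edge of $G$ and $(y,x)$ is a cross-edge of $\rho$, or $(v,u)$ is an edge of $G$ and $(x,y)$ is a cross-edge of $\rho$. So it suffices to prove: $\rho(G)$ has an undirected Hamiltonian cycle iff $\rho(\overleftarrow G)$ does. (For undirected $G$ this is vacuous, so the content lies in the directed case, which is the relevant one for \HCd; note that $\rho(G)$ and $\rho(\overleftarrow G)$ need not be isomorphic as graphs, so this step has genuine content and does not follow from an isomorphism.)

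For (iii) — the heart of the matter — by symmetry ($\overleftarrow{\overleftarrow G}=G$) it is enough to transform an undirected Hamiltonian cycle $C$ of $\rho(G)$ into one of $\rho(\overleftarrow G)$. The guiding picture is that an undirected Hamiltonian cycle may be traversed in either direction, and that reversing the direction of traversal exactly mirrors the edge-reversal that separates $\rho(G)$ from $\rho(\overleftarrow G)$: there is a canonical bijection between the cross-edges of $\rho(G)$ created for a directed edge $(a,b)$ of $G$ and those of $\rho(\overleftarrow G)$ created for $(b,a)$, and inside the two incident gadget copies this bijection merely interchanges the two slots. Concretely, I would keep the within-copy part of $C$, relabel the vertices inside each gadget copy by the involution that reverses the path(s) of $C$ running through that copy, replace each cross-edge of $C$ by its image under the bijection above, and then verify that the result is again a single spanning cycle all of whose within-copy edges lie in $H$. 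I expect this verification to be the main obstacle: one must see that the slot interchange on the cross-edges is compatible with the path-reversal inside the copies, so that the reconnected pieces glue into one cycle rather than several. Since every node graph has at most three vertices, the linear forest that $C$ induces inside a copy has only a handful of shapes, so this reduces to a short finite case analysis — in keeping with the brute-force style of the proposition this lemma serves.
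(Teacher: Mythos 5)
Your steps (i) and (ii) are correct, and in fact more explicit than the paper's own one-line argument (which merely asserts that Hamiltonian cycles ``translate via the isomorphism $\pi$''): the hypothesis does force $\rho$ and $\rho'$ to share the node graph while having transposed cross-edge relations, $\rho'(G)$ then coincides with $\rho(\overleftarrow{G})$, and you are right that $\rho(G)$ and $\rho'(G)$ need not be isomorphic, so a cycle-specific translation is required. The genuine gap is in step (iii), which is where all the content sits. The surgery you propose --- keep the within-copy edges of $C$, reverse the paths of $C$ inside each copy, and replace every used cross-edge by its slot-interchanged mirror --- need not even produce a $2$-regular subgraph, let alone a single spanning cycle. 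The mirror of a cross-edge used along a $G$-edge $(u,v)$ attaches, inside the copy of $u$, at the index at which the original edge was attached inside the copy of $v$; nothing guarantees that this index is a free endpoint of the linear forest that $C$ induces in the copy of $u$, and reversing the paths of that forest changes neither its edge set nor its free endpoints, so it cannot repair this mismatch.

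Concretely, let $\rho$ have node graph the path $1-2-3$ and cross-edges $(3_\triangleright,1_\triangleleft),(3_\triangleright,2_\triangleleft)$, and let $G$ have nodes $v_1,v_2,v_3$ and edges $(v_1,v_2),(v_2,v_1),(v_2,v_3),(v_3,v_2)$. Then $1_{v_1}\,2_{v_1}\,3_{v_1}\,1_{v_2}\,2_{v_2}\,3_{v_3}\,2_{v_3}\,1_{v_3}\,3_{v_2}\,1_{v_1}$ is a Hamiltonian cycle of $\rho(G)$, and your transformation turns it into an edge set in which $1_{v_2}$ and $2_{v_3}$ have degree $3$ while $2_{v_2}$ and $1_{v_3}$ have degree $1$. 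A Hamiltonian cycle of $\rho'(G)$ does exist, but obtaining it from $C$ requires re-assigning which path segment of each copy carries which connection, possibly using along a $G$-edge a different gadget cross-edge than the mirror of the one $C$ used, together with an argument that this re-routing preserves the single-cycle structure; that is the missing heart of the proof, and it is exactly what the paper's terse ``translate via $\pi$'' has to mean. Finally, your fallback of a finite case analysis ``since every node graph has at most three vertices'' is not available: the lemma is stated for arbitrary $k$ (the three-vertex bound is a hypothesis of Proposition~\ref{thm:propositions-hcd-hcu}, not of the lemma), and in any case the example above already has three-vertex node graphs, so such a case analysis would refute rather than verify the proposed surgery.
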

\begin{proofsketch}
	Undirected Hamiltonian paths in $\rho(G)$ translate to undirected Hamiltonian paths in $\rho'(G)$ via the isomorphism $\pi$.
\end{proofsketch}

The following naming scheme for node gadgets will be helpful. Fix the node set $V_\rho \df \{1_\triangleright, \ldots, k_\triangleright\} \cup \{1_\triangleleft, \ldots, k_\triangleleft\}$. By $P_{(i^1_\triangleright, j^1_\triangleleft), \dots, (i^m_\triangleright, j^m_\triangleleft)}$ we denote the node gadget with cross-edges $(i^1_\triangleright, j^1_\triangleleft), \dots, (i^m_\triangleright, j^m_\triangleleft)$ and where the node graphs form paths $1_\triangleright, \dots, k_\triangleright$ and $1_\triangleleft, \dots, k_\triangleleft$. So, for instance, the standard reduction from directed to undirected Hamiltonian cycle is denoted by $P_{(3_\triangleright, 1_\triangleleft)}$. %

\begin{proof}[Proof sketch (of Proposition \ref{thm:proposition-hcd-hcu-appendix})]
    We first prove that (2) implies (1). The gadget $ P_{(3_\triangleright,1_\triangleleft)} $ is the standard gadget for this reduction \cite{Karp72}. It remains to prove that the gadget $P_{(1_\triangleright,1_\triangleleft), (3_\triangleright,1_\triangleleft)}$ is correct, since all other gadgets stated in the proposition are symmetric to $ P_{(3_\triangleright,1_\triangleleft)} $ or $P_{(1_\triangleright,1_\triangleleft), (3_\triangleright,1_\triangleleft)}$, see Figure~\ref{fig:valid-hc-gadgets}.

    We now argue that the gadget $ P_{(1_\triangleright,1_\triangleleft), (3_\triangleright,1_\triangleleft)} $ is a correct node gadget. 
    
    The gadget correctly maps positive instances of \HCd to positive instances of $ \HCu $: this follows from Lemma \ref{thm:hc_removal-and-addition-of-edges}, since this gadget is obtained by adding the edge $ (1_\triangleright,1_\triangleleft) $ to $P_{(3_\triangleright,1_\triangleleft)}$.
    
    We now show that this gadget maps negative instances of \HCd to negative instances of $\HCu$ by contradiction.  Suppose that a negative instance $G$ of $\HCd$ is mapped to a positive instance $G^\star$ of $\HCu$. Any undirected Hamiltonian cycle of $G^\star$ must use at least one edge induced by the cross-edge $(1_\triangleright, 1_\triangleleft)$, because the sub-gadget $P_{(3_\triangleright,1_\triangleleft)}$ without this edge is a correct gadget. So suppose the edge $(1_u, 1_v)$ introduced for the edge $ (u,v) $ is used, see Figure \ref{fig:valid-gadget-edge-instantiation}. The edges $ (1_u, 2_u), (2_u, 3_u), (1_v, 2_v) $ and $ (2_v, 3_v) $ must be used on any Hamiltonian cycle, since $2_u$ and $2_v$ only have two adjacent edges. Thus, the Hamiltonian cycle must pass the gadget introduced for $(u,v)$ via the sequence $3_u, 2_u, 1_u, 1_v, 2_v, 3_v$ or via its reverse. This leaves two "open" ends, $ 3_u $ and $ 3_v $, indicated by a red border in Figure \ref{fig:valid-gadget-edge-instantiation}.
    
    The only way to leave $ 3_v $ is via a gadget inserted for an edge $ (v,w) $, adding edges $ (3_v, 1_w), (1_w, 2_w), (2_w, 3_w) $ to the supposed Hamiltonian cycle, as indicated in Figure \ref{fig:valid-gadget-path-instantiation}. But now we are in the same situation as before: two "open" ends which are now $ 3_u $ and $3_w$. By induction, it follows that there are no undirected Hamiltonian cycles in $G^\star$ which use an edge induced by the node gadget edge $(1_\triangleright, 1_\triangleleft)$. Thus the only undirected Hamiltonian cycles are those which are also present when applying $ P_{(3_\triangleright,1_\triangleleft)}$ to $G$. Since $ P_{(3_\triangleright,1_\triangleleft)}$ is a correct gadget, $ G^\star $ does not have an undirected Hamiltonian cycle; a contradiction.

    This concludes the implication from (2) to (1).

    \tikzstyle{openEndNode} = [draw = black!30!red, thick]
    
    \begin{figure}[tp]
    \begin{subfigure}[tp]{0.45\textwidth}
        \centering
        \begin{tikzpicture}[baseline, scale=0.8]
         \node[vgnode-big, label={[above=-0.5mm]:{\scriptsize $ 1_u $}}] (c1) at (0,1) {};
         \node[vgnode-big, label={[above=-0.5mm]:{\scriptsize $ 2_u $}}] (c2) at (1,1) {};
         \node[vgnode-big, openEndNode, label={[above=-0.5mm]:{\scriptsize $ 3_u $}}] (c3) at (2,1) {};
         \node[vgnode-big, label={[below=2.5mm]:{\scriptsize $ 1_v $}}] (d1) at (0,0) {};
         \node[vgnode-big, label={[below=2.5mm]:{\scriptsize $ 2_v $}}] (d2) at (1,0) {};
         \node[vgnode-big, openEndNode, label={[below=2.5mm]:{\scriptsize $ 3_v $}}] (d3) at (2,0) {};

         \draw[witnessedge] (d1) edge (d2);
         \draw[witnessedge] (c2) edge (c3);
         \draw[witnessedge] (c1) edge (d1);
         \draw[witnessedge] (d2) edge (d3);
         \draw[witnessedge] (c1) edge (c2);

         \draw[vgedge] (c1) edge (c2);
         \draw[vgedge] (c2) edge (c3);
         \draw[vgedge] (d1) edge (d2);
         \draw[vgedge] (d2) edge (d3);

         \draw[crossedge] (c1) edge (d1);
         \draw[crossedge] (c3) edge (d1);
         \end{tikzpicture}
         \caption{Instantiation of the node gadget $ P_{(1_\triangleright,1_\triangleleft),(3_\triangleright,1_\triangleleft)} $ for a directed edge $ (u,v) $.}
         \label{fig:valid-gadget-edge-instantiation}
    \end{subfigure}
    \hfill
    \begin{subfigure}[tp]{0.45\textwidth}
        \centering
        \begin{tikzpicture}[baseline, scale=0.8]
         \node[vgnode-big, label={[above=-0.5mm]:{\scriptsize $ 1_u $}}] (c1) at (0,1) {};
         \node[vgnode-big, label={[above=-0.5mm]:{\scriptsize $ 2_u $}}] (c2) at (1,1) {};
         \node[vgnode-big, openEndNode, label={[above=-0.5mm]:{\scriptsize $ 3_u $}}] (c3) at (2,1) {};
         \node[vgnode-big, label={[left=0.2mm]:{\scriptsize $ 1_v $}}] (d1) at (0,0) {};
         \node[vgnode-big, label={[right=0.2mm]:{\scriptsize $ 2_v $}}] (d2) at (1,0) {};
         \node[vgnode-big, label={[right=0.2mm]:{\scriptsize $ 3_v $}}] (d3) at (2,0) {};
         \node[vgnode-big, label={[below=2.5mm]:{\scriptsize $ 1_w $}}] (e1) at (0,-1) {};
         \node[vgnode-big, label={[below=2.5mm]:{\scriptsize $ 2_w $}}] (e2) at (1,-1) {};
         \node[vgnode-big, openEndNode, label={[below=2.5mm]:{\scriptsize $ 3_w $}}] (e3) at (2,-1) {};

         \draw[witnessedge] (c1) edge (c2);
         \draw[witnessedge] (c2) edge (c3);
         \draw[witnessedge] (d1) edge (d2);
         \draw[witnessedge] (d2) edge (d3);
         \draw[witnessedge] (e1) edge (e2);
         \draw[witnessedge] (e2) edge (e3);
         \draw[witnessedge] (d3) edge (e1);
         \draw[witnessedge] (c1) edge (d1);

         \draw[vgedge] (c1) edge (c2);
         \draw[vgedge] (c2) edge (c3);
         \draw[vgedge] (d1) edge (d2);
         \draw[vgedge] (d2) edge (d3);
         \draw[vgedge] (e1) edge (e2);
         \draw[vgedge] (e2) edge (e3);

         \draw[crossedge] (c1) edge (d1);
         \draw[crossedge] (c3) edge (d1);
         \draw[crossedge] (d1) edge (e1);
         \draw[crossedge] (d3) edge (e1);
         \end{tikzpicture}
         \caption{Instantiation of the node gadget $ P_{(1_\triangleright,1_\triangleleft),(3_\triangleright,1_\triangleleft)} $ for a directed path $ (u,v),(v,w) $.}
         \label{fig:valid-gadget-path-instantiation}
    \end{subfigure}
    \caption{Illustration that an edge created by the node gadget edge $ (1_\triangleright, 1_\triangleleft) $ of the node gadget $ P_{(1_\triangleright,1_\triangleleft),(3_\triangleright,1_\triangleleft)} $ is never used in an undirected Hamiltonian cycle.}
    \label{fig:valid-gadget-edge-path-instantiation}
    \end{figure}
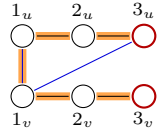
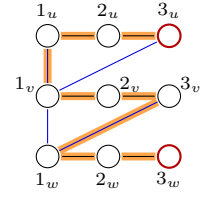

    We now show that (1) implies (2) by outlining why all other gadgets are invalid. A gadget can be invalid for two reasons: (a) it maps some positive instance of \HCd to a negative instance of \HCu, or (b) it maps some negative instance of \HCd to a positive instance of \HCu. We will call the gadget a $ \postoneg $ gadget in case~(a), and a $ \negtopos $ gadget in case (b). In particular, a gadget can be both a $\postoneg$ and a $\negtopos$ gadget.  Due to Lemma \ref{thm:hc_removal-and-addition-of-edges} (\ref{thm:hc_removal-and-addition-of-edges_stay-negative}) and (\ref{thm:hc_removal-and-addition-of-edges_stay-positive}), we do not need to check sub-gadgets of $ \postoneg $ gadgets and super-gadgets of $ \negtopos $ gadgets. Also, if the node graphs themselves already form a Hamiltonian cycle, the gadget is invalid since the (directed) graph with a single node and no self-loops is a counterexample.

    The gadgets with node graphs with one or two nodes are all invalid, with counterexamples provided by Figure \ref{fig:invalid-hc-lt3-gadgets}. All node gadgets not present in the figure  are invalid because they are symmetric to one of the illustrated gadgets, sub-gadgets of an illustrated \postoneg gadget, or super-gadgets of an illustrated \negtopos gadget.
    
    It remains to show that node gadgets with node graphs with exactly three nodes are invalid, if they are not one of the gadgets in the statement of the proposition. 
    We distinguish between the cases that (a) the node graphs are paths or (b) they are not.

    We start with case (a). Most invalid gadgets of this type can be disregarded due to the  $\postoneg$ gadget depicted in Figure \ref{fig:full-pos-to-neg}. In this gadget, only three cross-edges are missing: $ (1_\triangleright, 3_\triangleleft), (3_\triangleright,1_\triangleleft) $ and $ (3_\triangleright,3_\triangleleft) $. Since this is a \postoneg gadget (see below), all of its sub-gadgets are also \postoneg gadgets and therefore invalid. The remaining invalid gadgets for case (a) are covered in Figure \ref{fig:invalid-hc-gadgets}. They are all \negtopos gadgets, as evidenced by the example graphs in the figure, and therefore all of their super-gadgets are also \negtopos gadgets by Lemma~\ref{thm:hc_removal-and-addition-of-edges}~(\ref{thm:hc_removal-and-addition-of-edges_stay-negative}). In summary, all gadgets for case (a) are either valid gadgets, sub-gadgets of the gadget in Figure \ref{fig:full-pos-to-neg_gadget}, or super-gadgets of a gadget in Figure \ref{fig:invalid-hc-gadgets}, or symmetric to any of these (according to Lemmas \ref{thm:hc_removal-and-addition-of-edges} and \ref{lemma:undirectedHam}).

    For case (a), it remains to show that the gadget in Figure \ref{fig:full-pos-to-neg} is indeed a $\postoneg$ gadget. Applying it to the directed graph in Figure \ref{fig:full-pos-to-neg_cex} yields the undirected graph $ G^\star $ in Figure \ref{fig:full-pos-to-neg_applied}. This graph does not have a Hamiltonian cycle.
    To see this, it is helpful to use the following representation in which the edge relation is implicitly given by a function $ e $ that maps nodes to the set of nodes they are adjacent to:
    \begin{align*}
        G^\star &= (V, E)\\
        V &= \{i_u, i_v, i_w \mid 1 \le i \le 3 \}\\
        E &= \{ (x,y) \mid x \in V, y \in e(x) \}\\
        e(x) &= \begin{cases}
            \{ 2_u, 2_v, 2_w\}, & x \in \{ 3_u, 3_v, 3_w \}\\
            V \setminus \{ x \}, & x \in \{ 2_u, 2_v, 2_w \}\\
            \{ 2_u, 2_v, 2_w\} \cup (\{ 1_u, 1_v, 1_w \} \setminus \{ x \}), & x \in \{ 1_u, 1_v, 1_w \}
        \end{cases}
    \end{align*}
    We will refer to the nodes $ 1_u, 1_v, 1_w $ as the in-nodes, to $ 2_u, 2_v, 2_w $ as the middle-nodes and to $ 3_u, 3_v, 3_w $ as the out-nodes. Note that every node is adjacent to the middle-nodes while out-nodes only have edges to the middle-nodes. Also, an in-node only has edges to the other in-nodes and the middle-nodes. The best possible way to visit all out-nodes on a potential Hamiltonian cycle would be a sequence like  
    $ 3_u - 2_u - 3_v - 2_v - 3_w - 2_w $, i.e. to connect two out-nodes via a single middle-node. This is desirable since the middle-nodes are the ones that can be used most flexibly%
    \footnote{The alternatives would require one to use more than one middle-node for a single pair of out-nodes which would remove the ability to "leave" the remaining out-node.}. %
    The nodes do not have to be in this exact order, in general the sequence has the form $ 3_x - 2_x - 3_y - 2_y - 3_z - 2_z $, where $ \{ x,y,z \} = \{ u,v,w \} $%
    . Note that this sequence already contains all middle-nodes. While it is now possible to extend this sequence to a Hamiltonian path ending in some in-node, it cannot be completed to a Hamiltonian cycle since the final in-node can only be left via a middle-node and all of them have previously been visited but are not the start of the Hamiltonian path. As it was necessary to include the sequence above, it follows that the graph does not have a Hamiltonian cycle. 

    This only leaves gadgets corresponding to case (b), i.e. gadgets whose node graphs do not contain a path. The general approach to exhaustively check all gadgets is to start from a \negtopos gadget and remove edges until the result is a \postoneg gadget. This can be observed in Figure \ref{fig:invalid-nonpath-hc-gadgets} in which the first gadget $ I_{(1_\triangleright,1_\triangleleft),(2_\triangleright,1_\triangleleft), (2_\triangleright,2_\triangleleft),(3_\triangleright,2_\triangleleft) (3_\triangleright,3_\triangleleft)} $  is a \negtopos gadget, and the two following gadgets $ I_{(1_\triangleright,1_\triangleleft),(2_\triangleright,1_\triangleleft),(2_\triangleright,2_\triangleleft),(3_\triangleright,3_\triangleleft)} $ and $ I_{(1_\triangleright,1_\triangleleft),(2_\triangleright,1_\triangleleft),(3_\triangleright,2_\triangleleft),(3_\triangleright,3_\triangleleft)} $ represent all remaining sub-gadgets (due to symmetries). The only remaining gadgets are obtained by adding $ (1_\triangleright, 3_\triangleleft) $ or $ (3_\triangleright, 1_\triangleleft) $ to the \postoneg gadgets in Figure \ref{fig:invalid-nonpath-hc-gadgets} as well as their sub-gadgets. An example of such a gadget is $ N_{(1_\triangleright,3_\triangleleft), (3_\triangleright,2_\triangleleft), (3_\triangleright,3_\triangleleft)} $ shown in Figure \ref{fig:diag-pos-to-neg}. By using the described approach one can easily verify that neither of these are valid, either.

    The counter-examples given in Figures \ref{fig:invalid-hc-lt3-gadgets}, \ref{fig:full-pos-to-neg}, \ref{fig:invalid-hc-gadgets}, \ref{fig:invalid-nonpath-hc-gadgets} and \ref{fig:diag-pos-to-neg} can be used to provide counter-example feedback.
\end{proof}

\newcommand{\exampleAppliedSpace}{\hspace{2em}}

\begin{figure}
    \centering
    \def\arraystretch{3}
    \begin{tabular}{l l l}
        Name & Gadget & Reason for correctness \vspace{-1em}\\
        \hline
        $ P_{(3_\triangleright,1_\triangleleft)} $ &
        \begin{tikzpicture}[baseline, scale=0.6]
         \draw[rounded corners = 2pt] (-0.5,-0.5) rectangle (2.5,1.5);
         \node[targetnode-old] (c1) at (0,1) {};
         \node[targetnode-old] (c2) at (1,1) {};
         \node[targetnode-old] (c3) at (2,1) {};
         \node[targetnode-old] (d1) at (0,0) {};
         \node[targetnode-old] (d2) at (1,0) {};
         \node[targetnode-old] (d3) at (2,0) {};
         \draw[targetedge-old] (c1) edge (c2);
         \draw[targetedge-old] (c2) edge (c3);
         \draw[targetedge-old] (d1) edge (d2);
         \draw[targetedge-old] (d2) edge (d3);

         \draw[targetedge-new] (c3) edge (d1);
         \end{tikzpicture}
                      & This is the standard gadget\\%
        $ P_{(1_\triangleright,3_\triangleleft)} $ &
        \begin{tikzpicture}[baseline, scale=0.6]
         \draw[rounded corners = 2pt] (-0.5,-0.5) rectangle (2.5,1.5);

         \node[targetnode-old] (c1) at (0,1) {};
         \node[targetnode-old] (c2) at (1,1) {};
         \node[targetnode-old] (c3) at (2,1) {};
         \node[targetnode-old] (d1) at (0,0) {};
         \node[targetnode-old] (d2) at (1,0) {};
         \node[targetnode-old] (d3) at (2,0) {};
         \draw[targetedge-old] (c1) edge (c2);
         \draw[targetedge-old] (c2) edge (c3);
         \draw[targetedge-old] (d1) edge (d2);
         \draw[targetedge-old] (d2) edge (d3);

         \draw[targetedge-new] (c1) edge (d3);
         \end{tikzpicture}
                      & Symmetric to $P_{(1_\triangleright,3_\triangleleft)}$\\
        $ P_{(1_\triangleright,1_\triangleleft), (3_\triangleright,1_\triangleleft)} $ & 
        \begin{tikzpicture}[baseline, scale=0.6]
         \draw[rounded corners = 2pt] (-0.5,-0.5) rectangle (2.5,1.5);

         \node[targetnode-old] (c1) at (0,1) {};
         \node[targetnode-old] (c2) at (1,1) {};
         \node[targetnode-old] (c3) at (2,1) {};
         \node[targetnode-old] (d1) at (0,0) {};
         \node[targetnode-old] (d2) at (1,0) {};
         \node[targetnode-old] (d3) at (2,0) {};
         \draw[targetedge-old] (c1) edge (c2);
         \draw[targetedge-old] (c2) edge (c3);
         \draw[targetedge-old] (d1) edge (d2);
         \draw[targetedge-old] (d2) edge (d3);

         \draw[targetedge-new] (c1) edge (d1);
         \draw[targetedge-new] (c3) edge (d1);
         \end{tikzpicture}
                             & See proof of Proposition \ref{thm:proposition-hcd-hcu-appendix}\\
        $ P_{(1_\triangleright,1_\triangleleft), (1_\triangleright,3_\triangleleft)} $ &
        \begin{tikzpicture}[baseline, scale=0.6]
         \draw[rounded corners = 2pt] (-0.5,-0.5) rectangle (2.5,1.5);

         \node[targetnode-old] (c1) at (0,1) {};
         \node[targetnode-old] (c2) at (1,1) {};
         \node[targetnode-old] (c3) at (2,1) {};
         \node[targetnode-old] (d1) at (0,0) {};
         \node[targetnode-old] (d2) at (1,0) {};
         \node[targetnode-old] (d3) at (2,0) {};
         \draw[targetedge-old] (c1) edge (c2);
         \draw[targetedge-old] (c2) edge (c3);
         \draw[targetedge-old] (d1) edge (d2);
         \draw[targetedge-old] (d2) edge (d3);

         \draw[targetedge-new] (c1) edge (d1);
         \draw[targetedge-new] (c1) edge (d3);
         \end{tikzpicture}
                             & Symmetric to $ P_{(1_\triangleright,1_\triangleleft), (3_\triangleright,1_\triangleleft)} $\\
        $ P_{(1_\triangleright,3_\triangleleft), (3_\triangleright,3_\triangleleft)} $ &
        \begin{tikzpicture}[baseline, scale=0.6]
         \draw[rounded corners = 2pt] (-0.5,-0.5) rectangle (2.5,1.5);

         \node[targetnode-old] (c1) at (0,1) {};
         \node[targetnode-old] (c2) at (1,1) {};
         \node[targetnode-old] (c3) at (2,1) {};
         \node[targetnode-old] (d1) at (0,0) {};
         \node[targetnode-old] (d2) at (1,0) {};
         \node[targetnode-old] (d3) at (2,0) {};
         \draw[targetedge-old] (c1) edge (c2);
         \draw[targetedge-old] (c2) edge (c3);
         \draw[targetedge-old] (d1) edge (d2);
         \draw[targetedge-old] (d2) edge (d3);

         \draw[targetedge-new] (c1) edge (d3);
         \draw[targetedge-new] (c3) edge (d3);
         \end{tikzpicture}
                             & Symmetric to $ P_{(1_\triangleright,1_\triangleleft), (3_\triangleright,1_\triangleleft)} $\\
        $ P_{(3_\triangleright,1_\triangleleft), (3_\triangleright,3_\triangleleft)} $ &
        \begin{tikzpicture}[baseline, scale=0.6]
         \draw[rounded corners = 2pt] (-0.5,-0.5) rectangle (2.5,1.5);

         \node[targetnode-old] (c1) at (0,1) {};
         \node[targetnode-old] (c2) at (1,1) {};
         \node[targetnode-old] (c3) at (2,1) {};
         \node[targetnode-old] (d1) at (0,0) {};
         \node[targetnode-old] (d2) at (1,0) {};
         \node[targetnode-old] (d3) at (2,0) {};
         \draw[targetedge-old] (c1) edge (c2);
         \draw[targetedge-old] (c2) edge (c3);
         \draw[targetedge-old] (d1) edge (d2);
         \draw[targetedge-old] (d2) edge (d3);

         \draw[targetedge-new] (c3) edge (d1);
         \draw[targetedge-new] (c3) edge (d3);
         \end{tikzpicture}
                             & Symmetric to $ P_{(1_\triangleright,1_\triangleleft), (3_\triangleright,1_\triangleleft)} $
    \end{tabular}%
    \caption{Correct node gadgets for reducing \HCd to \HCu with node graphs with at most three nodes.}
    \label{fig:valid-hc-gadgets}
\end{figure}
\begin{figure}[tp]
    \centering
    \def\arraystretch{3}
    \begin{tabular}{c c c c}
        Gadget & Classification & $G$ & \hspace{2.5em}$\rho(G)$ \vspace{-1em}\\
        \hline
        \begin{tikzpicture}[baseline, scale=0.6]
         \draw[rounded corners = 2pt] (-0.4,-0.4) rectangle (1.4,1.4);

         \node[targetnode-old] (c1) at (0.5,1) {};
         \node[targetnode-old] (d1) at (0.5,0) {};
         \end{tikzpicture}
                           & \postoneg &
        \begin{tikzpicture}[baseline, scale=0.6]
         \node[hcdnode] (v1) at (0,1) {};
         \node[hcdnode] (v2) at (2,1) {};
         \node[hcdnode] (v3) at (1,0) {};

         \draw[hcdedge] (v1) edge (v2);
         \draw[hcdedge] (v2) edge (v3);
         \draw[hcdedge] (v3) edge (v1);
         \end{tikzpicture}
                           &
        \exampleAppliedSpace
        \begin{tikzpicture}[baseline, scale=0.6]
         \node[hcunode] (v1) at (0,1) {};
         \node[hcunode] (v2) at (2,1) {};
         \node[hcunode] (v3) at (1,0) {};
         \end{tikzpicture}\\
        \begin{tikzpicture}[baseline, scale=0.6]
         \draw[rounded corners = 2pt] (-0.4,-0.4) rectangle (1.4,1.4);

         \node[targetnode-old] (c1) at (0.5,1) {};
         \node[targetnode-old] (d1) at (0.5,0) {};

         \draw[targetedge-new-small] (c1) edge (d1);
         \end{tikzpicture}
                           & \negtopos &
        \begin{tikzpicture}[baseline, scale=0.6]
         \node[hcdnode] (v1) at (0,1) {};
         \node[hcdnode] (v2) at (2,1) {};
         \node[hcdnode] (v3) at (1,0) {};

         \draw[hcdedge] (v1) edge (v2);
         \draw[hcdedge] (v2) edge (v3);
         \draw[hcdedge] (v1) edge (v3);
         \end{tikzpicture}
                           &
        \exampleAppliedSpace
        \begin{tikzpicture}[baseline, scale=0.6]
         \node[hcunode] (v1) at (0,1) {};
         \node[hcunode] (v2) at (2,1) {};
         \node[hcunode] (v3) at (1,0) {};

         \draw[hcuedge] (v1) edge (v2);
         \draw[hcuedge] (v2) edge (v3);
         \draw[hcuedge] (v1) edge (v3);
         \end{tikzpicture}\\
        \begin{tikzpicture}[baseline, scale=0.6]
         \draw[rounded corners = 2pt] (-0.4,-0.4) rectangle (1.4,1.4);

         \node[targetnode-old] (c1) at (0,1) {};
         \node[targetnode-old] (c2) at (1,1) {};
         \node[targetnode-old] (d1) at (0,0) {};
         \node[targetnode-old] (d2) at (1,0) {};
         \end{tikzpicture}
                           & \postoneg &
        \begin{tikzpicture}[baseline, scale=0.6]
         \node[hcdnode] (v1) at (0,1) {};
         \node[hcdnode] (v2) at (2,1) {};
         \node[hcdnode] (v3) at (1,0) {};

         \draw[hcdedge] (v1) edge (v2);
         \draw[hcdedge] (v2) edge (v3);
         \draw[hcdedge] (v3) edge (v1);
         \end{tikzpicture}
                           &
        \exampleAppliedSpace
        \begin{tikzpicture}[baseline, scale=0.6]
         \node[hcunode] (11) at (0,1) {};
         \node[hcunode] (12) at (1,1) {};
         \node[hcunode] (21) at (3,1) {};
         \node[hcunode] (22) at (4,1) {};
         \node[hcunode] (31) at (1.5,0) {};
         \node[hcunode] (32) at (2.5,0) {};
         \end{tikzpicture}\\
        \begin{tikzpicture}[baseline, scale=0.6]
         \draw[rounded corners = 2pt] (-0.4,-0.4) rectangle (1.4,1.4);

         \node[targetnode-old] (c1) at (0,1) {};
         \node[targetnode-old] (c2) at (1,1) {};
         \node[targetnode-old] (d1) at (0,0) {};
         \node[targetnode-old] (d2) at (1,0) {};

         \draw[targetedge-new-small] (c1) edge (d1);
         \end{tikzpicture}
                           & \postoneg &
        \begin{tikzpicture}[baseline, scale=0.6]
         \node[hcdnode] (v1) at (0,1) {};
         \node[hcdnode] (v2) at (2,1) {};
         \node[hcdnode] (v3) at (1,0) {};

         \draw[hcdedge] (v1) edge (v2);
         \draw[hcdedge] (v2) edge (v3);
         \draw[hcdedge] (v3) edge (v1);
         \end{tikzpicture}
                           &
        \exampleAppliedSpace
        \begin{tikzpicture}[baseline, scale=0.6]
         \node[hcunode] (11) at (0,1) {};
         \node[hcunode] (12) at (1,1) {};
         \node[hcunode] (21) at (3,1) {};
         \node[hcunode] (22) at (4,1) {};
         \node[hcunode] (31) at (1.5,0) {};
         \node[hcunode] (32) at (2.5,0) {};

         \draw[crossedge] (11) edge[bend left] (21);
         \draw[crossedge] (21) edge (31);
         \draw[crossedge] (31) edge (11);
         \end{tikzpicture}\\
        \begin{tikzpicture}[baseline, scale=0.6]
         \draw[rounded corners = 2pt] (-0.4,-0.4) rectangle (1.4,1.4);

         \node[targetnode-old] (c1) at (0,1) {};
         \node[targetnode-old] (c2) at (1,1) {};
         \node[targetnode-old] (d1) at (0,0) {};
         \node[targetnode-old] (d2) at (1,0) {};
         \draw[targetedge-old-small] (c1) edge (c2);
         \draw[targetedge-old-small] (d1) edge (d2);
         \end{tikzpicture}
                           & \postoneg &
        \begin{tikzpicture}[baseline, scale=0.6]
         \node[hcdnode] (v1) at (0,1) {};
         \node[hcdnode] (v2) at (2,1) {};
         \node[hcdnode] (v3) at (1,0) {};

         \draw[hcdedge] (v1) edge (v2);
         \draw[hcdedge] (v2) edge (v3);
         \draw[hcdedge] (v3) edge (v1);
         \end{tikzpicture}
                           &
        \exampleAppliedSpace
        \begin{tikzpicture}[baseline, scale=0.6]
         \node[hcunode] (11) at (0,1) {};
         \node[hcunode] (12) at (1,1) {};
         \node[hcunode] (21) at (3,1) {};
         \node[hcunode] (22) at (4,1) {};
         \node[hcunode] (31) at (1.5,0) {};
         \node[hcunode] (32) at (2.5,0) {};
         \draw[hcuedge] (11) edge (12);
         \draw[hcuedge] (21) edge (22);
         \draw[hcuedge] (31) edge (32);
         \end{tikzpicture}\\
        \begin{tikzpicture}[baseline, scale=0.6]
         \draw[rounded corners = 2pt] (-0.4,-0.4) rectangle (1.4,1.4);

         \node[targetnode-old] (c1) at (0,1) {};
         \node[targetnode-old] (c2) at (1,1) {};
         \node[targetnode-old] (d1) at (0,0) {};
         \node[targetnode-old] (d2) at (1,0) {};
         \draw[targetedge-old-small] (c1) edge (c2);
         \draw[targetedge-old-small] (d1) edge (d2);

         \draw[targetedge-new-small] (c1) edge (d1);
         \end{tikzpicture}
                           & \postoneg &
        \begin{tikzpicture}[baseline, scale=0.6]
         \node[hcdnode] (v1) at (0,1) {};
         \node[hcdnode] (v2) at (2,1) {};
         \node[hcdnode] (v3) at (1,0) {};

         \draw[hcdedge] (v1) edge (v2);
         \draw[hcdedge] (v2) edge (v3);
         \draw[hcdedge] (v3) edge (v1);
         \end{tikzpicture}
                           &
        \exampleAppliedSpace
        \begin{tikzpicture}[baseline, scale=0.6]
         \node[hcunode] (11) at (0,1) {};
         \node[hcunode] (12) at (1,1) {};
         \node[hcunode] (21) at (3,1) {};
         \node[hcunode] (22) at (4,1) {};
         \node[hcunode] (31) at (1.5,0) {};
         \node[hcunode] (32) at (2.5,0) {};
         \draw[hcuedge] (11) edge (12);
         \draw[hcuedge] (21) edge (22);
         \draw[hcuedge] (31) edge (32);
         \draw[crossedge] (11) edge[bend left] (21);
         \draw[crossedge] (21) edge (31);
         \draw[crossedge] (31) edge (11);
         \end{tikzpicture}\\
        \begin{tikzpicture}[baseline, scale=0.6]
         \draw[rounded corners = 2pt] (-0.4,-0.4) rectangle (1.4,1.4);

         \node[targetnode-old] (c1) at (0,1) {};
         \node[targetnode-old] (c2) at (1,1) {};
         \node[targetnode-old] (d1) at (0,0) {};
         \node[targetnode-old] (d2) at (1,0) {};
         \draw[targetedge-old-small] (c1) edge (c2);
         \draw[targetedge-old-small] (d1) edge (d2);
         \draw[targetedge-new-small] (c2) edge (d1);
         \end{tikzpicture}
                           & \negtopos &
        \begin{tikzpicture}[baseline, scale=0.6]
         \node[hcdnode] (v1) at (0,2) {};
         \node[hcdnode] (v2) at (0,1) {};
         \node[hcdnode] (v3) at (0,0) {};

         \draw[hcdedge] (v1) edge[bend left=45] (v2);
         \draw[hcdedge] (v2) edge[bend left=45] (v1);
         \draw[hcdedge] (v3) edge[bend left=45] (v2);
         \draw[hcdedge] (v2) edge[bend left=45] (v3);
         \end{tikzpicture}
                           &
        \exampleAppliedSpace
        \begin{tikzpicture}[baseline, scale=0.6]
         \node[hcunode] (11) at (0,2) {};
         \node[hcunode] (12) at (1,2) {};
         \node[hcunode] (21) at (0,1) {};
         \node[hcunode] (22) at (1,1) {};
         \node[hcunode] (31) at (0,0) {};
         \node[hcunode] (32) at (1,0) {};

         \draw[hcuedge] (11) edge (12);
         \draw[hcuedge] (21) edge (22);
         \draw[hcuedge] (31) edge (32);

         \draw[crossedge] (11) edge (22);
         \draw[crossedge] (21) edge (12);
         \draw[crossedge] (21) edge (32);
         \draw[crossedge] (31) edge (22);
         \end{tikzpicture}
    \end{tabular}
    \caption{Representative node gadgets with node graphs with 1 or 2 nodes which do not reduce \HCd to \HCu. }
    \label{fig:invalid-hc-lt3-gadgets}
\end{figure}
\begin{figure}[tp]
    \centering
    \begin{subfigure}[t]{0.3\textwidth}
        \centering
        \begin{tikzpicture}
         \draw[rounded corners = 2pt] (-0.5,-0.55) rectangle (2.5,1.55);

         \node[targetnode-old-big, label={[above=-0.5mm]:{\scriptsize $ 1_\triangleright $}}] (c1) at (0,1) {};
         \node[targetnode-old-big, label={[above=-0.5mm]:{\scriptsize $ 2_\triangleright $}}] (c2) at (1,1) {};
         \node[targetnode-old-big, label={[above=-0.5mm]:{\scriptsize $ 3_\triangleright $}}] (c3) at (2,1) {};
         \node[targetnode-old-big, label={[below=2.5mm]:{\scriptsize $ 1_\triangleleft $}}] (d1) at (0,0) {};
         \node[targetnode-old-big, label={[below=2.5mm]:{\scriptsize $ 2_\triangleleft $}}] (d2) at (1,0) {};
         \node[targetnode-old-big, label={[below=2.5mm]:{\scriptsize $ 3_\triangleleft $}}] (d3) at (2,0) {};
         \draw[targetedge-old] (c1) edge (c2);
         \draw[targetedge-old] (c2) edge (c3);
         \draw[targetedge-old] (d1) edge (d2);
         \draw[targetedge-old] (d2) edge (d3);

         \draw[targetedge-new] (c1) edge (d1);
         \draw[targetedge-new] (c1) edge (d2);
         \draw[targetedge-new] (c2) edge (d1);
         \draw[targetedge-new] (c2) edge (d2);
         \draw[targetedge-new] (c2) edge (d3);
         \draw[targetedge-new] (c3) edge (d2);
         \end{tikzpicture}
         \caption{The $ \postoneg $ gadget 
             with cross-edges $ (1_\triangleright,1_\triangleleft)$, $(1_\triangleright,2_\triangleleft)$, $(2_\triangleright,1_\triangleleft)$, $(2_\triangleright,2_\triangleleft)$, $(2_\triangleright,3_\triangleleft) $ and $ (3_\triangleright,2_\triangleleft) $.
         }
        \label{fig:full-pos-to-neg_gadget}
    \end{subfigure}
    \hfill
    \begin{subfigure}[t]{0.3\textwidth}
        \centering
        \begin{tikzpicture}[scale=0.7]
         \node[hcdnode-big, label={[above=-0.5mm]:{\scriptsize $u$}}] (v1) at (0,1) {};
         \node[hcdnode-big, label={[above=-0.5mm]:{\scriptsize $ v $}}] (v2) at (2,1) {};
         \node[hcdnode-big, label={[below=2.5mm]:{\scriptsize $ w $}}] (v3) at (1,0) {};

         \draw[hcdedge] (v1) edge (v2);
         \draw[hcdedge] (v2) edge (v3);
         \draw[hcdedge] (v3) edge (v1);

         \end{tikzpicture}
         \caption{A graph $ G $ with Hamiltonian cycle $ u,v,w,u $.}
         \label{fig:full-pos-to-neg_cex}
    \end{subfigure}
    \hfill
    \begin{subfigure}[t]{0.3\textwidth}
        \centering
        \begin{tikzpicture}[scale=0.5]
         \node[hcunode-big, label={[left=0.5mm]:{\scriptsize $1_u$}}] (u1) at (0,1) {};
         \node[hcunode-big, label={[left=0.5mm]:{\scriptsize $ 2_u $}}] (u2) at (1,2) {};
         \node[hcunode-big, label={[left=0.5mm]:{\scriptsize $ 3_u $}}] (u3) at (2,3) {};
         \node[hcunode-big, label={[right=0.5mm]:{\scriptsize $1_v$}}] (v1) at (5,3) {};
         \node[hcunode-big, label={[right=0.5mm]:{\scriptsize $ 2_v $}}] (v2) at (6,2) {};
         \node[hcunode-big, label={[right=0.5mm]:{\scriptsize $ 3_v $}}] (v3) at (7,1) {};
         \node[hcunode-big, label={[below=2.5mm]:{\scriptsize $1_w$}}] (w1) at (2,-0.5) {};
         \node[hcunode-big, label={[below=2.5mm]:{\scriptsize $ 2_w $}}] (w2) at (3.5,-0.5) {};
         \node[hcunode-big, label={[below=2.5mm]:{\scriptsize $ 3_w $}}] (w3) at (5,-0.5) {};

         \draw[hcuedge] (u1) edge (u2);
         \draw[hcuedge] (u2) edge (u3);
         \draw[hcuedge] (v1) edge (v2);
         \draw[hcuedge] (v2) edge (v3);
         \draw[hcuedge] (w1) edge (w2);
         \draw[hcuedge] (w2) edge (w3);

         \draw[crossedge] (u1) edge (v1);
         \draw[crossedge] (u1) edge (v2);
         \draw[crossedge] (u2) edge (v1);
         \draw[crossedge] (u2) edge (v2);
         \draw[crossedge] (u2) edge (v3);
         \draw[crossedge] (u3) edge (v2);

         \draw[crossedge] (v1) edge (w1);
         \draw[crossedge] (v1) edge (w2);
         \draw[crossedge] (v2) edge (w1);
         \draw[crossedge] (v2) edge (w2);
         \draw[crossedge] (v2) edge (w3);
         \draw[crossedge] (v3) edge (w2);

         \draw[crossedge] (w1) edge (u1);
         \draw[crossedge] (w1) edge (u2);
         \draw[crossedge] (w2) edge (u1);
         \draw[crossedge] (w2) edge (u2);
         \draw[crossedge] (w2) edge (u3);
         \draw[crossedge] (w3) edge (u2);
         \end{tikzpicture}
         \caption{Graph $ G^\star $ without a Hamiltonian cycle resulting from applying the gadget to the example graph.}
         \label{fig:full-pos-to-neg_applied}
    \end{subfigure}\\
    \vspace{1em}
    \caption{An example which demonstrates that $ P_{(1_\triangleright,1_\triangleleft), (1_\triangleright,2_\triangleleft), (2_\triangleright,1_\triangleleft), (2_\triangleright,2_\triangleleft), (2_\triangleright,3_\triangleleft), (3_\triangleright,2_\triangleleft)} $ is a $ \postoneg $ gadget.}
    \label{fig:full-pos-to-neg}
\end{figure}
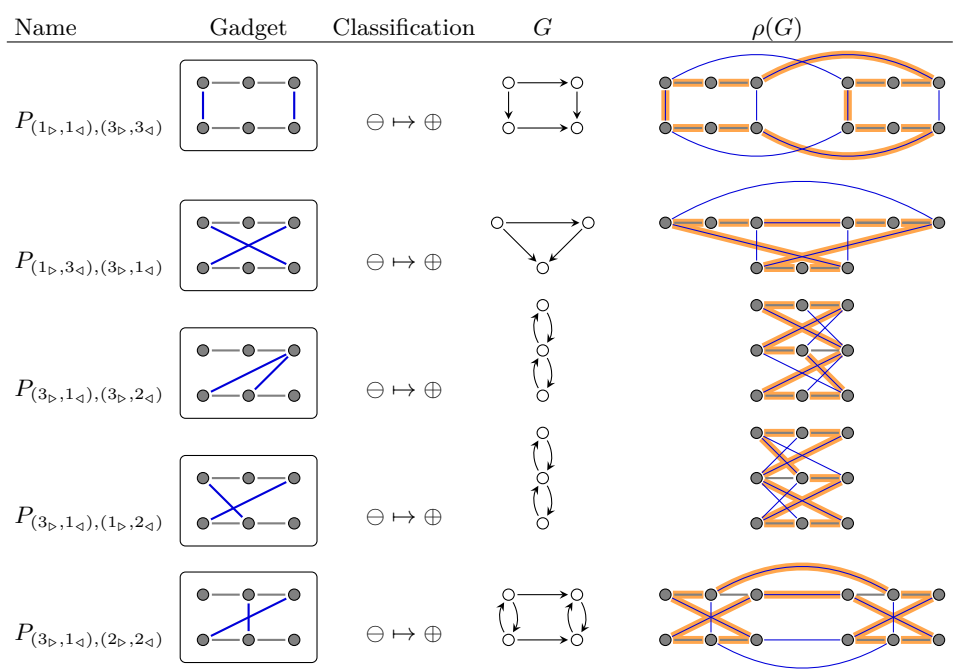
\begin{figure}[tp]
    \centering
    \def\arraystretch{3}
    \begin{tabular}{l c c c c}
        Name & Gadget & Classification & $ G $ & $ \rho(G) $\vspace{-1em}\\
        \hline
        $ P_{(1_\triangleright,1_\triangleleft), (3_\triangleright,3_\triangleleft)} $ &
        \begin{tikzpicture}[baseline, scale=0.6]
         \draw[rounded corners = 2pt] (-0.5,-0.5) rectangle (2.5,1.5);
         \node[targetnode-old] (c1) at (0,1) {};
         \node[targetnode-old] (c2) at (1,1) {};
         \node[targetnode-old] (c3) at (2,1) {};
         \node[targetnode-old] (d1) at (0,0) {};
         \node[targetnode-old] (d2) at (1,0) {};
         \node[targetnode-old] (d3) at (2,0) {};
         \draw[targetedge-old] (c1) edge (c2);
         \draw[targetedge-old] (c2) edge (c3);
         \draw[targetedge-old] (d1) edge (d2);
         \draw[targetedge-old] (d2) edge (d3);

         \draw[targetedge-new] (c1) edge (d1);
         \draw[targetedge-new] (c3) edge (d3);
         \end{tikzpicture}
                           & \negtopos &
        \begin{tikzpicture}[baseline, scale=0.6]
         \node[hcdnode] (v1) at (0,1) {};
         \node[hcdnode] (v2) at (0,0) {};
         \node[hcdnode] (v3) at (1.5,0) {};
         \node[hcdnode] (v4) at (1.5,1) {};

         \draw[hcdedge] (v1) edge (v2);
         \draw[hcdedge] (v2) edge (v3);
         \draw[hcdedge] (v4) edge (v3);
         \draw[hcdedge] (v1) edge (v4);
         \end{tikzpicture}
                           &
        \exampleAppliedSpace
        \begin{tikzpicture}[baseline, scale=0.6]
         \node[hcunode] (11) at (0,1) {};
         \node[hcunode] (12) at (1,1) {};
         \node[hcunode] (13) at (2,1) {};
         \node[hcunode] (21) at (0,0) {};
         \node[hcunode] (22) at (1,0) {};
         \node[hcunode] (23) at (2,0) {};
         \node[hcunode] (31) at (4,0) {};
         \node[hcunode] (32) at (5,0) {};
         \node[hcunode] (33) at (6,0) {};
         \node[hcunode] (41) at (4,1) {};
         \node[hcunode] (42) at (5,1) {};
         \node[hcunode] (43) at (6,1) {};

         \draw[witnessedge] (11) edge (12);
         \draw[witnessedge] (12) edge (13);
         \draw[witnessedge] (21) edge (22);
         \draw[witnessedge] (22) edge (23);
         \draw[witnessedge] (31) edge (32);
         \draw[witnessedge] (32) edge (33);
         \draw[witnessedge] (41) edge (42);
         \draw[witnessedge] (42) edge (43);
         \draw[witnessedge] (11) edge (21);
         \draw[witnessedge] (13) edge[bend left] (43);
         \draw[witnessedge] (23) edge[bend right] (33);
         \draw[witnessedge] (31) edge (41);

         \draw[hcuedge] (11) edge (12);
         \draw[hcuedge] (12) edge (13);
         \draw[hcuedge] (21) edge (22);
         \draw[hcuedge] (22) edge (23);
         \draw[hcuedge] (31) edge (32);
         \draw[hcuedge] (32) edge (33);
         \draw[hcuedge] (41) edge (42);
         \draw[hcuedge] (42) edge (43);

         \draw[crossedge] (11) edge (21);
         \draw[crossedge] (13) edge (23);
         \draw[crossedge] (11) edge[bend left] (41);
         \draw[crossedge] (13) edge[bend left] (43);

         \draw[crossedge] (21) edge[bend right] (31);
         \draw[crossedge] (23) edge[bend right] (33);
         \draw[crossedge] (31) edge (41);
         \draw[crossedge] (33) edge (43);
         \end{tikzpicture}\\
        $ P_{(1_\triangleright,3_\triangleleft), (3_\triangleright,1_\triangleleft)} $ &
        \begin{tikzpicture}[baseline, scale=0.6]
         \draw[rounded corners = 2pt] (-0.5,-0.5) rectangle (2.5,1.5);

         \node[targetnode-old] (c1) at (0,1) {};
         \node[targetnode-old] (c2) at (1,1) {};
         \node[targetnode-old] (c3) at (2,1) {};
         \node[targetnode-old] (d1) at (0,0) {};
         \node[targetnode-old] (d2) at (1,0) {};
         \node[targetnode-old] (d3) at (2,0) {};
         \draw[targetedge-old] (c1) edge (c2);
         \draw[targetedge-old] (c2) edge (c3);
         \draw[targetedge-old] (d1) edge (d2);
         \draw[targetedge-old] (d2) edge (d3);

         \draw[targetedge-new] (c1) edge (d3);
         \draw[targetedge-new] (c3) edge (d1);
         \end{tikzpicture}
                           & \negtopos &
        \begin{tikzpicture}[baseline, scale=0.6]
         \node[hcdnode] (v1) at (0,1) {};
         \node[hcdnode] (v2) at (2,1) {};
         \node[hcdnode] (v3) at (1,0) {};

         \draw[hcdedge] (v1) edge (v2);
         \draw[hcdedge] (v2) edge (v3);
         \draw[hcdedge] (v1) edge (v3);
         \end{tikzpicture}
                           &
        \exampleAppliedSpace
        \begin{tikzpicture}[baseline, scale=0.6]
         \node[hcunode] (11) at (0,1) {};
         \node[hcunode] (12) at (1,1) {};
         \node[hcunode] (13) at (2,1) {};
         \node[hcunode] (21) at (4,1) {};
         \node[hcunode] (22) at (5,1) {};
         \node[hcunode] (23) at (6,1) {};
         \node[hcunode] (31) at (2,0) {};
         \node[hcunode] (32) at (3,0) {};
         \node[hcunode] (33) at (4,0) {};

         \draw[witnessedge] (11) edge (12);
         \draw[witnessedge] (12) edge (13);
         \draw[witnessedge] (21) edge (22);
         \draw[witnessedge] (22) edge (23);
         \draw[witnessedge] (31) edge (32);
         \draw[witnessedge] (32) edge (33);
         \draw[witnessedge] (41) edge (42);
         \draw[witnessedge] (42) edge (43);
         \draw[witnessedge] (13) edge (21);
         \draw[witnessedge] (23) edge (31);
         \draw[witnessedge] (11) edge (33);

         \draw[hcuedge] (11) edge (12);
         \draw[hcuedge] (12) edge (13);
         \draw[hcuedge] (21) edge (22);
         \draw[hcuedge] (22) edge (23);
         \draw[hcuedge] (31) edge (32);
         \draw[hcuedge] (32) edge (33);
         \draw[hcuedge] (41) edge (42);
         \draw[hcuedge] (42) edge (43);

         \draw[crossedge] (11) edge[bend left] (23);
         \draw[crossedge] (13) edge (21);

         \draw[crossedge] (21) edge (33);
         \draw[crossedge] (23) edge (31);

         \draw[crossedge] (11) edge (33);
         \draw[crossedge] (13) edge (31);
         \end{tikzpicture}\\
        $ P_{(3_\triangleright,1_\triangleleft), (3_\triangleright,2_\triangleleft)} $ &
        \begin{tikzpicture}[baseline, scale=0.6]
         \draw[rounded corners = 2pt] (-0.5,-0.5) rectangle (2.5,1.5);

         \node[targetnode-old] (c1) at (0,1) {};
         \node[targetnode-old] (c2) at (1,1) {};
         \node[targetnode-old] (c3) at (2,1) {};
         \node[targetnode-old] (d1) at (0,0) {};
         \node[targetnode-old] (d2) at (1,0) {};
         \node[targetnode-old] (d3) at (2,0) {};
         \draw[targetedge-old] (c1) edge (c2);
         \draw[targetedge-old] (c2) edge (c3);
         \draw[targetedge-old] (d1) edge (d2);
         \draw[targetedge-old] (d2) edge (d3);

         \draw[targetedge-new] (c3) edge (d1);
         \draw[targetedge-new] (c3) edge (d2);
         \end{tikzpicture}
                           & \negtopos &
        \begin{tikzpicture}[baseline, scale=0.6]
         \node[hcdnode] (v1) at (0,2) {};
         \node[hcdnode] (v2) at (0,1) {};
         \node[hcdnode] (v3) at (0,0) {};

         \draw[hcdedge] (v1) edge[bend left] (v2);
         \draw[hcdedge] (v2) edge[bend left] (v1);
         \draw[hcdedge] (v3) edge[bend left] (v2);
         \draw[hcdedge] (v2) edge[bend left] (v3);
         \end{tikzpicture}
                           &
        \exampleAppliedSpace
        \begin{tikzpicture}[baseline, scale=0.6]
         \node[hcunode] (11) at (0,2) {};
         \node[hcunode] (12) at (1,2) {};
         \node[hcunode] (13) at (2,2) {};
         \node[hcunode] (21) at (0,1) {};
         \node[hcunode] (22) at (1,1) {};
         \node[hcunode] (23) at (2,1) {};
         \node[hcunode] (31) at (0,0) {};
         \node[hcunode] (32) at (1,0) {};
         \node[hcunode] (33) at (2,0) {};

         \draw[witnessedge] (11) edge (12);
         \draw[witnessedge] (12) edge (13);
         \draw[witnessedge] (21) edge (22);
         \draw[witnessedge] (31) edge (32);
         \draw[witnessedge] (32) edge (33);
         \draw[witnessedge] (23) edge (31);
         \draw[witnessedge] (33) edge (22);
         \draw[witnessedge] (13) edge (21);
         \draw[witnessedge] (23) edge (11);

         \draw[hcuedge] (11) edge (12);
         \draw[hcuedge] (12) edge (13);
         \draw[hcuedge] (21) edge (22);
         \draw[hcuedge] (22) edge (23);
         \draw[hcuedge] (31) edge (32);
         \draw[hcuedge] (32) edge (33);

         \draw[crossedge] (13) edge (21);
         \draw[crossedge] (13) edge (22);

         \draw[crossedge] (23) edge (11);
         \draw[crossedge] (23) edge (12);

         \draw[crossedge] (23) edge (31);
         \draw[crossedge] (23) edge (32);

         \draw[crossedge] (33) edge (21);
         \draw[crossedge] (33) edge (22);
         \end{tikzpicture}\\
        $ P_{(3_\triangleright,1_\triangleleft), (1_\triangleright,2_\triangleleft)} $ &
        \begin{tikzpicture}[baseline, scale=0.6]
         \draw[rounded corners = 2pt] (-0.5,-0.5) rectangle (2.5,1.5);

         \node[targetnode-old] (c1) at (0,1) {};
         \node[targetnode-old] (c2) at (1,1) {};
         \node[targetnode-old] (c3) at (2,1) {};
         \node[targetnode-old] (d1) at (0,0) {};
         \node[targetnode-old] (d2) at (1,0) {};
         \node[targetnode-old] (d3) at (2,0) {};
         \draw[targetedge-old] (c1) edge (c2);
         \draw[targetedge-old] (c2) edge (c3);
         \draw[targetedge-old] (d1) edge (d2);
         \draw[targetedge-old] (d2) edge (d3);

         \draw[targetedge-new] (c3) edge (d1);
         \draw[targetedge-new] (c1) edge (d2);
         \end{tikzpicture}
                           & \negtopos &
        \begin{tikzpicture}[baseline, scale=0.6]
         \node[hcdnode] (v1) at (0,2) {};
         \node[hcdnode] (v2) at (0,1) {};
         \node[hcdnode] (v3) at (0,0) {};

         \draw[hcdedge] (v1) edge[bend left] (v2);
         \draw[hcdedge] (v2) edge[bend left] (v1);
         \draw[hcdedge] (v3) edge[bend left] (v2);
         \draw[hcdedge] (v2) edge[bend left] (v3);
         \end{tikzpicture}
                           &
        \exampleAppliedSpace
        \begin{tikzpicture}[baseline, scale=0.6]
         \node[hcunode] (11) at (0,2) {};
         \node[hcunode] (12) at (1,2) {};
         \node[hcunode] (13) at (2,2) {};
         \node[hcunode] (21) at (0,1) {};
         \node[hcunode] (22) at (1,1) {};
         \node[hcunode] (23) at (2,1) {};
         \node[hcunode] (31) at (0,0) {};
         \node[hcunode] (32) at (1,0) {};
         \node[hcunode] (33) at (2,0) {};

         \draw[witnessedge] (11) edge (12);
         \draw[witnessedge] (12) edge (13);
         \draw[witnessedge] (22) edge (23);
         \draw[witnessedge] (31) edge (32);
         \draw[witnessedge] (32) edge (33);
         \draw[witnessedge] (13) edge (21);
         \draw[witnessedge] (11) edge (22);
         \draw[witnessedge] (23) edge (31);
         \draw[witnessedge] (33) edge (21);

         \draw[hcuedge] (11) edge (12);
         \draw[hcuedge] (12) edge (13);
         \draw[hcuedge] (21) edge (22);
         \draw[hcuedge] (22) edge (23);
         \draw[hcuedge] (31) edge (32);
         \draw[hcuedge] (32) edge (33);

         \draw[crossedge] (13) edge (21);
         \draw[crossedge] (11) edge (22);

         \draw[crossedge] (23) edge (11);
         \draw[crossedge] (21) edge (12);

         \draw[crossedge] (23) edge (31);
         \draw[crossedge] (21) edge (32);

         \draw[crossedge] (33) edge (21);
         \draw[crossedge] (31) edge (22);
         \end{tikzpicture}\\
        $ P_{(3_\triangleright,1_\triangleleft), (2_\triangleright,2_\triangleleft)} $ &
        \begin{tikzpicture}[baseline, scale=0.6]
         \draw[rounded corners = 2pt] (-0.5,-0.5) rectangle (2.5,1.5);

         \node[targetnode-old] (c1) at (0,1) {};
         \node[targetnode-old] (c2) at (1,1) {};
         \node[targetnode-old] (c3) at (2,1) {};
         \node[targetnode-old] (d1) at (0,0) {};
         \node[targetnode-old] (d2) at (1,0) {};
         \node[targetnode-old] (d3) at (2,0) {};
         \draw[targetedge-old] (c1) edge (c2);
         \draw[targetedge-old] (c2) edge (c3);
         \draw[targetedge-old] (d1) edge (d2);
         \draw[targetedge-old] (d2) edge (d3);

         \draw[targetedge-new] (c3) edge (d1);
         \draw[targetedge-new] (c2) edge (d2);
         \end{tikzpicture}
                           & \negtopos &
        \begin{tikzpicture}[baseline, scale=0.6]
         \node[hcdnode] (v1) at (0,1) {};
         \node[hcdnode] (v2) at (0,0) {};
         \node[hcdnode] (v3) at (1.5,0) {};
         \node[hcdnode] (v4) at (1.5,1) {};

         \draw[hcdedge] (v1) edge[bend left] (v2);
         \draw[hcdedge] (v2) edge[bend left] (v1);
         \draw[hcdedge] (v2) edge (v3);
         \draw[hcdedge] (v4) edge[bend left] (v3);
         \draw[hcdedge] (v3) edge[bend left] (v4);
         \draw[hcdedge] (v1) edge (v4);
         \end{tikzpicture}
                           &
        \exampleAppliedSpace
        \begin{tikzpicture}[baseline, scale=0.6]
         \node[hcunode] (11) at (0,1) {};
         \node[hcunode] (12) at (1,1) {};
         \node[hcunode] (13) at (2,1) {};
         \node[hcunode] (21) at (0,0) {};
         \node[hcunode] (22) at (1,0) {};
         \node[hcunode] (23) at (2,0) {};
         \node[hcunode] (31) at (4,0) {};
         \node[hcunode] (32) at (5,0) {};
         \node[hcunode] (33) at (6,0) {};
         \node[hcunode] (41) at (4,1) {};
         \node[hcunode] (42) at (5,1) {};
         \node[hcunode] (43) at (6,1) {};

         \draw[witnessedge] (11) edge (12);
         \draw[witnessedge] (21) edge (22);
         \draw[witnessedge] (22) edge (23);
         \draw[witnessedge] (31) edge (32);
         \draw[witnessedge] (32) edge (33);
         \draw[witnessedge] (42) edge (43);
         \draw[witnessedge] (13) edge (21);
         \draw[witnessedge] (12) edge[bend left] (42);
         \draw[witnessedge] (13) edge (41);
         \draw[witnessedge] (23) edge (11);
         \draw[witnessedge] (33) edge (41);
         \draw[witnessedge] (43) edge (31);

         \draw[hcuedge] (11) edge (12);
         \draw[hcuedge] (12) edge (13);
         \draw[hcuedge] (21) edge (22);
         \draw[hcuedge] (22) edge (23);
         \draw[hcuedge] (31) edge (32);
         \draw[hcuedge] (32) edge (33);
         \draw[hcuedge] (41) edge (42);
         \draw[hcuedge] (42) edge (43);

         \draw[crossedge] (13) edge (21);
         \draw[crossedge] (12) edge (22);
         \draw[crossedge] (12) edge[bend left] (42);
         \draw[crossedge] (13) edge (41);

         \draw[crossedge] (23) edge (11);
         \draw[crossedge] (23) edge (31);
         \draw[crossedge] (22) edge[bend right] (32);

         \draw[crossedge] (33) edge (41);
         \draw[crossedge] (43) edge (31);
         \draw[crossedge] (32) edge (42);
         \end{tikzpicture}
    \end{tabular}
    \caption{Gadgets for which the node graphs are paths with three nodes that do not induce a valid node gadget reduction from \HCd to \HCu. Since they are all $ \negtopos $ gadgets, a Hamiltonian cycle is indicated by edges highlighted in orange in each of the applied graphs.}
    \label{fig:invalid-hc-gadgets}
\end{figure}
\begin{figure}[tp]
    \centering
    \def\arraystretch{3}
    \begin{tabular}{l c c c c}
        Name & Gadget & Classification & $ G $ & $ \hspace{2.5em}\rho(G) $\vspace{-1em}\\
        \hline
        $ I_{(1_\triangleright,1_\triangleleft),(2_\triangleright,1_\triangleleft), (2_\triangleright,2_\triangleleft),(3_\triangleright,2_\triangleleft) (3_\triangleright,3_\triangleleft)} $ &
        \begin{tikzpicture}[baseline,scale=0.5]
         \draw[rounded corners = 2pt] (-0.5,-0.5) rectangle (2.5,1.5);

         \node[targetnode-old] (c1) at (0,1) {};
         \node[targetnode-old] (c2) at (1,1) {};
         \node[targetnode-old] (c3) at (2,1) {};
         \node[targetnode-old] (d1) at (0,0) {};
         \node[targetnode-old] (d2) at (1,0) {};
         \node[targetnode-old] (d3) at (2,0) {};

         \draw[targetedge-new] (c1) edge (d1);
         \draw[targetedge-new] (c2) edge (d1);
         \draw[targetedge-new] (c2) edge (d2);
         \draw[targetedge-new] (c3) edge (d2);
         \draw[targetedge-new] (c3) edge (d3);
         \end{tikzpicture}
                           & \negtopos &
        \begin{tikzpicture}[baseline, scale=0.5]
         \node[hcdnode] (v1) at (0,1) {};
         \node[hcdnode] (v2) at (0,0) {};
         \node[hcdnode] (v3) at (1.5,0) {};
         \node[hcdnode] (v4) at (1.5,1) {};

         \draw[hcdedge] (v1) edge (v2);
         \draw[hcdedge] (v2) edge (v3);
         \draw[hcdedge] (v4) edge (v3);
         \draw[hcdedge] (v1) edge (v4);
         \end{tikzpicture}
                           &
        \exampleAppliedSpace
        \begin{tikzpicture}[baseline, scale=0.5]
         \node[hcunode] (11) at (0,1) {};
         \node[hcunode] (12) at (1,1) {};
         \node[hcunode] (13) at (2,1) {};
         \node[hcunode] (21) at (0,0) {};
         \node[hcunode] (22) at (1,0) {};
         \node[hcunode] (23) at (2,0) {};
         \node[hcunode] (31) at (4,0) {};
         \node[hcunode] (32) at (5,0) {};
         \node[hcunode] (33) at (6,0) {};
         \node[hcunode] (41) at (4,1) {};
         \node[hcunode] (42) at (5,1) {};
         \node[hcunode] (43) at (6,1) {};

         \draw[witnessedge] (11) edge (21);
         \draw[witnessedge] (12) edge (21);
         \draw[witnessedge] (12) edge (22);
         \draw[witnessedge] (13) edge (22);
         \draw[witnessedge] (13) edge (23);
         \draw[witnessedge] (23) edge[bend right] (33);
         \draw[witnessedge] (11) edge[bend left] (41);
         \draw[witnessedge] (41) edge (31);
         \draw[witnessedge] (42) edge (31);
         \draw[witnessedge] (42) edge (32);
         \draw[witnessedge] (43) edge (32);
         \draw[witnessedge] (43) edge (33);

         \draw[crossedge] (11) edge (21);
         \draw[crossedge] (12) edge (21);
         \draw[crossedge] (12) edge (22);
         \draw[crossedge] (13) edge (22);
         \draw[crossedge] (13) edge (23);
         \draw[crossedge] (21) edge[bend right] (31);
         \draw[crossedge] (22) edge[bend right] (31);
         \draw[crossedge] (22) edge[bend right] (32);
         \draw[crossedge] (23) edge[bend right] (32);
         \draw[crossedge] (23) edge[bend right] (33);
         \draw[crossedge] (11) edge[bend left] (41);
         \draw[crossedge] (12) edge[bend left] (41);
         \draw[crossedge] (12) edge[bend left] (42);
         \draw[crossedge] (13) edge[bend left] (42);
         \draw[crossedge] (13) edge[bend left] (43);
         \draw[crossedge] (41) edge (31);
         \draw[crossedge] (42) edge (31);
         \draw[crossedge] (42) edge (32);
         \draw[crossedge] (43) edge (32);
         \draw[crossedge] (43) edge (33);
         \end{tikzpicture}\\
        $ I_{(1_\triangleright,1_\triangleleft),(2_\triangleright,1_\triangleleft),(2_\triangleright,2_\triangleleft),(3_\triangleright,3_\triangleleft)} $ &
        \begin{tikzpicture}[baseline, scale=0.5]
         \draw[rounded corners = 2pt] (-0.5,-0.5) rectangle (2.5,1.5);

         \node[targetnode-old] (c1) at (0,1) {};
         \node[targetnode-old] (c2) at (1,1) {};
         \node[targetnode-old] (c3) at (2,1) {};
         \node[targetnode-old] (d1) at (0,0) {};
         \node[targetnode-old] (d2) at (1,0) {};
         \node[targetnode-old] (d3) at (2,0) {};

         \draw[targetedge-new] (c1) edge (d1);
         \draw[targetedge-new] (c2) edge (d1);
         \draw[targetedge-new] (c2) edge (d2);
         \draw[targetedge-new] (c3) edge (d3);
         \end{tikzpicture}
                           & \postoneg &
        \begin{tikzpicture}[baseline, scale=0.5]
         \node[hcdnode] (v1) at (0,1) {};
         \node[hcdnode] (v2) at (2,1) {};
         \node[hcdnode] (v3) at (1,0) {};

         \draw[hcdedge] (v1) edge (v2);
         \draw[hcdedge] (v2) edge (v3);
         \draw[hcdedge] (v3) edge (v1);
         \end{tikzpicture}
                           &
        \exampleAppliedSpace
        \begin{tikzpicture}[baseline, scale=0.5]
         \node[hcunode] (11) at (0,1) {};
         \node[hcunode] (12) at (1,1) {};
         \node[hcunode] (13) at (2,1) {};
         \node[hcunode] (21) at (4,1) {};
         \node[hcunode] (22) at (5,1) {};
         \node[hcunode] (23) at (6,1) {};
         \node[hcunode] (31) at (2,0) {};
         \node[hcunode] (32) at (3,0) {};
         \node[hcunode] (33) at (4,0) {};

         \draw[witnessedge] (13) edge[bend left] (23);
         \draw[witnessedge] (23) edge (33);
         \draw[witnessedge] (33) edge (13);

         \draw[crossedge] (11) edge[bend left] (21);
         \draw[crossedge] (12) edge[bend left] (21);
         \draw[crossedge] (12) edge[bend left] (22);
         \draw[crossedge] (13) edge[bend left] (23);

         \draw[crossedge] (21) edge (31);
         \draw[crossedge] (22) edge (31);
         \draw[crossedge] (22) edge (32);
         \draw[crossedge] (23) edge (33);

         \draw[crossedge] (31) edge (11);
         \draw[crossedge] (32) edge (12);
         \draw[crossedge] (32) edge (11);
         \draw[crossedge] (33) edge (13);
         \end{tikzpicture}\\
        $ I_{(1_\triangleright,1_\triangleleft), (2_\triangleright,1_\triangleleft),(3_\triangleright,2_\triangleleft),(3_\triangleright,3_\triangleleft)} $ &
        \begin{tikzpicture}[baseline, scale=0.5]
         \draw[rounded corners = 2pt] (-0.5,-0.5) rectangle (2.5,1.5);

         \node[targetnode-old] (c1) at (0,1) {};
         \node[targetnode-old] (c2) at (1,1) {};
         \node[targetnode-old] (c3) at (2,1) {};
         \node[targetnode-old] (d1) at (0,0) {};
         \node[targetnode-old] (d2) at (1,0) {};
         \node[targetnode-old] (d3) at (2,0) {};

         \draw[targetedge-new] (c1) edge (d1);
         \draw[targetedge-new] (c2) edge (d1);
         \draw[targetedge-new] (c3) edge (d2);
         \draw[targetedge-new] (c3) edge (d3);
         \end{tikzpicture}
                           & \postoneg &
        \begin{tikzpicture}[baseline, scale=0.5]
         \node[hcdnode] (v1) at (0,1) {};
         \node[hcdnode] (v2) at (2,1) {};
         \node[hcdnode] (v3) at (1,0) {};

         \draw[hcdedge] (v1) edge (v2);
         \draw[hcdedge] (v2) edge (v3);
         \draw[hcdedge] (v3) edge (v1);
         \end{tikzpicture}
                           &
       \exampleAppliedSpace
        \begin{tikzpicture}[baseline, scale=0.5]
         \node[hcunode] (11) at (0,1) {};
         \node[hcunode] (12) at (1,1) {};
         \node[hcunode] (13) at (2,1) {};
         \node[hcunode] (21) at (4,1) {};
         \node[hcunode] (22) at (5,1) {};
         \node[hcunode] (23) at (6,1) {};
         \node[hcunode] (31) at (2,0) {};
         \node[hcunode] (32) at (3,0) {};
         \node[hcunode] (33) at (4,0) {};

         \draw[witnessedge] (12) edge[bend left] (21);
         \draw[witnessedge] (13) edge[bend left] (22);
         \draw[witnessedge] (22) edge (31);
         \draw[witnessedge] (23) edge (32);
         \draw[witnessedge] (32) edge (11);
         \draw[witnessedge] (33) edge (12);

         \draw[crossedge] (11) edge[bend left] (21);
         \draw[crossedge] (12) edge[bend left] (21);
         \draw[crossedge] (13) edge[bend left] (22);
         \draw[crossedge] (13) edge[bend left] (23);

         \draw[crossedge] (21) edge (31);
         \draw[crossedge] (22) edge (31);
         \draw[crossedge] (23) edge (32);
         \draw[crossedge] (23) edge (33);

         \draw[crossedge] (31) edge (11);
         \draw[crossedge] (32) edge (11);
         \draw[crossedge] (33) edge (12);
         \draw[crossedge] (33) edge (13);
         \end{tikzpicture}\\
        $ N_{(1_\triangleright,1_\triangleleft),(2_\triangleright,1_\triangleleft),(2_\triangleright,2_\triangleleft),(3_\triangleright,3_\triangleleft)} $ &
        \begin{tikzpicture}[baseline, scale=0.5]
         \draw[rounded corners = 2pt] (-0.5,-0.5) rectangle (2.5,1.5);

         \node[targetnode-old] (c1) at (0,1) {};
         \node[targetnode-old] (c2) at (1,1) {};
         \node[targetnode-old] (c3) at (2,1) {};
         \node[targetnode-old] (d1) at (0,0) {};
         \node[targetnode-old] (d2) at (1,0) {};
         \node[targetnode-old] (d3) at (2,0) {};
         \draw[targetedge-old] (c1) edge (c2);
         \draw[targetedge-old] (d1) edge (d2);

         \draw[targetedge-new] (c1) edge (d1);
         \draw[targetedge-new] (c2) edge (d1);
         \draw[targetedge-new] (c2) edge (d2);
         \draw[targetedge-new] (c3) edge (d3);
         \end{tikzpicture}
                           & \postoneg &
        \begin{tikzpicture}[baseline, scale=0.5]
         \node[hcdnode] (v1) at (0,1) {};
         \node[hcdnode] (v2) at (2,1) {};
         \node[hcdnode] (v3) at (1,0) {};

         \draw[hcdedge] (v1) edge (v2);
         \draw[hcdedge] (v2) edge (v3);
         \draw[hcdedge] (v3) edge (v1);
         \end{tikzpicture}
                           &
       \exampleAppliedSpace
        \begin{tikzpicture}[baseline, scale=0.5]
         \node[hcunode] (11) at (0,1) {};
         \node[hcunode] (12) at (1,1) {};
         \node[hcunode] (13) at (2,1) {};
         \node[hcunode] (21) at (4,1) {};
         \node[hcunode] (22) at (5,1) {};
         \node[hcunode] (23) at (6,1) {};
         \node[hcunode] (31) at (2,0) {};
         \node[hcunode] (32) at (3,0) {};
         \node[hcunode] (33) at (4,0) {};

         \draw[witnessedge] (13) edge[bend left] (23);
         \draw[witnessedge] (23) edge (33);
         \draw[witnessedge] (33) edge (13);

         \draw[hcuedge] (11) edge (12);
         \draw[hcuedge] (21) edge (22);
         \draw[hcuedge] (31) edge (32);

         \draw[crossedge] (11) edge[bend left] (21);
         \draw[crossedge] (12) edge[bend left] (21);
         \draw[crossedge] (12) edge[bend left] (22);
         \draw[crossedge] (13) edge[bend left] (23);

         \draw[crossedge] (21) edge (31);
         \draw[crossedge] (22) edge (31);
         \draw[crossedge] (22) edge (32);
         \draw[crossedge] (23) edge (33);

         \draw[crossedge] (31) edge (11);
         \draw[crossedge] (32) edge (11);
         \draw[crossedge] (32) edge (12);
         \draw[crossedge] (33) edge (13);
         \end{tikzpicture}\\
        $ N_{(1_\triangleright,1_\triangleleft),(2_\triangleright,1_\triangleleft),(3_\triangleright,2_\triangleleft),(3_\triangleright,3_\triangleleft)} $ &
        \begin{tikzpicture}[baseline, scale=0.5]
         \draw[rounded corners = 2pt] (-0.5,-0.5) rectangle (2.5,1.5);

         \node[targetnode-old] (c1) at (0,1) {};
         \node[targetnode-old] (c2) at (1,1) {};
         \node[targetnode-old] (c3) at (2,1) {};
         \node[targetnode-old] (d1) at (0,0) {};
         \node[targetnode-old] (d2) at (1,0) {};
         \node[targetnode-old] (d3) at (2,0) {};
         \draw[targetedge-old] (c1) edge (c2);
         \draw[targetedge-old] (d1) edge (d2);

         \draw[targetedge-new] (c1) edge (d1);
         \draw[targetedge-new] (c2) edge (d1);
         \draw[targetedge-new] (c3) edge (d2);
         \draw[targetedge-new] (c3) edge (d3);
         \end{tikzpicture}
                           & \negtopos &
        \begin{tikzpicture}[baseline, scale=0.5]
         \node[hcdnode] (v1) at (0,1) {};
         \node[hcdnode] (v2) at (0,0) {};
         \node[hcdnode] (v3) at (1.5,0) {};
         \node[hcdnode] (v4) at (1.5,1) {};

         \draw[hcdedge] (v1) edge (v2);
         \draw[hcdedge] (v2) edge (v3);
         \draw[hcdedge] (v4) edge (v3);
         \draw[hcdedge] (v1) edge (v4);
         \end{tikzpicture}
                           &
       \exampleAppliedSpace
        \begin{tikzpicture}[baseline, scale=0.5]
         \node[hcunode] (11) at (0,1) {};
         \node[hcunode] (12) at (1,1) {};
         \node[hcunode] (13) at (2,1) {};
         \node[hcunode] (21) at (0,0) {};
         \node[hcunode] (22) at (1,0) {};
         \node[hcunode] (23) at (2,0) {};
         \node[hcunode] (31) at (4,0) {};
         \node[hcunode] (32) at (5,0) {};
         \node[hcunode] (33) at (6,0) {};
         \node[hcunode] (41) at (4,1) {};
         \node[hcunode] (42) at (5,1) {};
         \node[hcunode] (43) at (6,1) {};

         \draw[witnessedge] (11) edge (12);
         \draw[witnessedge] (21) edge (22);
         \draw[witnessedge] (31) edge (32);
         \draw[witnessedge] (41) edge (42);
         \draw[witnessedge] (11) edge (21);
         \draw[witnessedge] (13) edge (22);
         \draw[witnessedge] (13) edge (23);
         \draw[witnessedge] (42) edge (31);
         \draw[witnessedge] (43) edge (32);
         \draw[witnessedge] (43) edge (33);
         \draw[witnessedge] (12) edge[bend left] (41);
         \draw[witnessedge] (23) edge[bend right] (33);

         \draw[hcuedge] (11) edge (12);
         \draw[hcuedge] (21) edge (22);
         \draw[hcuedge] (31) edge (32);
         \draw[hcuedge] (41) edge (42);

         \draw[crossedge] (11) edge (21);
         \draw[crossedge] (12) edge (21);
         \draw[crossedge] (13) edge (22);
         \draw[crossedge] (13) edge (23);

         \draw[crossedge] (41) edge (31);
         \draw[crossedge] (42) edge (31);
         \draw[crossedge] (43) edge (32);
         \draw[crossedge] (43) edge (33);

         \draw[crossedge] (11) edge[bend left] (41);
         \draw[crossedge] (12) edge[bend left] (41);
         \draw[crossedge] (13) edge[bend left] (42);
         \draw[crossedge] (13) edge[bend left] (43);

         \draw[crossedge] (21) edge[bend right] (31);
         \draw[crossedge] (22) edge[bend right] (31);
         \draw[crossedge] (23) edge[bend right] (32);
         \draw[crossedge] (23) edge[bend right] (33);
         \end{tikzpicture}\\
        $ N_{(1_\triangleright,1_\triangleleft),(2_\triangleright,2_\triangleleft),(3_\triangleright,2_\triangleleft),(3_\triangleright,3_\triangleleft)} $ &
        \begin{tikzpicture}[baseline, scale=0.5]
         \draw[rounded corners = 2pt] (-0.5,-0.5) rectangle (2.5,1.5);

         \node[targetnode-old] (c1) at (0,1) {};
         \node[targetnode-old] (c2) at (1,1) {};
         \node[targetnode-old] (c3) at (2,1) {};
         \node[targetnode-old] (d1) at (0,0) {};
         \node[targetnode-old] (d2) at (1,0) {};
         \node[targetnode-old] (d3) at (2,0) {};
         \draw[targetedge-old] (c1) edge (c2);
         \draw[targetedge-old] (d1) edge (d2);

         \draw[targetedge-new] (c1) edge (d1);
         \draw[targetedge-new] (c2) edge (d2);
         \draw[targetedge-new] (c3) edge (d2);
         \draw[targetedge-new] (c3) edge (d3);
         \end{tikzpicture}
                           & \negtopos &
        \begin{tikzpicture}[baseline, scale=0.5]
         \node[hcdnode] (v1) at (0,1) {};
         \node[hcdnode] (v2) at (0,0) {};
         \node[hcdnode] (v3) at (1.5,0) {};
         \node[hcdnode] (v4) at (1.5,1) {};

         \draw[hcdedge] (v1) edge (v2);
         \draw[hcdedge] (v2) edge (v3);
         \draw[hcdedge] (v4) edge (v3);
         \draw[hcdedge] (v1) edge (v4);
         \end{tikzpicture}
                           &
       \exampleAppliedSpace
        \begin{tikzpicture}[baseline, scale=0.5]
         \node[hcunode] (11) at (0,1) {};
         \node[hcunode] (12) at (1,1) {};
         \node[hcunode] (13) at (2,1) {};
         \node[hcunode] (21) at (0,0) {};
         \node[hcunode] (22) at (1,0) {};
         \node[hcunode] (23) at (2,0) {};
         \node[hcunode] (31) at (4,0) {};
         \node[hcunode] (32) at (5,0) {};
         \node[hcunode] (33) at (6,0) {};
         \node[hcunode] (41) at (4,1) {};
         \node[hcunode] (42) at (5,1) {};
         \node[hcunode] (43) at (6,1) {};

         \draw[witnessedge] (11) edge (12);
         \draw[witnessedge] (41) edge (42);
         \draw[witnessedge] (11) edge (21);
         \draw[witnessedge] (12) edge (22);
         \draw[witnessedge] (13) edge (22);
         \draw[witnessedge] (13) edge (23);
         \draw[witnessedge] (41) edge (31);
         \draw[witnessedge] (42) edge (32);
         \draw[witnessedge] (43) edge (32);
         \draw[witnessedge] (43) edge (33);
         \draw[witnessedge] (21) edge[bend right] (31);
         \draw[witnessedge] (23) edge[bend right] (33);

         \draw[hcuedge] (11) edge (12);
         \draw[hcuedge] (21) edge (22);
         \draw[hcuedge] (31) edge (32);
         \draw[hcuedge] (41) edge (42);

         \draw[crossedge] (11) edge (21);
         \draw[crossedge] (12) edge (22);
         \draw[crossedge] (13) edge (22);
         \draw[crossedge] (13) edge (23);

         \draw[crossedge] (41) edge (31);
         \draw[crossedge] (42) edge (32);
         \draw[crossedge] (43) edge (32);
         \draw[crossedge] (43) edge (33);

         \draw[crossedge] (11) edge[bend left] (41);
         \draw[crossedge] (12) edge[bend left] (42);
         \draw[crossedge] (13) edge[bend left] (42);
         \draw[crossedge] (13) edge[bend left] (43);

         \draw[crossedge] (21) edge[bend right] (31);
         \draw[crossedge] (22) edge[bend right] (32);
         \draw[crossedge] (23) edge[bend right] (32);
         \draw[crossedge] (23) edge[bend right] (33);
         \end{tikzpicture}\\
        $ N_{(1_\triangleright,1_\triangleleft),(3_\triangleright,2_\triangleleft),(3_\triangleright,3_\triangleleft)} $ &
        \begin{tikzpicture}[baseline, scale=0.5]
         \draw[rounded corners = 2pt] (-0.5,-0.5) rectangle (2.5,1.5);

         \node[targetnode-old] (c1) at (0,1) {};
         \node[targetnode-old] (c2) at (1,1) {};
         \node[targetnode-old] (c3) at (2,1) {};
         \node[targetnode-old] (d1) at (0,0) {};
         \node[targetnode-old] (d2) at (1,0) {};
         \node[targetnode-old] (d3) at (2,0) {};
         \draw[targetedge-old] (c1) edge (c2);
         \draw[targetedge-old] (d1) edge (d2);

         \draw[targetedge-new] (c1) edge (d1);
         \draw[targetedge-new] (c3) edge (d2);
         \draw[targetedge-new] (c3) edge (d3);
         \end{tikzpicture}
                           & \postoneg &
        \begin{tikzpicture}[baseline, scale=0.5]
         \node[hcdnode] (v1) at (0,1) {};
         \node[hcdnode] (v2) at (2,1) {};
         \node[hcdnode] (v3) at (1,0) {};

         \draw[hcdedge] (v1) edge (v2);
         \draw[hcdedge] (v2) edge (v3);
         \draw[hcdedge] (v3) edge (v1);
         \end{tikzpicture}
                           &
       \exampleAppliedSpace
        \begin{tikzpicture}[baseline, scale=0.5]
         \node[hcunode] (11) at (0,1) {};
         \node[hcunode] (12) at (1,1) {};
         \node[hcunode] (13) at (2,1) {};
         \node[hcunode] (21) at (4,1) {};
         \node[hcunode] (22) at (5,1) {};
         \node[hcunode] (23) at (6,1) {};
         \node[hcunode] (31) at (2,0) {};
         \node[hcunode] (32) at (3,0) {};
         \node[hcunode] (33) at (4,0) {};

         \draw[witnessedge] (11) edge (12);
         \draw[witnessedge] (21) edge (22);
         \draw[witnessedge] (31) edge (32);
         \draw[witnessedge] (13) edge[bend left] (22);
         \draw[witnessedge] (23) edge (32);
         \draw[witnessedge] (33) edge (12);

         \draw[vgedge] (11) edge (12);
         \draw[vgedge] (21) edge (22);
         \draw[vgedge] (31) edge (32);

         \draw[crossedge] (11) edge[bend left] (21);
         \draw[crossedge] (13) edge[bend left] (22);
         \draw[crossedge] (13) edge[bend left] (23);

         \draw[crossedge] (21) edge (31);
         \draw[crossedge] (23) edge (32);
         \draw[crossedge] (23) edge (33);

         \draw[crossedge] (31) edge (11);
         \draw[crossedge] (33) edge (12);
         \draw[crossedge] (33) edge (13);
         \end{tikzpicture}\\
    \end{tabular}

    \caption{Invalid gadgets for which the node graphs have three nodes but are not paths. As for the $ \negtopos $ gadgets in Figure \ref{fig:invalid-hc-gadgets}, a Hamiltonian cycle is indicated by edges highlighted in orange in the applied graphs of $ \negtopos $ gadgets. If there are such edges in the applied graph of a $ \postoneg $ gadget, these would have to be used on any Hamiltonian cycle which results in a contradiction to the existence of such a cycle.}
    \label{fig:invalid-nonpath-hc-gadgets}
\end{figure}
\begin{figure}[tp]
    \centering
    \begin{subfigure}[t]{0.3\textwidth}
        \centering
        \begin{tikzpicture}
         \draw[rounded corners = 2pt] (-0.5,-0.55) rectangle (2.5,1.55);
         \node[targetnode-old-big, label={[above=-0.5mm]:{\scriptsize $ 1_\triangleright $}}] (c1) at (0,1) {};
         \node[targetnode-old-big, label={[above=-0.5mm]:{\scriptsize $ 2_\triangleright $}}] (c2) at (1,1) {};
         \node[targetnode-old-big, label={[above=-0.5mm]:{\scriptsize $ 3_\triangleright $}}] (c3) at (2,1) {};
         \node[targetnode-old-big, label={[below=2.5mm]:{\scriptsize $ 1_\triangleleft $}}] (d1) at (0,0) {};
         \node[targetnode-old-big, label={[below=2.5mm]:{\scriptsize $ 2_\triangleleft $}}] (d2) at (1,0) {};
         \node[targetnode-old-big, label={[below=2.5mm]:{\scriptsize $ 3_\triangleleft $}}] (d3) at (2,0) {};
         \draw[targetedge-old] (c1) edge (c2);
         \draw[targetedge-old] (d1) edge (d2);

         \draw[targetedge-new] (c1) edge (d3);
         \draw[targetedge-new] (c3) edge (d2);
         \draw[targetedge-new] (c3) edge (d3);
         \end{tikzpicture}
         \caption{A $ \postoneg $ gadget.}
        \label{fig:diag-pos-to-neg_gadget}
    \end{subfigure}
    \hfill
    \begin{subfigure}[t]{0.3\textwidth}
        \centering
        \begin{tikzpicture}[scale=0.7]
         \node[hcdnode-big, label={[above=-0.5mm]:{\scriptsize $u$}}] (v1) at (0,1) {};
         \node[hcdnode-big, label={[above=-0.5mm]:{\scriptsize $ v $}}] (v2) at (2,1) {};
         \node[hcdnode-big, label={[below=2.5mm]:{\scriptsize $ w $}}] (v3) at (2,0) {};
         \node[hcdnode-big, label={[below=2.5mm]:{\scriptsize $ x $}}] (v4) at (0,0) {};

         \draw[hcdedge] (v1) edge (v2);
         \draw[hcdedge] (v2) edge (v3);
         \draw[hcdedge] (v3) edge (v4);
         \draw[hcdedge] (v4) edge (v1);

         \end{tikzpicture}
         \caption{A graph $ G $ with Hamiltonian cycle $ u,v,w,x,u $.}
         \label{fig:diag-pos-to-neg_cex}
    \end{subfigure}
    \hfill
    \begin{subfigure}[t]{0.3\textwidth}
        \centering
        \begin{tikzpicture}[scale=0.65]
         \node[targetnode-old-big] (11) at (0,1) {};
         \node[targetnode-old-big] (12) at (1,1) {};
         \node[targetnode-old-big] (13) at (2,1) {};
         \node[targetnode-old-big] (21) at (0,0) {};
         \node[targetnode-old-big] (22) at (1,0) {};
         \node[targetnode-old-big] (23) at (2,0) {};
         \node[targetnode-old-big] (31) at (4,0) {};
         \node[targetnode-old-big] (32) at (5,0) {};
         \node[targetnode-old-big] (33) at (6,0) {};
         \node[targetnode-old-big] (41) at (4,1) {};
         \node[targetnode-old-big] (42) at (5,1) {};
         \node[targetnode-old-big] (43) at (6,1) {};

         \draw[targetedge-old] (11) edge (12);
         \draw[targetedge-old] (21) edge (22);
         \draw[targetedge-old] (31) edge (32);
         \draw[targetedge-old] (41) edge (42);

         \draw[targetedge-new] (11) edge (23);
         \draw[targetedge-new] (13) edge (22);
         \draw[targetedge-new] (13) edge (23);

         \draw[targetedge-new] (21) edge[bend right] (33);
         \draw[targetedge-new] (23) edge[bend right] (32);
         \draw[targetedge-new] (23) edge[bend right] (33);

         \draw[targetedge-new] (31) edge (43);
         \draw[targetedge-new] (33) edge (42);
         \draw[targetedge-new] (33) edge (43);

         \draw[targetedge-new] (41) edge[bend right] (13);
         \draw[targetedge-new] (43) edge[bend right] (12);
         \draw[targetedge-new] (43) edge[bend right] (13);
         \end{tikzpicture}\\
         \caption{Graph $ G^\star $ without a Hamiltonian cycle resulting from applying the gadget to the example graph.}
         \label{fig:diag-pos-to-neg_applied}
    \end{subfigure}
    \caption{An example of a \postoneg gadget in which the node graphs are not paths and the edge $ (1_\triangleright,3_\triangleleft) $ is present.}
    \label{fig:diag-pos-to-neg}
\end{figure}

\FloatBarrier

\subsection{Proofs of Section \ref{section:algorithm-templates}: Decidable cases for classes of (fixed) algorithmic problems}

Let us start with a remark. If a class $\calR$ of reductions is finite, then \redgen[\calR][P][P^\star] is decidable for all algorithmic problems \problem and~\problembis. Thus, in particular, decidability follows for the class of quantifier-free first-order interpretations with dimension bounded by some $r > 0$, as for a fixed schema there are only finitely many different quantifier-free formulas, up to equivalence, and equivalence of quantifier-free formulas is decidable. 
 
Unfortunately, this reasoning fails for bounded-arity cookbook reductions, as unboundedly many new elements can be created for each type. So, there are infinitely many cookbook reductions with arity bounded by some $r > 0$.

\subsubsection{\texorpdfstring{Interpreting recipes with $\FO$-interpretations}{Interpreting recipes with FO-interpretations}}

  \lemmaRecipeFO*
  \begin{proofsketch}
    The interpretation $\interrecipe$ has dimension $r+2$. Sets of elements are encoded in the same way as in the proof of Theorem~\ref{th:cookbook_is_qf}, on the first $r+1$ variables. From there,
    \begin{itemize}
    \item the universe formula $\formule_U(x_1,\ldots,x_{r+1},y)$ is the disjunction, over every $\isotype\in\typesr$, of formulas stating that the set represented by $(x_1,\ldots,x_{r+1})$ has type $\isotype$ in $\struct$ (and in particular, that no $x_i$ belongs to any $\colt[\isotype']$), and that $\colt(y)$ holds,
    \item the equivalence formula $\formule_\sim(x_1,\ldots,x_{r+1},y;x'_1,\ldots,x'_{r+1},y')$ is the disjunction, over every $\isotype\in\typesr$ and every respective subsets $S, S'$ of the sets represented by $(x_1,\ldots,x_{r+1})$ and $(x'_1,\ldots,x'_{r+1})$, of the formulas stating that $S=S'$ have type $\isotype$ in $\struct$, and that there exists some element $z$ such that $\colt(z)$, for which $y=z$ or $y\inh z$, and $y'=z$ or $y'\inh z$, and
    \item for every relation $\relsymbbis\in\vocabbis$ of arity $k$, the formula $\formule_{\relsymbbis}(\bar x^{1},y^1;\ldots;\bar x^k,y^k)$ is the disjunction, over every $\isotype\in\typesr$, of formulas stating that $\bar x^1,\ldots,\bar x^k$ represent the same set, which has type $\isotype$ in $\struct$, and that $\redS(\isotype)\models \relsymbbis(y^1,\ldots,y^k)$.
  \end{itemize}

  We leave it to the reader to check that for every \vocab-structure $\struct$ and every \red of arity at most $r$, the following function $\bijrecipe$ from $\interrecipe(\struct\uplus\recipe)$ to $\red(\struct)$ is well defined, and indeed an isomorphism of \vocabbis-structures:

  Each element of $\interrecipe(\struct\uplus\recipe)$ is an equivalence class for $\sim$, and, by definition of $\formule_\sim$, contains an element $(x_1,\ldots,x_{r+1},y)$ such that the set $S$ represented by $(x_1,\ldots,x_{r+1})$ is included in every set represented by $(x'_1,\ldots,x'_{r+1})$ where $(x'_1,\ldots,x'_{r+1},y')$ in the class.
  Then $\bijrecipe$ maps this class to the element $(S,y)$ of $\red(\struct)$.
\end{proofsketch}

\subsubsection{Proof that \FO-interpretations do not preserve MSO-similarity}
Here is an example that \FO-interpretations (even quantifier-free ones) 
of dimension greater than one do not necessarily preserve \MSO-similarity.

\newcommand{\structn}{\struct_n}
\newcommand{\structbisn}{\structbis_n}

\begin{proposition}
  Let $\vocab:=\{\succrel^{(2)},\colone^{(1)},\coltwo^{(1)}\}$.

  There exists a quantifier-free \FO-interpretation $\inter$ of $\vocab$-structures, an integer $k$ and families of $\vocab$-structures $(\structn)_{n\in\N},(\structbisn)_{n\in\N}$ such that:

  \begin{itemize}
  \item $\forall n\in\N,\quad \structn\msoeq{n}\structbisn\,,$
  \item $\forall n\in\N,\quad \inter(\structn)\not\msoeq{k}\inter(\structbisn)\,.$
  \end{itemize}
\end{proposition}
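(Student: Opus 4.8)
The plan is to construct the interpretation $\inter$ so that it \emph{squares} a linear order in a way that turns a local, $\MSO$-indistinguishable difference in the inputs into a global, $\MSO$-detectable difference in the outputs. Concretely, I would let the $\vocab$-structures $\structn$ and $\structbisn$ both be (essentially) long directed paths via $\succrel$, equipped with two unary marks $\colone,\coltwo$; these will differ only in some local detail --- e.g.\ the precise distance between a $\colone$-marked endpoint region and a $\coltwo$-marked region --- chosen large enough (as a function of $n$) that Duplicator wins the $n$-round $\MSO$ Ehrenfeucht--Fra\"{\i}ss\'e game on $(\structn,\structbisn)$. The first $\MSO$-similarity bullet, $\structn\msoeq{n}\structbisn$, is then a routine pumping/$\EF$-game argument on paths: a difference that is ``far away'' on a path cannot be observed with few rounds of set and point quantification, since Duplicator can mimic all colour classes and successor relations locally and play the far region identically. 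I would keep the two families as simple as possible --- likely $\structn$ a path of the form ``$\colone$-block, then $n\cdot c$ unmarked vertices, then $\coltwo$-block'' and $\structbisn$ the same with $(n\cdot c)+1$ unmarked vertices, or a parity-type difference --- whichever makes the output computation cleanest.

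Next I would design the quantifier-free $2$-dimensional interpretation $\inter$ so that $\inter(\struct)$ has universe pairs $(a,b)$ of vertices of $\struct$, with $\succrel^{\inter(\struct)}$ relating $(a,b)$ to $(a',b')$ exactly when it moves through the ``product path'': advance $b$ along $\succrel^\struct$, and when $b$ wraps around (detected because $b$ is the $\coltwo$-endpoint, say, and $b'$ is the $\coltwo$-start), advance $a$ instead. All of these conditions --- is $x$ a $\succrel$-edge, does $x$ carry $\colone$ or $\coltwo$ --- are atomic in $\vocab$, so the defining formulas are genuinely quantifier-free. The effect is that $\inter(\structn)$ is (up to the marked bookkeeping) a path whose \emph{length is roughly the square} of the length of $\structn$. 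Crucially, if $\structn$ and $\structbisn$ differ in length by exactly one vertex, then $\inter(\structn)$ and $\inter(\structbisn)$ differ in length by an amount on the order of the length of $\structn$, i.e.\ unboundedly; and if I instead arrange a parity difference, squaring can be used to turn ``even vs.\ odd length difference of $1$'' into a reachability/connectivity difference. Either way, one fixes a single $\MSO$-sentence $\varphi$ --- for instance ``the $\succrel$-graph has an even number of vertices'', expressible in $\MSO$ with some quantifier rank $k$, or a connectivity/colour-reachability statement of fixed rank --- that separates $\inter(\structn)$ from $\inter(\structbisn)$ for all (or cofinally many, which suffices after reindexing) $n$. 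That gives the second bullet $\inter(\structn)\not\msoeq{k}\inter(\structbisn)$ with $k$ the rank of $\varphi$.

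The steps, in order, are: (1) fix the two families $(\structn),(\structbisn)$ of path-like $\vocab$-structures differing only in a parameter that grows with $n$; (2) prove $\structn\msoeq{n}\structbisn$ via an $\MSO$ $\EF$-game / composition argument on paths; (3) write down the quantifier-free $2$-dimensional $\inter$ implementing the ``lexicographic product along $\succrel$'' so that $\inter(\struct)$ is a path of quadratic length; (4) exhibit a fixed-rank $\MSO$-sentence $\varphi$ (parity of the number of elements, or connectivity of a definable subgraph) that holds in $\inter(\structn)$ iff it fails in $\inter(\structbisn)$, and set $k$ to be its quantifier rank; (5) conclude. The main obstacle is step (3) together with the \emph{quantifier-free} requirement: the wrap-around behaviour of the product path must be detectable by atomic formulas only, which forces me to bake enough markers into $\structn$ (e.g.\ distinguished start/end vertices of each block, marked by $\colone,\coltwo$) that ``$b$ is at the end of its block'' and ``$b'$ is at the start'' are atomic; simultaneously these extra markers must not help Spoiler in step (2), so the design of the input families and of $\inter$ has to be balanced against each other. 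A secondary subtlety is ensuring the separating sentence in step (4) really has \emph{bounded} rank independent of $n$ --- parity via the standard $\MSO$ ``alternating partition'' trick, or connectivity, both achieve this, so I would pick whichever matches the concrete $\inter$ I end up writing.
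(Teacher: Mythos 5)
There is a genuine gap, and it lies in your choice of input families — which then undermines the whole strategy. Paths (word structures) whose lengths, or whose gap between the $\colone$-block and the $\coltwo$-block, differ by exactly one, or which differ in parity, are \emph{not} $\MSO$-similar up to rank $n$ for any $n$ beyond a small constant: over words, $\MSO$ captures all regular languages, and modular counting (in particular parity of the number of vertices in a definable segment) is expressible with a fixed, small quantifier rank. So Spoiler wins the $\MSO$ game on your $(\structn,\structbisn)$ with a constant number of rounds, and the first bullet already fails. Worse, the idea cannot be repaired by tuning the lengths: if you instead pick lengths that genuinely are $\MSO_n$-similar, they must be congruent modulo every period detectable at rank $n$ (and both large), and these congruences are preserved when the interpretation squares the length; since $\MSO$ at a \emph{fixed} rank $k$ can only see the length of a long path up to a threshold and modulo bounded periods, the images $\inter(\structn)$ and $\inter(\structbisn)$ stay $\MSO_k$-similar. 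The ``unboundedly large length difference'' in the outputs is therefore a red herring — it is exactly the kind of difference fixed-rank $\MSO$ cannot detect, so the second bullet would fail too.

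What the statement requires is an input difference that is invisible to $\MSO$ on words at \emph{every} rank (hence non-regular), yet becomes visible at a \emph{fixed} rank after the two-dimensional interpretation. This is what the paper's proof does: it takes words in $a^*b^*$ and exploits the non-regular property ``the number of $a$'s equals the number of $b$'s'' (so for each $n$ there are $n$-similar words $a^pb^p$ and $a^pb^{p'}$ with $p\neq p'$, by Büchi's theorem); the quantifier-free two-dimensional interpretation with universe formula $\top$ produces the full square grid, colored by whether the two coordinates carry the same or different letters, and a single fixed-rank $\MSO$ sentence defines the grid diagonal (as the least set containing the top-left corner and closed under one step right plus one step down) and checks that it avoids $\coltwo$ — which holds iff the two counts are equal. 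Your product/grid instinct for $\inter$ is in the right spirit, but the separating feature must be a comparison of two counts (a non-regular relation), not a single length or parity, for both bullets to be satisfiable simultaneously.
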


\begin{proof}

  Consider the non-regular language $\lang \df \{a^pb^p:p\in\N\}$ and the regular language $L' \df L(a^*b^*)$. Recall that a language is regular iff it is definable by an \MSO-formula which may use unary relations corresponding to the letter of the alphabet and a binary predicate $\succrel$ encoding successive positions~\cite{buchi1960weak}. Therefore, for every $n\in\N$, there are $\sigma$-structures $\calA_n \in L$ and $\calA'_n \in L(a^*b^*)\setminus\lang$ such that $\structn\msoeq{n}\structbisn$. Here and in the following we identify $\sigma$-structures and words over $\Sigma = \{a, b\}$ represented by them.
  
  Now consider the two-dimensional \FO-interpretation $\inter$ defined as follows:
  \begin{itemize}
  \item its universe formula is $\formule_\text{U}(x,y)\df\top\,,$
  \item the formula defining $\succrel$ is \[\formule_\succrel(x,y;x',y')\df(x=x'\land S(y,y'))\lor(S(x',x)\land y=y')\,,\]
  \item the formula defining $\colone$ is $\formule_{\colone}(x,y)\df(\colone(x)\land\coltwo(y))\lor(\coltwo(x)\land\colone(y))\,,$
  \item the formula defining $\coltwo$ is $\formule_{\coltwo}(x,y)\df(\colone(x)\land\colone(y))\lor(\coltwo(x)\land\coltwo(y))\,.$
  \end{itemize}

    \begin{figure}[ht!]
    \centering
    \begin{tabular}{c|cccccc}
      &a&$\cdots$& a&b&$\cdots$&b\\
      \hline
      b&a&$\cdots$& a&b&$\cdots$&b\\
      $\vdots$&$\vdots$&$\ddots$&$\vdots$&$\vdots$&$\ddots$&$\vdots$\\
      b&a&$\cdots$& a&b&$\cdots$&b\\
      a&b&$\cdots$& b&a&$\cdots$&a\\
      $\vdots$&$\vdots$&$\ddots$&$\vdots$&$\vdots$&$\ddots$&$\vdots$\\
      a&b&$\cdots$& b&a&$\cdots$&a\\
    \end{tabular}
    \caption{Illustration of $\inter(\struct)$ (in the bottom-right corner of the tabular), for $\struct\in a^*b^*$. In $\inter(\struct)$, $\succrel$ links an element of the grid to its right and bottom neighbors.}
    \label{fig:inter}
  \end{figure}

  The result of $\inter$ for a word $\calA$ of the form $a^*b^*$ is illustrated in Figure~\ref{fig:inter}. Note that $\inter(\struct)$ is an $\succrel$-square-grid, and the $\succrel$-diagonal contains an element colored in $\coltwo$ iff there is a different number of $a$'s and $b$'s in $\struct$.

  \newcommand{\diagform}{\Phi_\text{diag}}

  Let us construct an \MSO-sentence $\Phi$ stating precisely that the $\succrel$-diagonal contains only elements in $\colone$.

  First, consider the following formula $\diagform(X)$, stating, in a $\succrel$-square-grid, that $X$ contains the $\succrel$-diagonal

  \begin{tabular}{l}
    $\diagform\df\forall x\Big( \neg(\exists y \succrel(y,x))\to X(x) \Big)$\\
    $\quad\quad\quad\quad\quad\land \;\forall x \forall y \Big(X(x)\land \exists u \exists v\big(u\neq v\land\succrel(x,u)\land\succrel(x,v)\land\succrel(u,y)\land\succrel(v,y)\big) \to X(y)\Big).$
  \end{tabular}
  
  The first line states that the top-left element of the diagonal is contained in $x$. The second line states that $X$ is closed under going one step right and one down.

  Consider now the formula \[\Phi\df\exists D\Big(\ \diagform(D)\ \land\ \forall x\big(D(x)\to\colone(x)\big)\Big)\]
  Given $\struct\in L(a^*b^*)$, we have that $\inter(\struct)\models\Phi$ iff its $\succrel$-diagonal contains only $a$'s, i.e. iff $\struct\in\lang$.
  It only remains to pick $k$ as the quantifier rank of $\Phi$, and our families $(\structn)_n$ and $(\structbisn)_n$ are suitable witnesses.
\end{proof}

\subsubsection{Algorithmic templates for \MSO, beyond edge gadget reductions}\label{app:mso-fixed}

\theoremAlgorithmTemplatesMSO*

\begin{proof}
  
  Since we are only considering edge gadget reductions, we adapt the way reductions are specified as input. Instead of the full \recipe, edge gadget reductions are only specified via their edge gadget graph $\gadget_\red=\redS(\inlineEdgeType)$, where the two endpoints are marked with the unary symbol $R_1$ and $R_2$.

  Our proof approach is the same as for Theorem \ref{theorem:algorthmic-templates}. We show that whether an edge gadget reduction $\rho$ is a reduction from $P$ to $P^\star$ solely depends on the $\qdMSO[m]$-type of $\gadget_\rho$, for some $m$ large enough and depending only on $P$ and $P^\star$. As there are only finitely many such $\qdMSO[m]$-types and because the type of $\gadget_\rho$ can be determined, the statement follows.

  \newcommand{\source}{\text{Inc}^\text{source}}
  \newcommand{\target}{\text{Inc}^\text{target}}
  \newcommand{\selem}[1]{\langle #1 \rangle}

  Let $G$ be a graph. We start by assigning an arbitrary direction to each edge of $G$, and we consider the structure $G^\text{inc}$, representing the incidence graph of $G$, over the vocabulary $\{\source,\target,\text{Edge}\}$, where $\source$ and $\target$ are binary and $\text{Edge}$ is unary, defined as follows:
  \begin{itemize}
  \item its universe has one element for each node of $G$, and one element for each (oriented) edge of $G$,
  \item the unary relation $\text{Edge}$ marks all elements corresponding to edges of $G$,
  \item $\source(e,v)$ (resp. $\target(e,v)$) holds if the node $v$ is the source (resp. target) of the edge $e$ in $G$.
  \end{itemize}
  
We can express the result of replacing every edge of $G$ with $\gadget_\rho$ as a so-called generalized sum.
  
\begin{definition}[\cite{shelah1975monadic}, formulation following \cite{blumensath2008logical}]\label{def:generalsum}
Let $\calI = (I,S_1,\ldots, S_r)$ be a structure and $(\calD_i)_{i \in I}$ a sequence of structures $\calD_i = (D_i, R^i_1, \ldots, R^i_t)$ indexed by elements $i$ of $\calI$.
The \emph{generalized sum} of $(\calD_i)_{i \in I}$ is the structure \[ \sum_{i \in I} \calD_i \df (U, \sim, R'_1, \ldots, R'_t, S'_1, \ldots, S'_r)\]
with universe $U \df \{\selem{i,a} \mid i \in I, a \in D_i\}$ and relations
\begin{itemize}
 \item $\selem{i,a} \sim \selem{i',a'}$ if and only if $i = i'$
 \item $R'_j \df \{( \selem{i,a_1},\ldots, \selem{i,a_\ell} ) \mid i \in I, (a_1, \ldots, a_\ell) \in R^i_j\}$
 \item $S'_j \df \{( \selem{i_1,a_1},\ldots,\selem{i_\ell,a_\ell} ) \mid (i_1, \ldots, i_\ell) \in S_j, a_k \in D_{i_k}$ for all $k \in \{1, \ldots, \ell\} \}$.
\end{itemize}
\end{definition}

The structures $\calI$ and $\calD_i$ in this definition are also referred to as \emph{index structure} and \emph{component structures}, respectively. 
We now consider the generalized sum $\sum_{e\in G^{\text{inc}}}\gadget_\rho$ of copies of $\gadget_\rho$ indexed by $G^{\text{inc}}$:
  \begin{itemize}
  \item elements of $\sum_{e\in G^{\text{inc}}}\gadget_\rho$ are of the form $\langle a,b\rangle$ with $a\in G^{\text{inc}}$ and $b\in \gadget_\rho$,
  \item $\sum_{e\in G^{\text{inc}}}\gadget_\rho$ inherits the relation $\text{Edge}$ from $G^{\text{inc}}$ on the first coordinate, as well as $\source$ and $\target$: for $R\in\{\source,\target\}, R(\langle a,b\rangle,\langle a',b'\rangle)\text{ if and only if }R(a,a')$,
  \item $\sum_{e\in G^{\text{inc}}}\gadget_\rho$ inherits on the second coordinate relations $R_1$ and $R_2$ from $\gadget_\rho$, as well as $E$ when the first coordinate is fixed: $E(\langle a,b\rangle,\langle a',b'\rangle)\text{ if and only if }E(b,b')$ and $a=a'$.
  \end{itemize}

  There exists a $1$-dimensional \FO-interpretation of quantifier depth $1$ that yields $\rho(G)$ on $\sum_{e\in G^{\text{inc}}}\gadget_\rho$, for every $G$ and $\rho$:
  \begin{itemize}
  \item its universe consists of all $\langle a,b\rangle$ such that (i) $a$ is an edge and $b$ is any element of $\gadget_\rho$, or (ii) $a$ is an isolated node and $b$ is the element such that $R_1(b)$,
  \item the edge relation is taken from $\gadget_\rho$,
  \item the endpoints of gadgets that correspond to the same node in $\rho(G)$ are identified via the formula
  \[
  \begin{aligned}
    \varphi_\sim& (\langle x,y\rangle,\langle x',y'\rangle)\df\exists \langle x'',y''\rangle \\
    & \big[\source(\langle x,y\rangle,\langle x'',y''\rangle)\land R_1(\langle x,y\rangle) \lor \target(\langle x,y\rangle,\langle x'',y''\rangle)\land R_2(\langle x,y\rangle)\big]\\
    \land&\big[\source(\langle x',y'\rangle,\langle x'',y''\rangle)\land R_1(\langle x',y'\rangle) \lor \target(\langle x',y'\rangle,\langle x'',y''\rangle)\land R_2(\langle x',y'\rangle)\big]\,.\\
  \end{aligned}
  \]
  \end{itemize}
  
Generalized sums are \MSO-compatible, as witnessed by the following fact.
  
\begin{lemma}[{\cite{shelah1975monadic}, formulation following \cite[Theorem 3.16]{blumensath2008logical}}] \label{theorem:shelah}
From every \MSO sentence~$\varphi$, a finite sequence $\chi_0, \ldots, \chi_{s-1}$ of \MSO formulas and an \MSO formula $\psi$ can be constructed such that for every graph $G$ and every edge gadget reduction $\rho$,
\[\sum_{e\in G^{\text{inc}}}\gadget_\rho\models\varphi\quad\text{if and only if}\quad(G^{\text{inc}},B_0,\ldots,B_{s-1})\models\psi\,,\]
  where each $B_i$ is a propositional variable (i.e., a $0$-ary relation) which is true if and only if $\gadget_\rho\models\chi_i$. 
\end{lemma}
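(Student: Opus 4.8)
The plan is to obtain the statement as the special, single‑component instance of the composition method for \MSO over generalized sums. Fix the sentence $\varphi\in\MSO$ and let $q$ be its quantifier rank, counting both element and set quantifiers. First I would recall the standard facts about \MSO types: over any fixed relational vocabulary, equipped with finitely many free set variables, there are only finitely many \MSO $q$‑types (Hintikka formulas); each such type $t$ is definable by a single \MSO sentence $\delta_t$ of quantifier rank $q$; every structure‑with‑parameters realizes exactly one of them; and this finite set of types together with the realization relation is computable. We will use $q$‑types over two vocabularies: that of $\gadget_\rho$ (the edge relation together with the two endpoint marks $R_1,R_2$) and that of $G^{\mathrm{inc}}$.

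The core observation is the \emph{local decomposition} of valuations on $\mathfrak S\df\sum_{e\in G^{\mathrm{inc}}}\gadget_\rho$: the relation $\sim$ partitions $\mathfrak S$ into blocks $B_a=\{\langle a,b\rangle : b\in\gadget_\rho\}$, one per $a\in G^{\mathrm{inc}}$, and each block (with the $\sim$‑induced relations) is a copy of $\gadget_\rho$. Hence any valuation $X_1,\dots,X_m$ of set variables on $\mathfrak S$ restricts, on each block, to a valuation on that copy of $\gadget_\rho$, which realizes some $q$‑type $t_a$ drawn from the finite palette of $q$‑types of $\gadget_\rho$‑structures with $m$ predicates; the valuation thus induces a finite‑alphabet colouring $a\mapsto t_a$ of $G^{\mathrm{inc}}$, and element variables are decomposed similarly into a block (recorded in the index structure) plus a local element. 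One then proves, by induction on the number of remaining rounds of the \MSO Ehrenfeucht–Fra\"{\i}ss\'e game, that the $q$‑type of $(\mathfrak S,\bar X,\bar x)$ depends only on the $q$‑type of $G^{\mathrm{inc}}$ enriched by the colouring $a\mapsto t_a$ and the block‑membership predicates of the $x_j$ (plus the local $q$‑types of the blocks containing element variables): Duplicator's strategy on two such sums runs the index game and the per‑block local games simultaneously, splitting a Spoiler move on a set into its local pieces (or an element move into a block plus a local element), answering in all blocks at once via the local strategies for blocks of equal type, with the matching of blocks supplied by the index strategy. Rather than spell this out, I would invoke the composition theorem for \MSO over generalized sums directly~\cite{shelah1975monadic,blumensath2008logical}, of which our statement is the instance where every component structure equals $\gadget_\rho$ and the index structure is $G^{\mathrm{inc}}$.

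It then remains to specialize to \emph{closed} valuations, i.e.\ $m=0$. Now there is only a trivial colouring, since every block is the \emph{same} structure $\gadget_\rho$; enumerate the finite set of $q$‑types of $\gadget_\rho$‑structures as $\chi_0,\dots,\chi_{s-1}$ and let $B_i$ be the $0$‑ary relation ``$\gadget_\rho\models\chi_i$'', so that exactly one $B_i$ holds. By the composition result, whether $\mathfrak S\models\varphi$ is a Boolean function of which $B_i$ holds together with the $q$‑type of $G^{\mathrm{inc}}$; encoding that function as the \MSO sentence $\psi$ over $(G^{\mathrm{inc}},B_0,\dots,B_{s-1})$ given by the disjunction, over the finitely many pairs $(i,\theta)$ — $\theta$ a $q$‑type of $G^{\mathrm{inc}}$ — for which $\mathfrak S\models\varphi$ in that case, of $B_i\wedge\delta_\theta$, yields exactly the claimed equivalence; all objects are computable since the type sets are finite and decidable, and the construction $\varphi\mapsto(\chi_0,\dots,\chi_{s-1},\psi)$ is uniform in $G$ and $\rho$. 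I expect the main obstacle to be the inductive composition of Duplicator strategies underlying the previous paragraph: the per‑block local games must be answered simultaneously and the colouring passed to the index game must carry enough information to survive all remaining rounds, which requires a type‑refinement bookkeeping argument. Since this is precisely the content of the cited composition theorem for \MSO, the cleanest route is to take that theorem as given and merely verify that our generalized sum $\sum_{e\in G^{\mathrm{inc}}}\gadget_\rho$ and the associated \FO‑interpretation fit its hypotheses.
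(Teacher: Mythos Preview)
Your proposal is correct in substance, but note that the paper does not actually prove this lemma: it is stated as a citation of Shelah's composition theorem~\cite{shelah1975monadic}, in the formulation of~\cite{blumensath2008logical}, and used as a black box. Your approach of recognising the statement as the single-component specialisation of the composition theorem for \MSO over generalized sums is exactly the right reduction, and your additional sketch of the underlying Ehrenfeucht--Fra\"{\i}ss\'e argument (splitting valuations blockwise, running the index game and local games in parallel) is the standard proof of that cited theorem, so you are going strictly beyond what the paper does.
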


  Let us denote by $\nu(\varphi)$ the maximum among the quantifier ranks of formulas $\psi, \chi_0,\ldots,\chi_{s-1}$ that Lemma~\ref{theorem:shelah} yields for formula $\varphi$. Let $k$ be the quantifier rank of an \MSO formula describing $\problembis$. Consider the formulas $\varphi_0,\ldots,\varphi_{p-1}$, each characterizing one of the finitely many $\qdMSO[k+1]$-types of structures over the schema of the generalized sums, and let $m$ be the maximum of the $\nu(\varphi_i)$, for $0\leq i<p$.

  Assume that $\gadget_\rho\msoeq{m}\gadget_{\rho'}$. By definition of $m$, $\gadget_\rho$ and $\gadget_{\rho'}$ agree on all the formulas $\chi_j$ given by Lemma~\ref{theorem:shelah} for $\varphi_i$, for all $0\leq i<p$. Thus, for every graph $G$, $\sum_{e\in G^{\text{inc}}}\gadget_\rho\models\varphi_i$ if and only if $\sum_{e\in G^{\text{inc}}}\gadget_{\rho'}\models\varphi_i$ for every $i$, meaning that $\sum_{e\in G^{\text{inc}}}\gadget_\rho$ and $\sum_{e\in G^{\text{inc}}}\gadget_{\rho'}$ have the same $\qdMSO[k+1]$-type. It follows that $\rho(G)$ and $\rho'(G)$ have the same $\qdMSO[k]$-type and that $\rho(G)\in\problembis$ if and only if $\rho'(G)\in\problembis$. In other words, whether \red is a reduction from \problem to \problembis only depends on the $\qdMSO[m]$-type of $\gadget_\rho$. Given that only finitely many such types exist, one can compute the $\qdMSO[m]$-type of $\gadget_\rho$ and match it against the list of types of valid reductions from \problem to \problembis.
\end{proof}

As indicated in the main part, the proof idea can be generalized beyond edge gadget reductions. The class of \indepgadg reductions is obtained from the class of cookbook reductions by forbidding inheritance, except from types of arity 1. Formally, a cookbook reduction $\red$ is a \emph{\indepgadg reduction} if
\begin{itemize}
\item $\red$ has no global elements (i.e. $\fresh{\isotype}=0$ for the type $\isotype$ of arity $0$),
\item for every type $\isotype$ of arity $1$, $\fresh{\isotype}=1$,
\item for every type $\isotype$ such that $\fresh{\isotype}>0$, and for every subtype $\isotype'$ of $\isotype$ of arity other than $1$, $\fresh{\isotype'}=0$.
\end{itemize}
Note that edge gadget reductions are a particular case of \indepgadg reductions.

\begin{theorem}
  Let $\calR$ be the class of \indepgadg reductions of arity at most $r$, for some $r>0$, let $P$ be an arbitrary algorithmic problem and $P^\star$ an algorithmic problem definable in monadic second-order logic. Then \redgen[\calR][\problem][\problembis] is decidable.
\end{theorem}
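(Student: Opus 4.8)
The plan is to mimic the proof of Theorem~\ref{th:fSO_fMSO}, replacing the edge-gadget-specific machinery with a generalized-sum decomposition tailored to \indepgadg reductions of arity at most $r$. As before, the goal is to show that whether a \indepgadg reduction $\red$ is a reduction from $\problem$ to $\problembis$ depends only on the $\qdMSO[m]$-type of the recipe $\recipe$ of $\red$, for some $m$ depending only on $r$, $\problem$, and $\problembis$; since there are finitely many such types and the type of a given $\recipe$ is computable, decidability follows. Here one must be slightly careful about how the input is specified: since inheritance is only allowed from types of arity $1$, the recipe can be presented compactly, and its $\qdMSO[m]$-type is still effectively computable.

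The key construction is to write $\red(\struct)$ as the result of an \FO-interpretation of bounded quantifier depth applied to a generalized sum. First I would fix, for a $\vocab$-structure $\struct$ with a chosen linear order, an ``index structure'' $\struct^{\mathrm{inc}}$ whose universe has one element for each non-empty subset $A\subseteq\univ{\struct}$ of size at most $r$ (this is the analogue of the incidence structure $G^{\mathrm{inc}}$; it is \FO-interpretable from $\struct$ exactly as in the proof of Lemma~\ref{lemma:recipes-FO}, using $(r+1)$-tuples to encode sets), together with unary relations recording the isomorphism type $\subtype_\struct(A)$, a unary relation marking the singletons, and binary relations recording, for a set $A$ and a singleton $\{v\}\subseteq A$, the pair $(A,\{v\})$. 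To each index element $A$ of type $\isotype$ I attach the component structure $\redS(\isotype)$, with its two distinguished ``interface'' parts: the fresh elements $([k],1),\dots,([k],\fresh{\isotype})$ and, for each element $v\in A$, the image under inheritance of the single fresh element of $\redS(\isotypeVertex)$. The generalized sum $\sum_{A\in\struct^{\mathrm{inc}}}\redS(\subtype_\struct(A))$ then contains a disjoint copy of every such $\redS(\isotype)$, plus the relations $\source,\target$-style links pulled up from $\struct^{\mathrm{inc}}$. Because \indepgadg reductions forbid inheritance except from arity-$1$ types, the elements of $\red(\struct)$ are obtained from this sum by a $1$-dimensional, quantifier-depth-bounded \FO-interpretation: the universe keeps fresh elements of every copy plus one copy of each vertex-element, and the equivalence formula $\varphi_\sim$ identifies the vertex-element of $\redS(\subtype_\struct(A))$ inherited from $v$ with the fresh element of $\redS(\isotypeVertex)$ for $\{v\}$, using the binary link in $\struct^{\mathrm{inc}}$ as a pointer — exactly the single-element analogue of the $\varphi_\sim$ in the proof of Theorem~\ref{th:fSO_fMSO}, but now the ``interface'' of each gadget is a constant-size set of named points, so constantly many quantifiers suffice.

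With this decomposition in hand, I would invoke Shelah's compositionality theorem for \MSO over generalized sums (Lemma~\ref{theorem:shelah}, in the form for generalized sums of Definition~\ref{def:generalsum}): for each \MSO sentence $\varphi$ over the schema of the sums there are \MSO formulas $\chi_0,\dots,\chi_{s-1}$ (evaluated in the component structures $\redS(\isotype)$, i.e.\ in the recipe) and a formula $\psi$ over the index structure enriched with propositional variables $B_{i,\isotype}$ recording whether $\redS(\isotype)\models\chi_i$, such that the sum satisfies $\varphi$ iff the enriched index structure satisfies $\psi$. Let $k$ be the quantifier rank of an \MSO sentence defining $\problembis$; pull it back through the bounded-depth \FO-interpretation to a sentence of rank $k'$ over the sum schema; let $\varphi_0,\dots,\varphi_{p-1}$ enumerate the finitely many $\qdMSO[k']$-types over that schema, and let $m$ be the maximum quantifier rank among all the $\chi_i$'s that Shelah's theorem produces for these $\varphi_j$'s. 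If two \indepgadg reductions $\red,\redbis$ of arity $\le r$ have $\recipe\msoeq{m}\recipebis$, then the components agree on all the $\chi_i$, so the two enriched index structures coincide for every $\struct$, hence $\sum_A\redS(\subtype_\struct(A))\msoeq{k'}\sum_A\redSbis(\subtype_\struct(A))$, hence $\red(\struct)\msoeq{k}\redbis(\struct)$, hence $\red(\struct)\in\problembis\iff\redbis(\struct)\in\problembis$ for all $\struct$. Thus correctness depends only on the $\qdMSO[m]$-type of $\recipe$; compute that type and compare against the finite list of types of valid reductions.

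The main obstacle is the bookkeeping needed to present $\red(\struct)$ honestly as a generalized sum whose \emph{index} structure is itself \FO-interpretable from $\struct$ and whose component structures are exactly the $\redS(\isotype)$: one has to verify that the \indepgadg restriction (no global elements, $\fresh{\isotypeVertex}=1$, no inheritance from types of arity $\neq 1$) is precisely what makes the gluing expressible by a \emph{bounded-quantifier-depth} \FO-interpretation — with inheritance from larger subtypes allowed, the ``interface'' between a gadget and a sub-gadget would have unbounded size and $\varphi_\sim$ could not be written with boundedly many quantifiers, which is exactly why the \MSO-preservation argument would break (cf.\ the counterexample to \FO-interpretations preserving \MSO-similarity in Appendix~\ref{app:mso-fixed}). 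A secondary point to get right is that the index structure, though of size polynomial in $|\struct|$, must carry enough unary/binary predicates (isomorphism types of subsets, singleton markers, containment links) so that the interpretation producing $\red(\struct)$ is independent of $\red$ and uses only a fixed finite schema determined by $r$, $\vocab$, and $\vocabbis$; this parallels the role of the $\colt$ and $\inh$ relations in the recipe and is routine but must be spelled out.
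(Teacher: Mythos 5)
Your proposal is essentially the paper's own argument: the paper also writes $\red(\struct)$ as a one-dimensional, depth-one \FO-interpretation of a generalized sum indexed by a structure whose elements are the subsets of $\univ{\struct}$ of size at most $r$ (annotated with their types), applies Shelah's composition lemma, and concludes that validity depends only on a bounded $\qdMSO[m]$-type of the recipe $\recipe$, of which there are finitely many. Two points of divergence are worth flagging. First, the bookkeeping you defer as ``routine'' is exactly where care is needed: an index structure carrying only type predicates, singleton markers and \emph{containment} links is not enough to define $\varphi_\sim$, because for an asymmetric type $\isotype$ (say a directed edge) the interpretation must know \emph{which} element of $A$ plays \emph{which} role of $\isotype$, so that the inherited copy of the vertex gadget inside $\redS(\isotype)$ is merged with the fresh vertex element of the correct singleton; with containment alone the gluing is ambiguous and the interpreted structure need not be isomorphic to $\red(\struct)$. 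The paper resolves this by working with \emph{ordered} types (choosing an ordered representative tuple for each set), marking on the recipe side, via unary predicates $R_1,\dots,R_r$, which element of $\redS(\isotype)$ is inherited from the $i$-th coordinate, and recording on the index side, via binary relations $P_1,\dots,P_r$, which singleton is the $i$-th coordinate of a given tuple; your schema needs this positional information added. Second, you attach the index-dependent component $\redS(\subtype_\struct(A))$ to each index element, which requires the version of the composition theorem with unary set predicates on the index (your $B_{i,\isotype}$ dispatch does make this work, since the types are recorded on the index), whereas the paper attaches the \emph{same} component, namely the whole recipe $\recipe$, to every index element and lets the universe formula of the interpretation select the relevant part, so that the lemma as stated (with $0$-ary $B_i$) applies verbatim. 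Neither issue changes the architecture of the proof, and your diagnosis that the \indepgadg restriction is precisely what keeps the interface of each gadget of fixed, $\rho$-independent size (hence the interpretation of bounded depth over a fixed schema) matches the paper's motivation.
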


\begin{proof}
  Let $k$ be the depth of an \MSO-sentence defining \problembis.
  We prove the following: there exists $m\in\N$ (depending on \vocab, \vocabbis, $r$ and $k$) such that whether a \indepgadg reduction $\red$ is a valid instance of \redgen[\calR][\problem][\problembis] only depends on the $\qdMSO[m]$-type of $\recipe$.

  In order to break symmetries, we consider ordered types here: we add $r$ constants $c_1,\ldots,c_r$ to the vocabulary, which are used to order the elements of a set. We will only consider the types of $r$ elements or less. We pick an arbitrary ordered representative $\dot \isotype$ for every $\isotype\in\typesr$, and we denote by \ordtypesr the set of all these representatives.

  Recall that a cookbook reduction $\rho$ is specified as its recipe structure \recipe. We slightly modify the previous definition of recipes to take into account ordered types instead of types: the recipe \recipe of an \indepgadg reduction $\rho$ is a structure over the schema $\vocabbis\cup\{R_1,\ldots,R_r\}\cup\{C_{\dot \isotype} \mid \dot \isotype\in\ordtypesr\}$, where the $R_i$ and the $C_{\dot \isotype}$ are unary. 

  The restriction of \recipe to the schema $\vocabbis\cup\{C_{\dot \isotype} \mid \dot \isotype\in\ordtypesr\}$ is the disjoint union $\biguplus_{\isotype\in\typesr}\redS(\isotype)$, where each $C_{\dot \isotype}$ is interpreted as the universe of $\redS(\isotype)$.   
  The definition of \indepgadg ensures that we can do the following: for every $\dot \isotype\in\ordtypesr$ of arity $l$, we add one element of $\redS(\isotype)$ to each $R_i$, for $1\leq i\leq l$, namely the element which is inherited by the $i$-th element of $\dot \isotype$.

 Let $\struct$ be a $\vocab$-structure $\struct$. We consider the structure $\extstruct$ over the schema $\{P_1,\ldots,P_r\}\cup\{C'_{\dot \isotype} \mid \dot \isotype\in\ordtypesr\}$ (where the $P_i$ are binary relation symbols, and the $C'_{\dot \isotype}$ are unary) which collects the sets of size at most $r$ of elements of $\struct$ together with their type in $\struct$: for every set $\{a_1,\ldots,a_l\}\subseteq\structdom$, with $l\leq r$, add to the universe of $\extstruct$ one arbitrary tuple $(a_{\xi(1)},\ldots,a_{\xi(n)})$ of ordered type $\dot \isotype$. For any $1\leq i\leq r$, $(x,y)$ belongs to the interpretation of $P_i$ if $x$ is a $1$-tuple, whose element is the $i$-th element of tuple $y$.
 Finally, $C'_{\dot \isotype}$ is interpreted as the set of tuples $(a_1,\ldots,a_l)$ such that $\mathfrak{tp}_\struct(a_1,\ldots,a_l)=\dot \isotype$.

 We consider the generalized sum $\sum_{e\in\extstruct}\recipe$ of copies of \recipe, where \extstruct plays the role of the index structure. For convenience, and since the schemas of \extstruct and \recipe are disjoint, we will consider $\sum_{e\in\extstruct}\recipe$ as a structure on the union of both schemas. Recall that
 \begin{itemize}
 \item elements of $\sum_{e\in\extstruct}\recipe$ are of the form $\langle a,b\rangle$ with $a\in\extstruct$ and $b\in\recipe$,
 \item $\sum_{e\in\extstruct}\recipe$ inherits relations from \extstruct on the first coordinate: for every relation $R$ on the schema of \extstruct, \[R(\langle a_1,b_1\rangle,\ldots,\langle a_l,b_l\rangle)\text{ if and only if }R(a_1,\ldots,a_l)\]
 \item $\sum_{e\in\extstruct}\recipe$ inherits relations from \extstruct on the second coordinate, when the first is fixed: for every relation $R$ on the schema of \recipe, \[R(\langle a_1,b_1\rangle,\ldots,\langle a_l,b_l\rangle)\text{ if and only if }R(b_1,\ldots,b_l)\land a_1=\cdots=a_l\,.\]
 \end{itemize}
 
  Let us explain how $\red(\struct)$ can be interpreted in $\sum_{e\in\extstruct}\recipe$, via an \FO-interpretation of dimension $1$ and quantifier depth $1$, depending only on schemas $\vocab,\vocabbis$ and $r$: 
  \begin{itemize}
  \item its universe formula is $\formule_U(\langle x,y\rangle)\df\bigvee_{\dot \isotype\in\ordtypesr}C'_{\dot \isotype}(x)\land C_{\dot \isotype}(y)$,
  \item its equivalence formula is
    \[\formule_\sim(\langle x,y\rangle,\langle x',y'\rangle):=\exists\langle x'',y''\rangle, \bigvee_{1\leq i,j\leq r}P_i(x'',x)\land P_j(x'',x')\land R_i(y) \land R_j(y')\]
  \item all relations are inherited without modification.
  \end{itemize}

We use again Lemma~\ref{theorem:shelah}, with $\extstruct$ replacing $G^{\text{inc}}$ and $\recipe$ replacing $\gadget_\rho$.
  Let us denote by $\nu(\varphi)$ the maximum among the quantifier ranks of formulas $\psi, \chi_0,\ldots,\chi_{s-1}$ that Lemma~\ref{theorem:shelah} yields for formula $\varphi$. Consider formulas $\varphi_0,\ldots,\varphi_{p-1}$, each of which characterizing one of the $p$ different $\qdMSO[k+1]$-types of structures over the schema $\vocabbis$, and let $m$ be the maximum of the $\nu(\varphi_i)$, for $0\leq i<p$.

  Assume that $\recipe\msoeq{m}\recipebis$: by definition of $m$, \recipe and \recipebis agree on all the sentences $\chi_j$ given by Lemma~\ref{theorem:shelah} for all the $\varphi_i$, $0\leq i<p$. Thus, for every \vocab-structure $\struct$, $\sum_{e\in\extstruct}\recipe\models\varphi_i$ if and only if $\sum_{e\in\extstruct}\recipebis\models\varphi_i$ for every $i$, meaning that $\sum_{e\in\extstruct}\recipe$ and $\sum_{e\in\extstruct}\recipebis$ have the same $\qdMSO[k+1]$-type, which in turn entails that $\red(\struct)\in\problembis$ if and only if $\redbis(\struct)\in\problembis$. In other words, whether \red is a reduction from \problem to \problembis only depends on the $\qdMSO[m]$-type of \recipe: given that only finitely many such types exist, one can compute the $\qdMSO[m]$-type of \recipe and match it against the list of types of valid reductions from \problem to \problembis.
\end{proof}

\subsection{Proofs of Section \ref{section:formulas-as-inputs}: Algorithmic problems as input: decidable cases}

\theoremProblemsAsInputsDecidability*
\begin{proof}
	Recall that whether a quantifier-free interpretation $\calI$ is a reduction from the algorithmic problem defined by $\varphi \in \calL$ to the one defined by $\varphi^\star \in \calL^\star$ is equivalent to the question whether $\form\leftrightarrow\interinv(\formbis)$ is a tautology. 

  Towards proving part (1), let $\form,\formbis\in\EFO$ and $\inter$ be a quantifier-free interpretation. Then $\psi^\star \df \interinv(\formbis)$ is in \EFO, and therefore $\form\leftrightarrow\interinv(\formbis) \equiv \big(\neg \form \wedge \neg \psi^\star\big) \vee \big(\form \wedge \psi^\star\big)$ can be translated into a $\forall^*\exists^*\FO$-formula. As tautologies can be tested for this first-order fragment (see, e.g.,\cite{BorgerGG2001}), the statement follows.

  Towards proving part (2), we distinguish whether $\problem$ is definable in $\EFO$, or not. For each $\problem$, we happen to know which case holds, since $\problem$ is not part of the input. 

 If $\problem\notin\EFO$, then there is no positive instance of \redqf[\problem][\EFO]. Indeed, if there was such a positive instance with $\problembis \in \EFO$ and $\inter \in \QF$, then $\problem$ could be described by the $\EFO$-formula $\interinv(\formbis)$, where $\formbis$ describes $\problembis$. Thus, in this case, an algorithm can answer ``no'' for every input.
 
 If $\problem \in \EFO$, let $\form$ be an $\EFO$-formula describing \problem and apply part (a)

  Towards proving part (3), fix a formula $\formbis\in\MSO$ describing a problem $\problembis$ and let $k$ be the quantifier rank $\formbis$. Now, as in the proof of Theorem~\ref{th:fSO_fMSO}, given a graph $G$ and a an edge gadget reduction $\rho$ with edge gadget $\gadget$ as input, whether $\rho(G) \models\formbis$ only depends on the $\qdMSO[k]$-type of $\gadget$. 

  \newcommand{\gset}{A_\type}

  Consider the following set of graphs for every $\qdMSO[k]$-type $\type$:
  \[\gset:=\{G \mid \rho(G) \models\formbis\text{ for any (thus, every)  reduction $\rho$ with edge gadget }\gadget\text{ of type }\type \}\,.\]
  Since $\formbis$ is fixed and there are only finitely many such types $\type$, we can have access to a function $\calF$ which maps every $\type$ to the answer to the question ``is $\gset$ \EFO-definable?'', together with a witness \EFO-sentence $\psi_\tau$ when this is the case.

  An algorithm for \redgen[\calR][\EFO][\problembis] can now work as follows given a formula $\form$ and edge gadget reduction $\rho$ with edge gadget $\gadget$ as input. It starts by computing the $\qdMSO[k]$-type $\type$ of $\gadget$. Then it looks at $\calF$ and does the following:
  \begin{enumerate}
  \item If $\calF$ says that $\gset$ is not \EFO-definable, the algorithm answers ``no''. 
  \item If $\calF$ says that $\gset$ is defined by some $\psi_\tau\in\EFO$, then the algorithm calls the algorithm of part (a) as subroutine with $\form$, $\psi_\tau$  and $\rho$  as input

  \end{enumerate}
  
  The answer is correct in case (1), because if $(\form,\gadget)$ was a yes-instance, then we would have by definition $\forall\graphe,\graphe\models\form$ iff $\gexp\models\formbis$, and $\form$ would define $\gset$. It is correct in case (2), because the algorithm for part (a) is correct.%
\end{proof}
  
\end{document}